\documentclass[journal]{IEEEtran}  
\IEEEoverridecommandlockouts
\usepackage{cite}
\usepackage{amsmath,amssymb,amsfonts}
\usepackage{algorithmic}
\usepackage{graphicx}
\usepackage{algorithm,algorithmic}
\usepackage{hyperref}
\hypersetup{hidelinks}
\usepackage{textcomp}

\newtheorem{problem}{\textbf{Problem}}
\newtheorem{proposition}{\bf{Proposition}}
\newtheorem{definition}{\bf{Definition}}
\newtheorem{lemma}{\bf{Lemma}}
\newtheorem{theorem}{\bf{Theorem}}
\newtheorem{remark}{\bf{Remark}}
\newtheorem{assumption}{\bf{Assumption}}

\newtheorem{corollary}{\bf{Corollary}}

\def\myunderbar#1{\underline{\sbox\tw@{$#1$}\dp\tw@\z@\box\tw@}}
\newenvironment{proof}
{\par\noindent\textit{Proof:}\ }
{\hfill$\blacksquare$\par}

\newcommand{\diag}{\text{diag}}

\newcommand{\vect}[1]{\boldsymbol{#1}}
\newcommand{\norm}[1]{ \lVert #1 \rVert}

\newcommand{\inN}[1]{\mathcal{N}_{#1}^{\text{in}}}
\newcommand{\outN}[1]{\mathcal{N}_{#1}^{\text{out}}}

\newcommand{\inNeightime}[1]{\mathcal{N}_{#1}^{\text{in}}(t)} 
\newcommand{\outNeightime}[1]{\mathcal{N}_{#1}^{\text{out}}(t)} 
\newcommand{\spec}[0]{\text{spec}}
\newcommand{\Outhop}[2]{\mathcal{O}_{#1}^{{#2}\text{-hop}}} 
\newcommand{\Inhop}[2]{\mathcal{I}_{#1}^{{#2}\text{-hop}}} 
\newcommand{\TNeigh}[1]{\mathcal{N}_{#1}^{\mathcal{T}}} 
\newcommand{\sat}[0]{\text{sat}}

\newcommand{\estimate}[3]{\hat{{#1}}^{{#2}}_{{#3}}}
\newcommand{\state}[3]{{{#1}}^{{#2}}_{{#3}}}
\newcommand{\error}[3]{\Tilde{{#1}}^{{#2}}_{{#3}}}
\newcommand{\statediff}[5]{{#1}^{#2}_{#3}-{#4}^{#5}_{#3}}


\usepackage{mathtools}

\def\BibTeX{{\rm B\kern-.05em{\sc i\kern-.025em b}\kern-.08em
    T\kern-.1667em\lower.7ex\hbox{E}\kern-.125emX}}
\usepackage{array}
\newcolumntype{P}[1]{>{\centering\arraybackslash}p{#1}}

\begin{document}
\title{Observer-Based Control for Multi-Agent Systems under Communication-Task Graph Mismatch and Directed Switching Topologies}
\author{Tommaso Zaccherini, Siyuan Liu, and Dimos V. Dimarogonas
\thanks{This work was supported in part by the Wallenberg AI, Autonomous Systems and
Software Program (WASP) funded by the Knut and Alice Wallenberg (KAW) Foundation, 
the ERC LEAFHOUND Project, and the Swedish Research Council (VR).
}
\thanks{Tommaso Zaccherini and Dimos V. Dimarogonas are with the Department
of Decision and Control Systems, KTH Royal Institute of Technology, Stockholm, Sweden.
	E-mail: {\tt\small \{tommasoz, dimos\}@kth.se}. Siyuan Liu is with the Department of Electrical Engineering, Control Systems Group, Eindhoven University of Technology, the Netherlands. Email: {\tt\small s.liu5@tue.nl}}
}

\maketitle
\begin{abstract}
This paper addresses decentralized estimation and control of heterogeneous multi-agent systems communicating over directed switching topologies. The objective is to satisfy cooperative tasks that require coordination among agents according to a time-invariant task graph. To address the potential mismatch between the communication and task graphs, each agent is equipped with a \mbox{$k$-hop} Prescribed Performance State Observer (\mbox{$k$-hop} PPSO) to estimate the states of agents up to $k$ communication hops away using only information exchanged with its \mbox{$1$-hop} neighbors. A novel prescribed performance funnel construction is developed to account for the switching communication topology, and a sufficient lower bound on the \mbox{dwell-time} is derived to guarantee the prescribed performance specifications under topology variations. Further analysis shows that, if the desired control objective is achieved with exact knowledge of the required nonlocal states and the resulting closed-loop system is uniformly set-ISS with respect to the state estimation error, then replacing these states with their estimates guarantees convergence to a neighborhood of the desired set. The size of this neighborhood can be made arbitrarily small by appropriately selecting the observer performance bounds. A simulation study demonstrates task satisfaction under directed switching topologies.
\end{abstract}

\begin{IEEEkeywords}
Multi-agent systems, Observer-based control, Directed Graphs, Switching Topologies, Communication-task graph mismatch, Decentralized control and estimation.
\end{IEEEkeywords}

\section{Introduction}
\label{sec:introduction}
\IEEEPARstart{T}{he} increasing use of autonomous robotic systems in demanding and \mbox{large-scale} environments has motivated a shift from centralized to distributed architectures, where multiple agents coordinate using locally available information. Compared with single-agent systems, \mbox{multi-agent} systems (MAS) exploit distributed sensing, computation, and actuation to accomplish complex tasks cooperatively, with applications including formation maintenance \cite{XIAO_Finite_time_formation}, distributed sensing \cite{ConsensusFilter_Olfati}, and autonomous exploration \cite{MultiRobot_explo}.

Distributed MAS control typically assumes that the information required by each agent is available from its communication neighbors \cite{liu2018coordinate,yang2016decentralized,PanDecentralizedFormation,amirkhani2022consensus,oh2015survey}. However, task dependencies may extend beyond the communication graph, requiring agents to access information from non-neighbors. Existing approaches to this discrepancy can broadly be categorized into task decomposition and information reconstruction. Task-decomposition methods adapt tasks to the communication structure, but may require additional optimization or apply only to \mbox{distance-based} tasks \cite{WangTaskDecomp,MariaTaskGraph,GregTakDecom}. Alternatively, information-reconstruction methods exploit distributed sensing and state estimation to recover unavailable information from spatially distributed measurements \cite{Mitra_Distributed_Observer_for_LTI,Distributed_Kalman_filtering,yang2022state}. The resulting estimates support \mbox{observer-based} controllers for distributed coordination objectives \cite{ObsBasedConsensus,FU2016231,Yanf_PPC_ObsBase_Consensus,HONG2008846,7440802,Stamouli_Formation_Control_PPC_obs_for_formation,6308694,LI2011510,XU2013334,ZHAO201322}. However, many information-reconstruction methods are developed for linear or \mbox{single-double} integrator dynamics \cite{FU2016231,Mitra_Distributed_Observer_for_LTI,XU2013334,LI2011510}, and nonlinear \mbox{observer-based} control approaches are often tailored to specific objectives such as consensus \cite{Yanf_PPC_ObsBase_Consensus,ZHAO201322}, leader-following \cite{XU2013334}, or formation control \cite{Stamouli_Formation_Control_PPC_obs_for_formation}. These limitations leave a gap in addressing general task-dependent coordination of heterogeneous nonlinear MAS, where the behaviors of multiple agents may be coupled through nonlocal information dependencies.

To accommodate nonlocal information dependencies without task decomposition, our recent work \cite{ZaccheriniLCSS,ZaccheriniRobustkhop, zaccherini2026controlmultiagentsystemsstl} introduced distributed and decentralized \mbox{$k$-hop} Prescribed Performance Observers ($k$-hop DPPOs) for nonlinear MAS. These observers enable each agent to estimate the states of \mbox{non-neighboring} agents that are two or more hops away using only $1$-hop communication, while guaranteeing prescribed bounds on the resulting state estimation errors. The above results, however, rely on a \mbox{time-invariant}, undirected communication topology, under which the communication paths used to propagate the \mbox{$k$-hop} estimates remain fixed and bidirectional. This assumption limits the applicability of the \mbox{$k$-hop} DPPOs to communication networks in which links are fixed and information exchange is symmetric. In more general MAS, communication links may instead be asymmetric and vary over time, resulting in directed and time-varying topologies.

To reconstruct nonlocal states despite topology switching and bounded disturbances, this paper develops a decentralized \mbox{$k$-hop} prescribed performance state observer (\mbox{$k$-hop} PPSO) for heterogeneous MAS over directed and switching communication graphs. As in the undirected case, the proposed observer enables each agent to estimate the states of agents located two to $k$ hops away, while keeping the estimation errors within predefined performance bounds specified at the design stage. Directed communication graphs, however, generally have nonsymmetric Laplacian matrices, which fundamentally change the structural properties of the error dynamics used to establish prescribed-performance guarantees. Moreover, topology switching changes the communication paths along which information propagates, resulting in time-varying \mbox{$k$-hop} neighborhoods. As a result, agents may join or leave a given agent's \mbox{$k$-hop} neighborhood as the communication topology switches, causing the set of estimated agents to change over time. Since the estimation vector is defined over this set of agents, its dimension changes accordingly, resembling the behavior of Open Multi-Agent Systems (OMAS). Thus, in order to define the vector of estimates consistently as agents enter and leave the estimated set, we draw inspiration from the OMAS literature \cite{Xue_Tang_Ren_Qian_2022}.

Beyond the observer design, we establish the applicability of the proposed observer to general control objectives through a \mbox{set-ISS} characterization of the \mbox{closed-loop} system. Specifically, we show that if the \mbox{closed-loop} dynamics are uniformly \mbox{set-ISS} with respect to the state estimation error, replacing nonlocal states with their estimates preserves convergence to a neighborhood of the desired set, whose size can be made arbitrarily small through an appropriate choice of the observer performance bounds.

To the best of our knowledge, this work presents the first decentralized  \mbox{$k$-hop} Prescribed Performance State Observer and associated \mbox{observer-based} control framework for heterogeneous MAS over directed switching communication topologies, with applicability to general control objectives. The contribution of the paper is summarized as follows:
\begin{itemize}
    \item Compared with the observer-based approaches in \cite{Yanf_PPC_ObsBase_Consensus,ZHAO201322,XU2013334}, which target specific coordination objectives, the proposed observer is independent of the particular task. It also explicitly prescribes the transient performance of the state estimation error, yielding a priori bound on the worst-case state estimation error at every time instant. This generality is particularly relevant for applications in which task satisfaction depends on guaranteed bounds on the estimation uncertainty throughout the transient, as in \cite{zaccherini2026controlmultiagentsystemsstl}.

    \item Unlike \cite{Stamouli_Formation_Control_PPC_obs_for_formation}, which considers formation control based on a prescribed-performance observer for single-integrator MAS communicating over an undirected time-invariant graph, the proposed \mbox{$k$-hop} PPSO is developed for heterogeneous MAS under directed switching communication topologies. Moreover, rather than reconstructing aggregate formation information through a dynamic average consensus scheme, the proposed \mbox{$k$-hop} PPSO reconstructs the states of nonlocal agents required by the task. The choice of $k$ determines the maximum \mbox{communication-hop} distance over which states can be reconstructed, allowing the observer to support task dependencies beyond each agent's immediate communication neighborhood.

    \item A novel prescribed-performance funnel construction is developed to account for topology switching in the \mbox{$k$-hop} state observer. Unlike existing PPC approaches for switched systems \cite{Rovithakis_PPC_Directed_switching_Graphs,PPC_SwitchingTopologySingularityFree}, which either relax the prescribed funnels online at switching instants or construct an over-approximating funnel a posteriori, the proposed method constructs auxiliary funnels online based on the worst-case variation of the \mbox{$k$-hop} neighborhood. Under suitable feasibility and minimum dwell-time conditions, these funnels ensure that the original prescribed performance specifications remain satisfied across topology switches.
\end{itemize}

The remainder of the paper is organized as follows. Section~\ref{Sec: Preliminaries} introduces the notation, preliminaries, and problem formulation. Section~\ref{Disagreement dynamics} defines the disagreement vectors among the agents' estimates and derives their dynamics.
Section~\ref{Sec: Decentralized PPSO} presents the proposed \mbox{$k$-hop} PPSO and the conditions under which the original prescribed performances are satisfied under switching topologies.
Section~\ref{Sec: Closed-loop} presents the feedback control structure and establishes the conditions under which the \mbox{$k$-hop} \mbox{estimation-based} feedback controller guarantees convergence to the team objective.
Section~\ref{Sec: Case Study} demonstrates the effectiveness of the proposed approach through simulation results, and Section~\ref{Sec: Conclusions} concludes the paper with final remarks and directions for future work.

\section{Preliminaries and Problem Setting}\label{Sec: Preliminaries}
\textbf{Notation:} 
Let $\mathbb{R}$, $\mathbb{R}_{\ge 0}$, and $\mathbb{R}_{>0}$ denote the sets of real, nonnegative real, and strictly positive real numbers, respectively. Furthermore, let $\mathbb{N}$ denote the set of natural numbers including $0$, and $\mathbb{C}$ the set of complex numbers. The Euclidean space of dimension $n$ is $\mathbb{R}^n$, and $\mathbb{R}^{n\times m}$ represents the space of real matrices with $n$ rows and $m$ columns. The identity and the null matrices of dimension $n$ are $I_n$ and $0_n$, and $1_n$ denotes the $n$-dimensional vector of ones. 
For any set $S$, $|S|$ denotes its cardinality, $S^c$ its complement, and $\partial S$ its boundary. For $N$ sets $\{S_1,\dots, S_N\}$, the Cartesian product, intersection, and union are $\bigtimes^N _{i=1} S_i$, $\bigcap^N _{i=1} S_i$, and $\bigcup^N _{i=1} S_i$, respectively. For a set $S = \{s_1,\ldots,s_N\}$ with $s_i \in \mathbb{R}$, $S_{\max} := \max_{s_i\in S} s_i$ and $S_{\min} := \min_{s_i\in S} s_i$. 
Given a set of vectors $\{s_1, \dots, s_N\}$, with $s_i \in \mathbb{R}^n$, $[s_i]_{i \in \{1,\dots, N\}}:= [s_1^\top, \dots, s_N^\top]^\top$.
For $B \in \mathbb{R}^{n\times n}$, define with $\spec(B)$ the spectrum of $B$, and let $\sigma_{\min}(B)$ and $\sigma_{\max}(B)$ denote the smallest and largest singular values of $B$. We use $\|B\|$ and $\|B\|_\infty$ to denote its induced $2$-norm and $\infty$-norm, respectively. For $B \in \mathbb{R}^{n\times m}$, $[B]_{i,j}$ denotes the \mbox{$(i,j)$-th} entry of $B$. For $x \in \mathbb{R}^n$, the Euclidean norm is $\|x\| = \sqrt{x^\top x}$, and the $j$-th component is $[x]_j$. The diagonal matrix with diagonal entries $a_1,\ldots,a_n$ is $\mathrm{diag}(a_j)_{j = 1}^n$. Similarly, given a set of indexes $\mathcal{I} = \{1, \dots, n\}$, $\mathrm{diag}(a_i)_{i \in \mathcal{I}} := \mathrm{diag}(a_j)_{j = 1}^n$. 
$\otimes$ denotes the Kronecker product. A function $f$ is of class $\mathcal{C}^1$ if it is continuously differentiable on its domain. We define functions $\mathcal{K}$ and $\mathcal{KL}$ as follows: $\mathcal{K} \!=\! \{\gamma : \mathbb R_{\ge 0}\rightarrow\mathbb R_{\ge 0}   :  \gamma \text{ is continuous, strictly increasing and } \gamma(0)=0\}$;  $\mathcal{KL} \!=\! \{\beta : \mathbb R_{\ge 0} \!\times \mathbb R_{\ge 0} \rightarrow\mathbb R_{\ge 0}  :$ for each fixed $s$, $\beta(r,s) \in \mathcal{K}$  with  respect to  $r$ and, for each fixed $r > 0$, $\beta(r,s)$ is decreasing with respect to  $s$  and $\lim_{ s \rightarrow \infty } \beta(r,s) = 0\}$.

\subsection{Multi-agent systems}
Consider a heterogeneous multi-agent system (MAS) consisting of $N$ agents $\mathcal{V} :=\{1, \dots, N\}$.
Let $n = \sum_{i=1}^{N} n_i$ and $m = \sum_{i=1}^{N} m_i$, and define the global state and input vectors as $\vect{x}:= [x_i]_{i \in \mathcal{V}} \in \mathbb{R}^n $ and $\vect{u}:=[u_i]_{i \in \mathcal{V}} \in \mathbb{R}^m$. The dynamics of each agent $i \in \mathcal{V}$ is given by:
\begin{equation}
    \label{eq: agent's dynamic}
    \dot x_i(t) =  f_i(x_i(t)) + g_i(u_i(t)) + w_i(\vect{x},t),
\end{equation}
where $x_i \in \mathbb{R}^{n_i}$ and $u_i \in \mathbb{R}^{m_i}$ denote the state and input of agent $i$, respectively, $f_i:\mathbb{R}^{n_i} \rightarrow \mathbb{R}^{n_i}$ describes the intrinsic dynamics, $g_i:\mathbb{R}^{m_i} \rightarrow \mathbb{R}^{n_i}$ is the input map, and $w_i: \bigtimes^N _{i=1}\mathbb{R}^{n_i} \times \mathbb{R}_{\geq 0} \rightarrow \mathbb{R}^{n_i}$ models coupling effects and exogenous disturbances acting on agent $i$.

\begin{assumption}
    \label{Assumption on existence of a solution}
    For each $i\in\mathcal V$: (i) $f_i:\mathbb{R}^{n_i} \rightarrow \mathbb{R}^{n_i}$ is locally Lipschitz; (ii) $g_i:\mathbb{R}^{m_i} \rightarrow \mathbb{R}^{n_i}$ is measurable and essentially locally bounded; (iii) $w_i: \bigtimes^N _{i=1}\mathbb{R}^{n_i} \times \mathbb{R}_{\geq 0} \rightarrow \mathbb{R}^{n_i}$ is continuous and uniformly bounded on $\bigtimes^N _{i=1}\mathbb{R}^{n_i} \times \mathbb{R}_{\geq 0}$.
\end{assumption}

\begin{assumption}
    \label{Assumption on g}
    (i) For each $i$ there exists a continuous function $\ell_i: \mathbb{R}_{\geq 0} \rightarrow \mathbb{R}_{\geq 0}$ such that $\norm{f_i(x) - f_i(y)} \leq \ell_i(\norm{x-y})$ for all $x,y \in \mathbb{R}^{n_i}$. (ii) $g_i(u_i(t))$ is uniformly bounded for all $t\geq 0$.
\end{assumption}
\begin{remark}
    \label{remark: agent trajectories exist}
    Splitting the segment from $0$ to $x$ into $\lceil \norm{x} \rceil$ pieces of length at most one, where $\lceil r \rceil := \min\{k \in \mathbb{N} : k \geq r\}$, gives, under Assumption~\ref{Assumption on g}(i), the linear growth bound $\norm{f_i(x)} \leq a_i + C_i \norm{x}$, with $C_i := \max_{r \in [0,1]} \ell_i(r)$ and $a_i := \norm{f_i(0)} + C_i$. Since $g_i(u_i(t))$ and $w_i$ are bounded, the Gr\"onwall-Bellman inequality \cite[Lemma~A.1]{khalil2002nonlinear} precludes finite escape. Hence, under Assumptions~\ref{Assumption on existence of a solution}-\ref{Assumption on g}, the agent trajectories exist for all $t \geq 0$.
\end{remark}

\subsection{Communication graph} 
\label{Section: communication graph}
Let $\mathbb{G} := \{\mathcal{G}_1, \dots, \mathcal{G}_M\}$ denote the set of admissible directed communication topologies, where each graph $\mathcal{G}_l := (\mathcal{V}, \mathcal{E}_l)$, $l \in \{1,\dots, M\}$, represents the information exchange structure among the $N$ agents. Specifically, an edge $(i,j) \in \mathcal{E}_l$ indicates that agent $j$ receives information from agent $i$ under the communication topology $\mathcal{G}_l$.

\begin{remark}
    Since each unordered pair of distinct agents admits one of four mutually exclusive edge configurations (no edge, either directed edge, or both directed edges), the number of distinct communication topologies on $N$ agents is upper bounded by $4^{N(N-1)/2}$. Therefore, $\mathbb{G}$ is finite.
\end{remark}

The active communication topology at time $t$ is denoted by $\mathcal{G}_{\sigma(t)}$, where $\sigma: [0, \infty) \rightarrow  \{1, \dots, M\}$ is a right-continuous piecewise-constant switching signal describing the evolution of the communication topology. Let $t_{\tau}$, $\tau \in \mathbb{N}$, denote the switching instants.

For the active graph $\mathcal{G}_{\sigma(t)}$, define the \mbox{in-} and \mbox{out-neighbor} sets of agent $i$ as $\inNeightime{i} := \{j\in\mathcal{V}: (j,i) \in \mathcal{E}_{\sigma(t)}\}$ and $\outNeightime{i} := \{j\in\mathcal{V}: (i,j) \in \mathcal{E}_{\sigma(t)}\}$, respectively. The corresponding \mbox{in-} and \mbox{out-degrees} at time $t$ are $d^{\text{in}}_{i}(t) := |\inNeightime{i}|$ and $d^{\text{out}}_{i}(t) := |\outNeightime{i}|$. 

A directed path from $i$ to $j$ is defined as a sequence of directed edges connecting $i$ to $j$ without revisiting any vertex. For $i,j \in \mathcal{V}$ and $l \in \{1,\dots,M\}$, let $D^{ij}_l$ denote the length of the
shortest directed path from $i$ to $j$ in $\mathcal{G}_l$, and $D^{ij}_{\sigma(t)}$ the corresponding length in the active graph. For $l \in \{1,\dots,M\}$, the $k$-hop out-neighbor set of agent $i$ in $\mathcal{G}_l$ is $\mathcal{O}_i^{k\text{-hop}}(\mathcal{G}_l) := \{j \in \mathcal{V}\setminus\{i\} : 2 \leq D^{ij}_l \leq k\}$, and the $k$-hop out-neighbor set at time $t$ is $\mathcal{O}_i^{k\text{-hop}}(t) := \mathcal{O}_i^{k\text{-hop}}(\mathcal{G}_\sigma(t))$, with cardinality $\eta_i(t) = |\mathcal{O}_i^{k\text{-hop}}(t)|$. Similarly, the corresponding $k$-hop in-neighbor set of $i$ is $\mathcal{I}_i^{k\text{-hop}}(t) := \{j \in \mathcal{V} : i \in \mathcal{O}_j^{k\text{-hop}}(t)\}$. Let $\mathcal{O}_i^{\cup} := \bigcup_{l=1}^{M} \mathcal{O}_i^{k\text{-hop}}(\mathcal{G}_l)$ if $\mathbb{G}$ is known, and $\mathcal{O}_i^{\cup} := \mathcal{V}\setminus\{i\}$ otherwise, denote the set of agents that can belong to the $k$-hop out-neighborhood of agent $i$ under some admissible topology. By construction, $\mathcal{O}_i^{k\text{-hop}}(t) \subseteq \mathcal{O}_i^{\cup}$ for all $t \geq 0$.

The \mbox{adjacency} matrix of $\mathcal{G}_{\sigma(t)}$ is defined as 
\begin{equation*}
    [\mathcal{A}^{\text{in}}(t)]_{i,j} = \begin{cases}
        1, \quad (j,i) \in \mathcal{E}_{\sigma(t)}, \\
        0, \quad \text{otherwise}.
    \end{cases}
\end{equation*}
Furthermore, $\mathcal{D}^{\text{in}}(t) := \diag(d^{\text{in}}_{i}(t))_{i=1}^N$, and the in-degree Laplacian of $\mathcal{G}_{\sigma(t)}$ is $L^{\text{in}}(t) := \mathcal{D}^{\text{in}}(t)- \mathcal{A}^{\text{in}}(t)$. 

\begin{assumption}
    \label{Assumption on neighborhood}
    Each agent $i\in \mathcal{V}$ has knowledge of $\inNeightime{i}, \outNeightime{i}$ and $\Outhop{i}{k}(t)$.
\end{assumption}
\begin{assumption}
    \label{Assumption on 2-hop communication}
    At each time instant, each agent $i \in \mathcal{V}$ has access to the states and inputs of its in-neighbors and can relay this information to its out-neighbors.
\end{assumption}

Assumption~\ref{Assumption on neighborhood} can be satisfied in practice, as distributed neighborhood discovery algorithms have been extensively studied in the sensor network literature \cite{994183}.
Furthermore, Assumption~\ref{Assumption on 2-hop communication} holds, for example, when each agent acquires the states of its $1$-hop in-neighbors through onboard sensing and transmits this information to its out-neighbors. As discussed in Remark~\ref{Remark on local disagreement vector computation and positive definiteness of Mik}, Assumption~\ref{Assumption on 2-hop communication} can be relaxed.

To simplify notation, the dependence of \mbox{graph-related} quantities on $t$ through the switching signal $\sigma(t)$ is omitted whenever it is clear from the context. In the following, at each time $t$, we consider only agents with a nonempty \mbox{$k$-hop} \mbox{out-neighbor} set. 

\subsection{State estimates and their transitions}
\label{Section: state and input estimate}
Let $\vect{x}^i$ denote the stacked vector of states associated with the \mbox{$k$-hop} \mbox{in-neighbors} of agent $i$, i.e., $\state{\vect{x}}{i}{} := \left[\state{x}{}{j}\right]_{j \in \Inhop{i}{k}}$.
Furthermore, let $\estimate{\vect{x}}{i}{} := \left[\estimate{x}{i}{j}\right]_{j \in \Inhop{i}{k}}$ denote the corresponding estimate, where $\estimate{x}{i}{j}$ denotes the estimate of $x_j$ maintained at agent $i$. The associated estimation error is defined as:
\begin{equation}
    \label{Definition: error on input and state estimation perfromed by i}
    \error{\vect{x}}{i}{} :=\left[\error{x}{i}{j}\right]_{j \in \Inhop{i}{k}},
\end{equation}
where $\error{x}{i}{j} := \estimate{x}{i}{j} -\state{x}{}{j}$ for all $j \in \Inhop{i}{k}$. 

Define $\vect{x}_i := 1_{\eta_i} \otimes x_i$. Moreover, let
\begin{equation}
    \label{eq: estimation of state and input of agent i}
    \estimate{\vect{x}}{}{i} := \left[\estimate{x}{j}{i}\right]_{j \in \Outhop{i}{k}}
\end{equation}
denote  the stacked vector of estimates of $x_i$ maintained by its \mbox{$k$-hop} \mbox{out-neighbors} $\Outhop{i}{k}$. Similarly to \eqref{Definition: error on input and state estimation perfromed by i}, define with $\error{\vect{x}}{}{i} := \estimate{\vect{x}}{}{i} -\vect{x}_i$ the estimation error maintained by each $j \in \Outhop{i}{k}$, i.e.,
\begin{equation}
    \label{Definition: error on input and state estimation of agent i}
        \error{\vect{x}}{}{i} := \left[\error{x}{j}{i}\right]_{j \in \Outhop{i}{k}}, 
\end{equation}
where $\error{x}{j}{i} := \estimate{x}{j}{i} - x_i$.

Since the active communication topology switches from $\mathcal{G}_{\sigma(t_{\tau-1})}$ to $\mathcal{G}_{\sigma(t_{\tau})}$ at the switching instant $t_{\tau}$, the equality $\Outhop{i}{k}(t_{\tau-1}) = \Outhop{i}{k}(t_{\tau})$ does not necessarily hold. Consequently, the dimension of the state estimation error vector $\error{\vect{x}}{}{i}$ may change at switching instants. This differs from the standard switched system framework \cite{Liberzon_2003}, where the state dimension is fixed and state trajectories evolve in a fixed dimensional state space. Therefore, an explicit mapping between the pre- and post-switch estimation error vectors is required. 
To account for the varying state dimension, we adopt notation similar to \cite{Xue_Tang_Ren_Qian_2022}. Specifically, the evolution of the state estimation error $\error{\vect{x}}{}{i}$ across a switching instant is modeled as
\begin{equation}
    \label{eq: State estimation error after the switch}
    \error{\vect{x}}{}{i}(t_{\tau}^+) = \Xi_i^\tau \ \error{\vect{x}}{}{i}(t_{\tau}^-) + \phi_i(t_{\tau}),
\end{equation}
where $\error{\vect{x}}{}{i}(t_{\tau}^-) \in \mathbb{R}^{n_i\eta_i(t_{\tau}^-)}$ and $\error{\vect{x}}{}{i}(t_{\tau}^+)\in \mathbb{R}^{n_i\eta_i(t_{\tau}^+)}$ denote the estimation error vectors before and after the switch at $t_{\tau}$, respectively, $\Xi_i^\tau \in \mathbb{R}^{n_i\eta_i(t_{\tau}^+)\times n_i\eta_i(t_{\tau}^-)}$ is a selection matrix that extracts and reorders the components corresponding to agents remaining in the $k$-hop neighborhood, and $\phi_i(t_{\tau})$ collects the initialization errors associated with agents entering the $k$-hop neighborhood after the switch.

For notational simplicity, in the following analysis we consider $n_i=1$ for all $i\in \mathcal{V}$. The extension to higher-dimensional states follows by replacing the involved matrices with their Kronecker-product extensions.

\subsection{Problem formulation}
\label{Problem formulation}
For every agent $i \in \mathcal{V}$ and for all $j \in \mathcal{O}_i^{\cup}$, let $\delta_i^j: \mathbb{R}_{\geq 0} \rightarrow \mathbb{R}_{>0}$ be a prescribed performance function used to capture the performance bounds for the estimation errors.
\begin{definition}
    \label{Definition of Prescribed performance function}
    $\rho :\mathbb{R}_{\geq 0} \rightarrow \mathbb{R}_{> 0}$ is a prescribed performance function if for all $t \in \mathbb{R}_{\geq 0}$: (i) $\rho \in \mathcal{C}^1$; (ii) $\rho_L\leq \rho(t) \leq \rho_U$ for some $ \rho_U < \infty $ and $\rho_L > 0$; (iii) $|\dot{\rho}(t)|\leq \rho^d_U$ for some $\rho^d_U<\infty$. 
\end{definition}

One conventional choice is $\rho(t)= (\rho(0)-\rho(\infty)) e^{-l t} + \rho(\infty)$ where $\rho(0)$ and $\rho(\infty)$ denote the initial and steady-state values with $\rho(0) \geq \rho(\infty) > 0$, and $l \geq 0$ specifies the decay rate.

Then, the goal of this work is formulated as follows.
\begin{problem}
    \label{Problem state estimation error convergence}
     Given the heterogeneous MAS \eqref{eq: agent's dynamic} communicating over the directed switching graphs $\mathcal{G}_{\sigma(t)} \in \mathbb{G}$ and prescribed performance functions $\delta^{j}_i(t)$, design a decentralized \mbox{$k$-hop} observer that guarantees $|\error{x}{j}{i}(t)| <\delta^{j}_i(t)$ for all $t \geq 0$, $i \in \mathcal{V}$, and $j \in \Outhop{i}{k}(t)$. 
     Establish sufficient conditions on the observer initialization and topology switching for these bounds to hold. Finally, when the resulting estimates are used for control, characterize how the estimation errors affect the distance to the desired target set.
\end{problem}

\section{Disagreement dynamics and problem reformulation}\label{Disagreement dynamics}
For each $i \in \mathcal{V}$ and $j \in \Outhop{i}{k}$, define the disagreement term $\xi_i^{j}$ on the estimate of $x_i$ performed by $j$ as:
\begin{equation}
    \label{eq: disagreement vector on the estimate of agent i}
    \xi_i^{j}(t) :=\sum_{l\in(\inN{j}{} \cap \Outhop{i}{k})} (\statediff{\hat{x}}{j}{i}{\hat{x}}{l}) + |\inN{j}{} \cap \outN{i}{}|  (\hat{x}^{j}_i - x_i),
\end{equation}
where $\xi_i^{j}$ captures how $\hat{x}^{j}_i$ differs from: (i) the state estimate $\estimate{x}{l}{i}$ shared by those agents $l \in \inNeightime{j}{} \cap \Outhop{i}{k}(t)$  and (ii) the true state information $x_i$ shared by the agents $l \in \inNeightime{j}{} \cap \outNeightime{i}{}$.

\subsection{Disagreement vector}
\label{Section: disagreement vector}
By stacking the vectors $\xi_i^{j}$ for all $j \in \Outhop{i}{k}$, and using the estimation error vector defined in \eqref{Definition: error on input and state estimation of agent i}, the aggregate disagreement vector $\vect{\xi}_i := [\xi_{i}^{j}]_{j \in \Outhop{i}{k}}$ can be written as:
\begin{equation}
    \label{eq: disagreement vector expression}
    \vect{\xi}_i = (L_i(t) + H_i(t)) \error{\vect{x}}{}{i} = M_i(t) \error{\vect{x}}{}{i},
\end{equation}
where $L_i(t)= L_i(\mathcal{G}^i_{\sigma(t)})$ denotes the in-degree Laplacian of the subgraph $\mathcal{G}^i_{\sigma(t)} := ( \Outhop{i}{k}(t), \mathcal{E}^i_{\sigma(t)})$ induced by the \mbox{$k$-hop} out-neighbors of agent $i$, with $\mathcal{E}^i_{\sigma(t)} := \{(j,q) \in \mathcal{E}_{\sigma(t)}: j,q \in \Outhop{i}{k}(t)\}$, $H_i(t)= H_i(\mathcal{G}^i_{\sigma(t)}) := \text{diag}(|\inNeightime{j}{} \cap \outNeightime{i}{}|)_{j \in \Outhop{i}{k}}$ is the associated pinning matrix, and $M_i(t) = M_i(\mathcal{G}^i_{\sigma(t)}) := L_i(t) + H_i(t)$. 

The subgraph $\mathcal{G}^i_{\sigma(t)}$ can be decomposed into $l^i(t)$ weakly connected components $\{\mathcal{G}_{\sigma(t)}^{i,(1)}, \dots, \mathcal{G}_{\sigma(t)}^{i,(l^i)}\}$ \cite[Sec. 6.11]{newman2018networks}, where $\mathcal{G}_{\sigma(t)}^{i,(p)} := (\mathcal{V}_{\sigma(t)}^{i,(p)}, \mathcal{E}_{\sigma(t)}^{i,(p)})$, with $\mathcal{V}_{\sigma(t)}^{i,(p)} \subseteq \Outhop{i}{k}(t)$ and $\mathcal{E}_{\sigma(t)}^{i,(p)} := \{(j,q) \in \mathcal{E}^i_{\sigma(t)}: j,q \in \mathcal{V}_{\sigma(t)}^{i,(p)}\}$. Since distinct weakly connected components contain no edges between them, the off-diagonal entries of $L_i$ associated with pairs of nodes in different components are zero, while all in-neighbors of each node belong to the same component. Hence, there exists a permutation matrix $P^i_{\sigma(t)}$ that orders the nodes according to their weakly connected components such that $P^i_{\sigma(t)} L_i P^{i \top}_{\sigma(t)} = \diag(L_i^{(p)})_{p = 1}^{l^i}$ and $P^i_{\sigma(t)} H_i P^{i \top}_{\sigma(t)}= \diag(H_i^{(p)})_{p = 1}^{l^i}$, where $L_i^{(p)}$ and $H_i^{(p)}$ denote the in-degree Laplacian and pinning matrix associated with $\mathcal{G}_{\sigma(t)}^{i,(p)}$, respectively. Consequently, $P^i_{\sigma(t)} M_i P^{i \top}_{\sigma(t)} = \diag(M_i^{(p)})_{p = 1}^{l^i}$ with $M_i^{(p)} = L_i^{(p)} + H_i^{(p)}$, and
\begin{equation}
    \label{eq: Spectrum of M_i equal to the one of the block diagonals}
    \spec(M_i) = \bigcup_{p = 1}^{l^i} \spec(M_i^{(p)}).
\end{equation}
Hence, the spectral analysis of $M_i$ reduces to that of the block matrices $M_i^{(p)}$.

The following lemma establishes graph-theoretic properties of the \mbox{$k$-hop} \mbox{out-neighborhood} of agent $i$ that will be used to characterize $\spec(M_i)$.
\begin{lemma} 
    \label{lemma: neighbors}
    For all $i \in \mathcal{V}$ and all $\mathcal{G}_{\sigma(t)} \in \mathbb{G}$: (i) $\inN{j}{} \cap \Outhop{i}{k} \neq \emptyset\  \text{ or } \ \inN{j}{}\cap \outN{i}{}\neq \emptyset$ for all $j \in \Outhop{i}{k}$;
    (ii) every weakly connected component $\mathcal{G}_{\sigma(t)}^{i,(p)}$ of $\mathcal{G}^i_{\sigma(t)}$ contains at least one agent $j \in \Outhop{i}{k}$ satisfying $\inN{j}{}\cap \outN{i}{}\neq \emptyset$.
\end{lemma}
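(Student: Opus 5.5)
Both parts follow from examining a shortest directed path from $i$ to a given $j \in \Outhop{i}{k}$ in the active graph $\mathcal{G}_{\sigma(t)}$. Throughout, fix $t$ and write $D^{ij} := D^{ij}_{\sigma(t)}$. By definition of $\Outhop{i}{k}$ we have $2 \leq D^{ij} \leq k$.

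For (i), take a shortest path $i = v_0 \to v_1 \to \dots \to v_{D} = j$ with $D = D^{ij}$, and consider the predecessor $l := v_{D-1}$. Then $(l,j) \in \mathcal{E}_{\sigma(t)}$, so $l \in \inN{j}$. Because the path is shortest, its prefix is a shortest path from $i$ to $l$, which gives $D^{il} = D-1$ and hence $1 \leq D^{il} \leq k-1$. Moreover $l \neq i$, since otherwise $D^{ij} = 1$. There are two cases.
\begin{itemize}
\item If $D^{il} = 1$, then $l \in \outN{i}$, so $l \in \inN{j} \cap \outN{i}$.
\item If $2 \leq D^{il} \leq k-1 \leq k$, then $l \in \Outhop{i}{k}$, so $l \in \inN{j} \cap \Outhop{i}{k}$.
\end{itemize}
In either case one of the two intersections is nonempty.

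For (ii), fix a weakly connected component $\mathcal{G}^{i,(p)}_{\sigma(t)}$ and choose a node $j$ in it that minimizes $D^{ij}$ over $\mathcal{V}^{i,(p)}_{\sigma(t)}$. Apply the construction from (i) to $j$, obtaining the predecessor $l \in \inN{j}$ with $D^{il} = D^{ij} - 1$.
\begin{itemize}
\item If $D^{il} = 1$, then $j$ itself satisfies $\inN{j} \cap \outN{i} \neq \emptyset$, which is what we want.
\item Otherwise $l \in \Outhop{i}{k}$, and the edge $(l,j)$ joins two nodes of $\Outhop{i}{k}$, so it belongs to $\mathcal{E}^i_{\sigma(t)}$. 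Then $l$ lies in the same weakly connected component as $j$, yet $D^{il} < D^{ij}$. This contradicts the minimality of $j$.
\end{itemize}
Hence the minimizer $j$ has an in-neighbor in $\outN{i}$.

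The only delicate step is in (i): we must check that the predecessor $l$ lies in $\Outhop{i}{k}$ whenever $D^{il} \geq 2$. This needs both $D^{il} \leq k$ and $l \neq i$, and both follow from the shortest-path property, i.e., from the fact that prefixes of shortest paths are shortest. The component argument in (ii) then only uses the fact that edges between nodes of $\Outhop{i}{k}$ are retained in the induced subgraph $\mathcal{G}^i_{\sigma(t)}$.
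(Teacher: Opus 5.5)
Your proof is correct and takes essentially the same route as the paper. Part (i) is identical: take the predecessor on a shortest path and split on whether it is at distance $1$ or at distance in $[2,k]$ from $i$. For part (ii), you choose a distance-minimizing node of the component, whereas the paper argues by contradiction and walks a shortest path back to its node $v_2$, which lies in the same component; this is the same idea packaged as an extremal choice.
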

\begin{proof}
    Given in Appendix~\ref{Appendix: proof lemma neihbors}.
\end{proof}

The above graph property guarantees that every block matrix $M_i^{(p)}$ contains at least one pinned node, i.e., an agent $j \in \mathcal{V}_{\sigma(t)}^{i,(p)}$ with $[H_i]_{jj} > 0$, so that $H_i^{(p)} \neq 0$. Combining this property with Gershgorin's theorem \cite[Thm. 6.1.1]{horn2012matrix} yields the following spectral characterization of $M_i$.
\begin{lemma}
    \label{Lemma: -M_i is Hirwitz}
   $\Re(\lambda) > 0$ for all $\lambda \in \spec(M_i)$, i.e.,  $- M_i$ is Hurwitz.
\end{lemma}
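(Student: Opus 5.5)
The plan is to combine the Gershgorin localization of $\spec(M_i)$ with a separate nonsingularity argument for $M_i$. By \eqref{eq: Spectrum of M_i equal to the one of the block diagonals} it would suffice to treat each block $M_i^{(p)}$. The argument below, however, works directly on $M_i$, so the decomposition is only a convenience.

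\textbf{Step 1 (Gershgorin).} For the row indexed by $j \in \Outhop{i}{k}$, let $d_j := |\inN{j}{} \cap \Outhop{i}{k}|$ and $h_j := |\inN{j}{}\cap \outN{i}{}| \geq 0$. The diagonal entry of $M_i$ is $d_j + h_j$, and the off-diagonal entries in that row are $-1$ for exactly $d_j$ indices. Hence every Gershgorin disc is centered at $d_j+h_j$ with radius $d_j$. Every $\lambda \in \spec(M_i)$ therefore satisfies $\Re(\lambda) \geq h_j \geq 0$ for some $j$. A disc centered at $d_j+h_j$ with radius $d_j$ meets the imaginary axis only when $h_j=0$, and then only at the origin. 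Consequently, any eigenvalue with $\Re(\lambda)=0$ must be $\lambda=0$, and it remains to show that $M_i$ is nonsingular.

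\textbf{Step 2 (reachability from pinned nodes).} I would strengthen Lemma~\ref{lemma: neighbors} as follows: every $j \in \Outhop{i}{k}$ is reachable, through a directed path inside $\mathcal{G}^i_{\sigma(t)}$, from some node $q$ with $h_q>0$. The argument uses a shortest directed path from $i$ to $j$, of length $D^{ij}\in\{2,\dots,k\}$. The predecessor of $j$ on this path is at distance $D^{ij}-1$ from $i$. If $D^{ij}=2$, that predecessor lies in $\outN{i}{}$, so $h_j>0$. Otherwise its distance lies in $\{2,\dots,k-1\}$, so it belongs to $\Outhop{i}{k}$ and is an in-neighbor of $j$ in the subgraph. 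Iterating backward along the path strictly decreases the distance from $i$. The iteration therefore terminates at a node at distance $2$, which is pinned. This is the main obstacle. Lemma~\ref{lemma: neighbors}(ii) alone only gives a pinned node in each weakly connected component, which is insufficient for directed graphs, so the shortest-path reachability argument is the essential ingredient.

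\textbf{Step 3 (maximum-modulus argument).} Suppose $M_i v = 0$ with $v \neq 0$. Let $S$ be the set of indices at which $|v_j|$ attains its maximum $m>0$. For $j \in S$, row $j$ reads $(d_j+h_j) v_j = \sum_{l \in \inN{j}{}\cap \Outhop{i}{k}} v_l$. Taking moduli gives $(d_j+h_j)m \leq d_j m$, which forces $h_j=0$. Equality in the triangle inequality then forces $v_l = v_j$ for every in-neighbor $l$ of $j$ in the subgraph, so $S$ is closed under taking in-neighbors. By Step 2, starting from any $j\in S$ and moving backward, we reach a pinned node $q \in S$ with $h_q>0$, contradicting $h_q=0$. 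Hence $0 \notin \spec(M_i)$. Together with Step 1, this gives $\Re(\lambda)>0$ for all $\lambda\in\spec(M_i)$, i.e., $-M_i$ is Hurwitz.
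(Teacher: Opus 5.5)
Your proposal is correct and follows essentially the same route as the paper: Gershgorin discs confine the spectrum to the closed right half-plane with only $0$ possible on the imaginary axis, and a maximum-entry argument shows that the maximizing set is unpinned and closed under in-neighbors, which backward propagation along a shortest path from $i$ to a node at distance $2$ contradicts. The differences are cosmetic: the paper argues blockwise on each $M_i^{(p)}$ and uses the maximum entry of a real null vector rather than the maximum modulus. You are also right that the shortest-path reachability, not Lemma~\ref{lemma: neighbors}(ii) alone, is what makes the argument work, and this is exactly what the paper's proof uses.
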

\begin{proof}
    Given in Appendix~\ref{Appendix: proof lemma -M_i is Hirwitz}.
\end{proof}
\begin{remark}
    \label{Remark on local disagreement vector computation and positive definiteness of Mik}
    If Assumption~\ref{Assumption on 2-hop communication} fails, \eqref{eq: disagreement vector on the estimate of agent i} is not locally computable for those $j$ with $|\inN{j}{} \cap \outN{i}{}| \neq 0$, as they lack access to $x_i$ without \mbox{$2$-hop} information flow. Yet, local computability can be restored, while preserving the spectral properties on $M_i$, by augmenting the \mbox{$k$-hop} neighborhood with $1$-hop neighbors. Then, $L_{i}(t)$ is the Laplacian matrix of the subgraph induced by the new \mbox{$k$-hop} \mbox{out-neighborhood} of agent~$i$ at time $t$, and $H_i(t):= \text{diag}(h^{j}_i(t))_{j \in \Outhop{i}{k}(t)}$, where $h^{j}_i(t) = 1$ if $j \in \outN{i}{}(t)$, and $h^{j}_i(t) = 0$ otherwise.
\end{remark}

By Lemma~\ref{Lemma: -M_i is Hirwitz}, all eigenvalues of $M_i$ have positive real parts. Furthermore, $M_i$ has nonpositive off-diagonal entries. Therefore, by Theorem~\ref{theorem: M-matrix}, $M_i$ is a nonsingular $M$-matrix.
\begin{theorem}\noindent\cite[Thm.~2.5.2]{michel1977qualitative}
    \label{theorem: M-matrix}
    Let $D \in \mathbb{R}^{n\times n}$ be a matrix with $[D]_{ij} \leq 0$ for all $i \neq j$. Then the following statements are equivalent: i) $D$ is an $M$-matrix; ii) $D$ is nonsingular, and all elements of $D^{-1}$ are nonnegative; iii) the real parts of all eigenvalues of $D$ are positive.
\end{theorem}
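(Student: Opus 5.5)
We take as definition that $D$ is an $M$-matrix if it can be written as $D = sI_n - B$ with $B \geq 0$ entrywise and $s > \rho(B)$, where $\rho(B)$ is the spectral radius. The plan is to fix the splitting $D = sI_n - B$ with $s := \max_i [D]_{ii}$ (or any larger value). Then $B := sI_n - D$ has nonnegative diagonal by the choice of $s$, and nonnegative off-diagonal entries because $[D]_{ij}\le 0$ for $i\neq j$. Hence $B \geq 0$ entrywise. Every statement then becomes a statement about $\rho(B)$ versus $s$, which we compare using the Perron--Frobenius theorem and the Neumann series. We prove (i)$\Leftrightarrow$(iii) and (i)$\Leftrightarrow$(ii).

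For (i)$\Rightarrow$(iii), every $\lambda\in\spec(D)$ has the form $\lambda = s-\mu$ with $\mu\in\spec(B)$. Since $|\mu|\le\rho(B)<s$, we get $\Re(\lambda)\ge s-|\mu|>0$. For (iii)$\Rightarrow$(i), we use the weak Perron--Frobenius theorem: for any $B\ge 0$, $\rho(B)$ is itself an eigenvalue of $B$ with an eigenvector $v\ge 0$, $v\neq 0$. Then $s-\rho(B)$ is a real eigenvalue of $D$, so (iii) forces $s-\rho(B)>0$.

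For (i)$\Rightarrow$(ii), $\rho(B/s)<1$, so the Neumann series converges and gives
\begin{equation*}
D^{-1} = \frac{1}{s}\sum_{q=0}^{\infty}\Big(\frac{B}{s}\Big)^{q},
\end{equation*}
which is entrywise nonnegative as a convergent sum of nonnegative matrices. For (ii)$\Rightarrow$(i), take again the Perron vector $v\ge0$, $v\neq0$, with $Bv=\rho(B)v$. Then $Dv=(s-\rho(B))v$. Since $D$ is nonsingular, $s-\rho(B)\neq 0$, and therefore $D^{-1}v = v/(s-\rho(B))$. The left-hand side is entrywise nonnegative because $D^{-1}\ge0$ and $v\ge0$. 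It is also nonzero, since $D^{-1}$ is invertible. Because $v$ has a strictly positive entry, the scalar $1/(s-\rho(B))$ must be positive, so $s>\rho(B)$.

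The main obstacle is justifying the Perron--Frobenius step for a possibly reducible $B\ge0$, where the strong (irreducible) version does not directly apply. We would handle it by perturbation. Let $B_\varepsilon := B+\varepsilon 1_n1_n^\top$, which is positive, so classical Perron's theorem gives $\rho(B_\varepsilon)\in\spec(B_\varepsilon)$ with a normalized positive eigenvector $v_\varepsilon$. Letting $\varepsilon\to0$, continuity of the spectrum gives $\rho(B_\varepsilon)\to\rho(B)$. Compactness of the unit sphere gives a subsequence $v_\varepsilon\to v\ge0$ with $\|v\|=1$, and passing to the limit yields $Bv=\rho(B)v$. All other steps are routine.
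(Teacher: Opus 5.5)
The paper gives no proof of this statement. It is imported from Michel and Miller, and only the direction (iii)$\Rightarrow$(i) is used: Lemma~\ref{Lemma: -M_i is Hirwitz} supplies (iii) for $M_i$, after which Corollary~\ref{Corollary: existence of a diagonal matrix P that render an M-matrix Hurwitz} is invoked. Your argument is a correct, self-contained proof under the splitting definition $D=sI_n-B$, $B\ge0$, $s>\rho(B)$:
the Neumann series gives (i)$\Rightarrow$(ii);
the bound $|\mu|\le\rho(B)$ gives (i)$\Rightarrow$(iii);
the weak Perron--Frobenius theorem gives both converses, and your perturbation $B+\varepsilon 1_n1_n^\top$ correctly establishes it for reducible $B$.

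One bookkeeping point: in (i)$\Rightarrow$(ii) and (i)$\Rightarrow$(iii) you must use the splitting that witnesses (i), not the one you fixed with $s=\max_i[D]_{ii}$. Your steps only use $B\ge0$ and $s>\rho(B)$, so they go through verbatim. Alternatively, $\rho(B+cI_n)=\rho(B)+c$ for $c\ge0$, so $s-\rho(sI_n-D)$ does not depend on the admissible $s$.

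The real difference between the two routes is definitional. The paper never says what an M-matrix is, so the citation implicitly adopts the source's definition, which in the large-scale-systems literature is often stated as positivity of the leading principal minors. Under that definition you would still need to connect (i) with (ii)--(iii), for example as follows:
\begin{itemize}
\item principal submatrices of $sI_n-B$ have the same form with spectral radius at most $\rho(B)$, so their determinants are positive by (iii);
\item conversely, an induction on the Schur complement shows that positive leading principal minors force $D^{-1}\ge0$.
\end{itemize}

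If you keep your definition, you should also check that the corollary used next holds for it. This is easy: $x:=D^{-1}1_n$ and $y:=D^{-\top}1_n$ are entrywise positive. With $A=\mathrm{diag}(y_l/x_l)$ and $X=\mathrm{diag}(x_l)$, the matrix $X(AD+D^\top A)X$ is symmetric with nonpositive off-diagonal entries and row sums $x+y>0$, hence positive definite by Gershgorin.

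Your route buys an elementary, transparent justification of exactly the step the paper needs. The citation buys brevity and a single framework in which both the theorem and the corollary are stated for the same notion of M-matrix.
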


The following property of nonsingular $M$-matrices will be used to construct a diagonal Lyapunov weighting matrix.
\begin{corollary}\noindent\cite[Corollary.~2.5.6]{michel1977qualitative}
    \label{Corollary: existence of a diagonal matrix P that render an M-matrix Hurwitz}
    Let $D \in \mathbb{R}^{n\times n}$ be a matrix with nonpositive \mbox{off-diagonal} elements. Then $D$ is an $M$-matrix if and only if there exists a diagonal matrix $A = \diag(\alpha_i)_{i=1}^n$, $\alpha_i > 0$, $i \in \{ 1, \dots, n\}$, such that the matrix $B = AD + D^\top A$ is positive definite.
\end{corollary}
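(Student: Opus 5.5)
We prove the two implications separately; the reverse direction is a short Lyapunov-type computation, and the forward direction rests on constructing positive vectors from $D^{-1}$.

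\textbf{Sufficiency.} Suppose $A=\diag(\alpha_i)_{i=1}^n$ with $\alpha_i>0$ and $B=AD+D^\top A\succ 0$.
\begin{itemize}
\item Take any eigenpair $Dv=\lambda v$ with $v\in\mathbb{C}^n\setminus\{0\}$, and let $v^*$ denote the conjugate transpose.
\item Since $A$ is real symmetric, $v^*D^\top A v=\overline{\lambda}\,v^*Av$. Hence
\begin{equation*}
0<v^*Bv=(\lambda+\overline{\lambda})\,v^*Av=2\Re(\lambda)\,v^*Av .
\end{equation*}
\item Because $A\succ 0$, we have $v^*Av>0$, so $\Re(\lambda)>0$.
\item $D$ has nonpositive off-diagonal entries, so Theorem~\ref{theorem: M-matrix}(iii)$\Rightarrow$(i) shows that $D$ is an $M$-matrix.
\end{itemize}

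\textbf{Necessity.} Suppose $D$ is an $M$-matrix.
\begin{itemize}
\item By Theorem~\ref{theorem: M-matrix}(ii), $D$ is nonsingular and $D^{-1}\ge 0$ entrywise.
\item Define $x:=D^{-1}1_n$ and $y:=D^{-\top}1_n$. Both are entrywise nonnegative.
\item Both are in fact strictly positive. A nonsingular matrix has no zero row, and $D^{-1}\ge 0$, so every row of $D^{-1}$ has a positive entry and $[x]_i>0$ for all $i$. The same argument applied to $D^{-\top}$ gives $[y]_i>0$.
\item By construction, $Dx=1_n>0$ and $D^\top y=1_n>0$.
\item Choose $A:=\diag([y]_i/[x]_i)_{i=1}^n$. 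Then $A$ is diagonal with positive entries and $Ax=y$.
\item Set $B:=AD+D^\top A$. It is symmetric, and its off-diagonal entries are nonpositive because $A$ is positive diagonal and $D$ has nonpositive off-diagonal entries.
\item Moreover,
\begin{equation*}
Bx=A(Dx)+D^\top(Ax)=A1_n+D^\top y=A1_n+1_n>0 .
\end{equation*}
\end{itemize}

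\textbf{Positive definiteness of $B$.} It remains to show that a symmetric matrix with nonpositive off-diagonal entries that admits a positive vector $x$ with $Bx>0$ is positive definite.
\begin{itemize}
\item Let $X:=\diag([x]_i)_{i=1}^n$ and $C:=XBX$.
\item The $i$-th row sum of $C$ is $[x]_i[Bx]_i>0$.
\item Since the off-diagonal entries of $C$ are nonpositive, this row-sum condition says $[C]_{ii}>\sum_{j\ne i}|[C]_{ij}|$. So $C$ is strictly diagonally dominant with positive diagonal.
\item By Gershgorin's theorem \cite[Thm. 6.1.1]{horn2012matrix}, every eigenvalue of the symmetric matrix $C$ is real and positive, so $C\succ 0$.
\item Since $X$ is nonsingular, $B=X^{-1}CX^{-1}\succ 0$ by congruence.
\end{itemize}

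The only step needing care is the strict positivity of $x$ and $y$; it follows from nonsingularity together with $D^{-1}\ge 0$, as shown above. The rest of the argument is routine.
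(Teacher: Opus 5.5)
Your proof is correct and complete. The paper gives no proof of this statement. It imports it from Michel and Miller and uses only the necessity direction, in the proof of Theorem~\ref{Theorem: main theorem on state estimation convergence in time varying case}, to obtain a positive diagonal $A_i$ with $A_iM_i+M_i^\top A_i\succ 0$. So there is no internal argument to compare against.

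Both directions check out:
\begin{itemize}
\item Sufficiency: the eigenpair identity $v^*Bv=2\Re(\lambda)\,v^*Av$, followed by Theorem~\ref{theorem: M-matrix}(iii)$\Rightarrow$(i).
\item Necessity: the choice $A=\diag([y]_i/[x]_i)$ with $x=D^{-1}1_n$ and $y=D^{-\top}1_n$ makes $B$ symmetric with nonpositive off-diagonal entries and $Bx>0$. The scaling $C=XBX$, strict diagonal dominance, Gershgorin, and congruence then finish the argument.
\item The point you flagged also holds: each row of the nonnegative nonsingular matrix $D^{-1}$ (and of $D^{-\top}$) is nonzero, so $x$ and $y$ are strictly positive.
\end{itemize}

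Compared with the bare citation, your route has two advantages in this paper's setting. First, it is self-contained, using only Theorem~\ref{theorem: M-matrix} and the same Gershgorin reasoning the paper already uses for Lemma~\ref{Lemma: -M_i is Hirwitz}. Second, it produces an explicit weight $A_i=\diag\bigl([M_i^{-\top}1_{\eta_i}]_j/[M_i^{-1}1_{\eta_i}]_j\bigr)_{j=1}^{\eta_i}$. With this weight, the constants $\bar\lambda_{\min}=\lambda_{\min}(A_iM_i+M_i^\top A_i)$ and $\lambda_{\max}(A_i)$, which enter the rate $\kappa$ and the bound $\beta(t)$ in the observer analysis, can actually be computed.
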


\subsection{State-to-disagreement performance bounds}\label{Section: State-to-disagreement performance bounds}
To guarantee prescribed performance on the state estimation errors, the observer
must be designed so that
\begin{equation}
    \label{eq: state estimation error inequality}
    |\error{x}{j}{i}(t)| <\delta^{j}_i(t)
\end{equation}
holds for all $i \in \mathcal{V}$ and $j \in \Outhop{i}{k}$, where $\delta^{j}_i$ is a prescribed performance function as in Definition~\ref{Definition of Prescribed performance function}.

Since $M_i$ is nonsingular by Lemma~\ref{Lemma: -M_i is Hirwitz}, \eqref{eq: disagreement vector expression} gives $\error{\vect{x}}{}{i} = M_i^{-1} \vect{\xi}_i$. Therefore, $\norm{\error{\vect{x}}{}{i}} \leq \norm{M_i^{-1}} \norm{\vect{\xi}_i} = \sigma^{-1}_{\min}(M_i) \norm{\vect{\xi}_i}$, where the equality $\norm{M_i^{-1}} = \sigma^{-1}_{\min}(M_i)$ follows under the induced Euclidean norm. 

Since $|\error{x}{j}{i}(t)| \leq \norm{\error{\vect{x}}{}{i}(t)}$ for all $j \in \Outhop{i}{k}$, a sufficient condition for \eqref{eq: state estimation error inequality} is $\norm{\vect{\xi}_i(t)} < \sigma_{\min}(M_i) \min_{j \in \Outhop{i}{k}} \{\delta_i^{j}(t)\}$. Accordingly, the observer is designed to enforce
\begin{equation}
    \label{eq: definition of the bounds on the disagreement term}
    |\xi_i^{j}(t)| < \rho_i^{j}(t),
\end{equation}
for all $j \in \Outhop{i}{k}$, where the prescribed performance functions $\rho_i^{j}(t)$ are chosen such that $\vect{\rho}_i(t) = [\rho^{j}_i(t)]_{j  \in \Outhop{i}{k}}$ satisfies
\begin{equation}
    \label{eq: constraint on the prescribed prformance functions}
    \norm{\vect{\rho}_i(t)} \leq \sigma_{\min}(M_i) \min_{j \in \Outhop{i}{k}} \{\delta_i^{j}(t)\}
\end{equation}
with $\rho^{j}_i(0) \geq \rho^{j}_i(\infty) > 0$ and $\rho^{j}_i(0) > |\xi_{i}^{j}(0)|$. Since \eqref{eq: definition of the bounds on the disagreement term} implies
$\norm{\vect{\xi}_i(t)}<\norm{\vect{\rho}_i(t)}$, and $\norm{\error{\vect{x}}{}{i}} \leq \sigma^{-1}_{\min}(M_i) \norm{\vect{\xi}_i}$, the design condition \eqref{eq: constraint on the prescribed prformance functions} guarantees \eqref{eq: state estimation error inequality}.

The following lemma is useful for constructing prescribed performance functions satisfying \eqref{eq: constraint on the prescribed prformance functions}.
\begin{lemma}
    \label{Lemma: sum of prescribed perofmance functions}
    Let $\rho_i (t)$, $i \in \{1, \dots ,n\}$, be prescribed performance functions according to Definition~\ref{Definition of Prescribed performance function}, and let $c_i \geq 0$ be constant scalars such that at least one $c_i>0$. Then, $\varrho(t) :=\sum^n_{i = 1} c_i \rho_i (t)$ and $\varphi(t):= \sqrt{\sum_{i=1}^{n} c_i\, \rho_i(t)^2}$ are prescribed performance functions.
\end{lemma}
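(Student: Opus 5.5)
The plan is to check the three conditions of Definition~\ref{Definition of Prescribed performance function} directly for $\varrho$ and for $\varphi$, using the bounds $\rho_{i,L}\le\rho_i(t)\le\rho_{i,U}$ and $|\dot\rho_i(t)|\le\rho^d_{i,U}$ available for each $\rho_i$. Let $\mathcal{J}:=\{i: c_i>0\}$, which is nonempty by hypothesis. Terms with $c_i=0$ contribute nothing to either function, so all bounds can be written over $\mathcal{J}$.

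For $\varrho$, condition (i) holds because a finite linear combination of $\mathcal{C}^1$ functions is $\mathcal{C}^1$. For condition (ii), since $c_i\ge 0$ and $\rho_i>0$, we have
\begin{equation*}
\varrho(t)\ge \sum_{i\in\mathcal{J}} c_i\rho_{i,L}=:\varrho_L>0, \qquad \varrho(t)\le \sum_{i} c_i\rho_{i,U}=:\varrho_U<\infty.
\end{equation*}
Strict positivity of $\varrho_L$ uses exactly the assumption that some $c_i>0$. For condition (iii), $|\dot\varrho(t)|\le \sum_i c_i\rho^d_{i,U}<\infty$.

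For $\varphi$, set $s(t):=\sum_i c_i\rho_i(t)^2$. This is $\mathcal{C}^1$ and satisfies
\begin{equation*}
0< \sum_{i\in\mathcal{J}} c_i\rho_{i,L}^2 =: s_L \le s(t)\le s_U := \sum_i c_i\rho_{i,U}^2 .
\end{equation*}
The main point to handle is differentiability of the square root, which can fail only where $s=0$. Because $s$ is bounded away from zero, $\sqrt{\cdot}$ is smooth on $[s_L,\infty)$, so $\varphi=\sqrt{s}$ is $\mathcal{C}^1$. Condition (ii) then follows with $\varphi_L=\sqrt{s_L}>0$ and $\varphi_U=\sqrt{s_U}$. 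For condition (iii), compute $\dot\varphi=\dot s/(2\sqrt{s})=\sum_i c_i\rho_i\dot\rho_i/\varphi$, which gives
\begin{equation*}
|\dot\varphi(t)|\le \frac{\sum_i c_i\rho_{i,U}\rho^d_{i,U}}{\sqrt{s_L}}<\infty.
\end{equation*}
This completes the verification for both functions.

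The only delicate step is the lower bound that keeps $\varphi$ away from zero. It guarantees differentiability and bounds $\dot\varphi$, and both rely on the hypothesis that at least one $c_i$ is positive.
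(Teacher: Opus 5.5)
Your proposal is correct and follows essentially the same route as the paper: you verify conditions (i)--(iii) of Definition~\ref{Definition of Prescribed performance function} directly for $\varrho$ and $\varphi$. In both arguments the lower bound coming from some $c_i>0$ makes $\sqrt{\cdot}$ $\mathcal{C}^1$ along $s(t)$ and bounds $\dot\varphi=\sum_i c_i\rho_i\dot\rho_i/\varphi$. Your explicit constant $\sum_i c_i\rho_{i,U}\rho^d_{i,U}/\sqrt{s_L}$ is a slightly sharper version of the paper's qualitative boundedness statement.
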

\begin{proof}
    Consider first the weighted sum $\varrho(t)$. Since $\rho_i(t) \geq \rho_{i,L}>0$ and at least one $c_i>0$, it follows that $\varrho(t) \geq \varrho_L := \sum_{i=1}^{n}c_i\rho_{i,L}>0$. Moreover, (i) Since $\rho_i \in \mathcal{C}^1$ for all $i$, $\sum^n_{i = 1} c_i \rho_i (t) \in \mathcal{C}^1$ holds; (ii)-(iii) By Definition~\ref{Definition of Prescribed performance function}, $\rho_i (t) \leq \rho_{i,U}$ and $|\dot{\rho}_i(t)|\leq \rho^d_{i,U}$ hold for some finite $\rho_{i,U} > 0$ and $\rho^d_{i,U}> 0$. Hence, $\varrho(t) \leq \varrho_U := \sum^n_{i = 1} c_i\rho_{i,U}$ and $|\sum^n_{i = 1}c_i \dot{\rho}_i(t)| \leq \sum^n_{i = 1} c_i \rho^d_{i,U}$ hold. 
        
    For $\varphi(t)$, since $\rho_i(t)\geq\rho_{i,L}>0$ and at least one $c_i>0$, it follows that $\varphi(t) \geq \varphi_{L} := \sqrt{\sum_{i=1}^{n}c_i\rho_{i,L}^2}>0$ for all $t\geq0$. Moreover: (i) since $\sum_{i=1}^{n} c_i\, \rho_i(t)^2 \in \mathcal{C}^1$, $\sum_i c_i \rho_i(t)^2 > 0$, and the square-root function is $\mathcal{C}^1$ on $\mathbb{R}_{> 0}$, $\varphi(t) \in \mathcal{C}^1$; (ii) $\varphi(t) \leq \varphi_U := \sqrt{\sum^n_{i=1} c_i \rho_{i,U}^2} < \infty$ holds from $\rho_i(t) \leq \rho_{i,U}$; (iii) $\dot \varphi(t) = \frac{\sum_{i=1}^n c_i \rho_i(t)\dot\rho_i(t)}{\varphi(t)}$ holds by the chain rule. Hence, since $\rho_{i,L} \leq \rho_i(t) \leq \rho_{i,U}$, $|\dot\rho_i(t)| \leq \rho^d_{i,U}$, and $\varphi(t)$ is uniformly bounded away from zero, $|\dot \varphi(t)|$ is bounded. 
\end{proof}

\subsection{Worst-case prescribed performance bounds}\label{Section: worst-case design of prescribed performance bounds}
Since $M_i$ is time-varying, condition \eqref{eq: constraint on the prescribed prformance functions} must hold for every $t \geq 0$. 
The matrix $M_i$ varies only through the switching signal $\sigma(t)$ and is constant for each graph $\mathcal{G}_l \in \mathbb{G}$. Therefore, a sufficient condition is obtained by replacing the time-varying $\sigma_{\min}(M_i)$ with its minimum over the finite set $\mathbb{G}$, which provides a uniform lower bound for all $\mathcal{G}_l \in \mathbb{G}$.
Hence, \eqref{eq: constraint on the prescribed prformance functions} is satisfied for all $t \geq 0$ and $j \in \Outhop{i}{k}$ if $\norm{\vect{\rho}_i(t)} \leq \min_{\mathcal{G}_l \in \mathbb{G}} \{\sigma_{\min}(M_i)\} \min_{j \in \Outhop{i}{k}} \{\delta_i^{j}(t)\}$.

When the set of admissible communication topologies $\mathbb{G}$ is known, the above condition can be evaluated directly. However, in many applications such information is unavailable or difficult to characterize. Thus, the following proposition establishes topology-independent worst-case bounds on the singular values of $M_i$ that are valid for every admissible communication graph with $N$ agents.
\begin{proposition}
    \label{Proposition: proof of proposition on norm(M_i^-1)}
    Irrespective of the set of admissible communication topologies $\mathbb{G}$, $M_i$ satisfies $\sigma_{\max}(M_i) = \sigma_{\min}(M_i) = 1$ if $N = 3$, and, for $N \geq 4$,
    \begin{equation}
        \label{eq: upper and lower bound on the singular values of M_i}
        \begin{split}
            \text{(i) }  &\sigma_{\max}(M_i) \leq 2N -5, \\
            \text{(ii) } &\sigma_{\min}(M_i) \geq \left(\frac{N-3}{(N-2)(N^2 - 3N+1)}\right)^{(N-3)/2},
        \end{split}
    \end{equation}
    where $N$ denotes the total number of agents.
\end{proposition}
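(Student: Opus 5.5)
The plan is to reduce everything to elementary entrywise bounds on the integer matrix $M_i$ and then invoke two standard norm and determinant inequalities. First I will pin down the size and the entries of $M_i$. Since we only consider agents with $\Outhop{i}{k}\neq\emptyset$, some $j$ is at distance at least $2$ from $i$. Hence $i$ has at least one direct out-neighbor, and $i$ and its $1$-hop out-neighbors are excluded from $\Outhop{i}{k}$. This gives $\eta_i \le N-2$.

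For a row $j\in\Outhop{i}{k}$, write $a_j := |\inN{j}\cap\Outhop{i}{k}| \le \eta_i-1$ and $b_j := |\inN{j}\cap\outN{i}|$. Since the agents counted by $b_j$ lie outside $\Outhop{i}{k}\cup\{i\}$, we have $b_j \le N-1-\eta_i$. Then $[M_i]_{jj}=a_j+b_j\le N-2$, and the off-diagonal entries of $M_i$ are in $\{0,-1\}$, with exactly $a_j$ nonzero entries in row $j$.

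\textbf{Case $N=3$.} Here $\eta_i=1$, so $a_j=0$. By Lemma~\ref{lemma: neighbors}(i), $b_j\ge 1$, and $b_j \le N-1-\eta_i = 1$. So $M_i=[1]$, and both singular values equal $1$.

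\textbf{Bound (i).} I will use $\sigma_{\max}(M_i)\le\sqrt{\norm{M_i}_1\norm{M_i}_\infty}$.
\begin{itemize}
\item The absolute row sum of row $j$ is $2a_j+b_j\le 2(\eta_i-1)+(N-1-\eta_i)=\eta_i+N-3\le 2N-5$.
\item For column $q$, the absolute sum is $a_q+b_q$ plus the number of out-neighbors of $q$ inside $\Outhop{i}{k}$. The latter is at most $\eta_i-1$, so the same estimate gives at most $2N-5$.
\end{itemize}
Hence both induced norms are at most $2N-5$, and so is $\sigma_{\max}(M_i)$.

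\textbf{Bound (ii).} I will use the Hong--Pan type inequality, valid for any nonsingular $A\in\mathbb{R}^{n\times n}$ with $n\ge2$:
\begin{equation*}
\sigma_{\min}(A)\ \ge\ |\det A|\left(\frac{n-1}{\norm{A}_F^2}\right)^{(n-1)/2}.
\end{equation*}
\begin{itemize}
\item \emph{Determinant.} $M_i$ has integer entries and is a nonsingular $M$-matrix by Lemma~\ref{Lemma: -M_i is Hirwitz} and Theorem~\ref{theorem: M-matrix}. Its determinant is therefore a positive integer, so $\det M_i\ge1$.
\item \emph{Frobenius norm.} Each row contributes at most $(N-2)^2+(\eta_i-1)$. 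Hence $\norm{M_i}_F^2\le \eta_i\big((N-2)^2+\eta_i-1\big)$.
\item \emph{Substitution at $\eta_i=N-2$.} This gives $\norm{M_i}_F^2\le (N-2)(N^2-3N+1)$ and exponent $(N-3)/2$, which is exactly the stated expression.
\end{itemize}

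\textbf{Main obstacle.} The hard part is showing that $\eta_i=N-2$ is the worst case for every $\eta_i\in\{1,\dots,N-2\}$.
\begin{itemize}
\item For $\eta_i=1$, we get $\sigma_{\min}=[M_i]_{jj}=b_j\ge1$, which dominates the stated bound because that bound is less than $1$.
\item For $2\le\eta_i\le N-3$, I will compare $g(\eta):=\big(\frac{\eta-1}{\eta((N-2)^2+\eta-1)}\big)^{(\eta-1)/2}$ with $g(N-2)$. The base is at most $1/(N-2)^2<1$. Raising a base below $1$ to a larger exponent makes it smaller, so the loss in exponent as $\eta$ grows should outweigh the small change in the base.
\item I would try to prove this via $\log g(\eta)$, showing $\log g(\eta)\ge\log g(N-2)$ by a direct estimate of the base, which stays in $[1/(2(N-2)^2+2),\,1/(N-2)^2]$.
\item If that fails, the fallback is to embed $M_i$ into an $(N-2)\times(N-2)$ matrix $\mathrm{diag}(M_i,I)$. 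This has the same $\sigma_{\min}$ whenever $\sigma_{\min}(M_i)\le1$, has determinant at least $1$, and has Frobenius norm within the worst-case bound, so the $\eta_i=N-2$ estimate applies directly.
\end{itemize}
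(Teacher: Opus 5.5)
Your proof is correct, and for bound (ii) it takes a genuinely different route from the paper. The $N=3$ case and bound (i) match the paper's argument: the same entry counts and the same estimate $\sigma_{\max}\le\sqrt{\norm{M_i}_1\norm{M_i}_\infty}$.

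For (ii), the paper uses the Hong--Pan inequality in its column/row-norm form, $\sigma_{\min}(A)\ge(\tfrac{n-1}{n})^{(n-1)/2}|\det A|\max\{c(A),r(A)\}$. It then argues that adding edges or increasing pinning gains can only decrease $c(M_i)$ and $r(M_i)$. This reduces the problem to a complete induced subgraph, in which every row and column norm equals $\sqrt{(N-2)^2+\eta_i-1}$.

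Your Frobenius form is elementary. By AM--GM on the $n-1$ largest singular values, $|\det A|=\sigma_{\min}(A)\prod_{l<n}\sigma_l(A)\le\sigma_{\min}(A)\,(\norm{A}_F^2/(n-1))^{(n-1)/2}$; you should write this line out so the proof is self-contained. Your entrywise row bound then replaces the paper's graph-perturbation argument outright. It is shorter and lands on exactly the same intermediate bound $g(\eta_i)$. What you call the main obstacle is handled in the paper by the factorization $g(\eta)=(\tfrac{\eta-1}{\eta})^{(\eta-1)/2}\,((N-2)^2+\eta-1)^{-(\eta-1)/2}$, which is a product of two positive decreasing factors.

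Your first idea for that step would not work as described. Suppose you use only the uniform range $[1/(2(N-2)^2+2),\,1/(N-2)^2]$ for the base. Then the comparison $g(\eta)\ge g(N-2)$ at $\eta=N-3$ requires $(2+2/(N-2)^2)^{N-4}\le(N-2)^2$. This already fails at $N=10$, because the uniform lower bound on the base loses roughly a factor $2^{(\eta-1)/2}$.

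Your padding fallback, by contrast, is sound:
\begin{itemize}
\item $\sigma_{\min}(\mathrm{diag}(M_i,I_{N-2-\eta_i}))=\min\{\sigma_{\min}(M_i),1\}$, and the target bound is below $1$.
\item The determinant is unchanged, so it is still at least $1$.
\item Each padded row adds $1\le(N-2)^2+N-3$ to the squared Frobenius norm, so that norm stays below $(N-2)(N^2-3N+1)$.
\end{itemize}
A single application of the inequality at $n=N-2$ then covers every $\eta_i$, including $\eta_i=1$, with no monotonicity in $\eta_i$ needed.
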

\begin{proof}
    Given in Appendix~\ref{Appendix: proof of proposition on norm(M_i^-1)}.
\end{proof}
Using the lower bound on $\sigma_{\min}(M_i) $, condition \eqref{eq: constraint on the prescribed prformance functions} is guaranteed for every communication topology provided that
\begin{equation}
    \label{Eq: rho design bound infinite-norm, case random graph}
    \|\vect{\rho}_i(t)\| \leq \gamma_i(N) \min_{j \in \Outhop{i}{k}} \{\delta_i^{j}(t)\},
\end{equation}
where
\begin{equation*}
    \gamma_i(N) := \begin{cases}
        \min_{\mathcal{G}_l \in \mathbb{G}}\{\sigma_{\min}(M_i)\}, & \mathbb{G} \text{ known},\\
        1, & N = 3,\\
        \left(\tfrac{N-3}{(N-2)(N^2-3N+1)}\right)^{\frac{N-3}{2}}, & N \geq 4,
    \end{cases}
\end{equation*}
with the last two cases applying when $\mathbb{G}$ is unknown.

Then, Problem~\ref{Problem state estimation error convergence} can be reformulated as:
\begin{problem}
    Given the heterogeneous MAS \eqref{eq: agent's dynamic} communicating over the directed switching graphs $\mathcal{G}_{\sigma(t)} \in \mathcal{G}$, and prescribed performance functions $\delta^{j}_i(t)$, design a decentralized \mbox{$k$-hop} observer such that $|\xi_i^{j}(t)| < \rho_i^{j}(t)$ holds for all $t \geq 0$, $i \in \mathcal{V}$ and $j \in \Outhop{i}{k}$, where the prescribed performance functions $\rho_i^{j}(t)$ satisfy \eqref{Eq: rho design bound infinite-norm, case random graph}.
\end{problem}

\begin{remark}
    Since all $\rho_i^{j}(t)$ and $\delta_i^{j}(t)$ are design choices, we can impose desired behavior on every $\error{x}{j}{i}(t)$.
\end{remark}

For $n_i > 1$, the same reasoning applies componentwise to $\error{x}{j}{i}(t)$. Specifically, the prescribed performance can be enforced by constraining each component $[\xi_{i}^{j}]_l$, $l \in \{1, \dots, n_i\}$, of $\xi_{i}^{j}$.

\section{Prescribed performance observer under switching topologies}
\label{Sec: Decentralized PPSO}
As discussed in Section~\ref{Section: state and input estimate}, changes in the dimension of the state estimation error vector $\error{\vect{x}}{}{i}$ may introduce discontinuous jumps at switching instants. Consequently, the estimation-error dynamics become hybrid, requiring the estimation errors to be analyzed both during continuous evolution and across switching instants. This section establishes conditions ensuring that the estimation errors preserve the desired prescribed performance guarantees despite the switching communication topology.

\subsection{Prescribed-performance function design}
\label{Section: Prescribed-performance funnel design}
Although the prescribed performance functions introduced in Section~\ref{Problem formulation} guarantee the desired transient and steady-state behavior during continuous evolution, switching instants may introduce discontinuities in the estimation errors. In particular, when the $k$-hop neighborhood of agent $i$ changes, newly included estimates can increase the disagreement terms $\xi_i^{j}$ defined
in \eqref{eq: disagreement vector expression}, thereby violating the prescribed performance constraints in
\eqref{eq: definition of the bounds on the disagreement term} immediately after a switch.

To address this issue, we develop a systematic procedure to construct auxiliary prescribed-performance functions that ensure the feasibility of the prescribed-performance constraints after each topology switch. Specifically, we replace the original performance bounds with lower envelopes that account for the worst-case variation of the active \mbox{$k$-hop} neighborhood.

To this end, we first show that the pointwise minimum of exponential prescribed performance functions admits an exponential lower approximation.
\begin{proposition}
    \label{Proposition: lower bound on the minimum prescribe performance}
    For each $j \in \mathcal{O}_i^{\cup}$, let
    \begin{equation*}
        \delta_i^j(t):=(\delta_i^j(0)-\delta_i^j(\infty))e^{-l_i^jt} +\delta_i^j(\infty),
    \end{equation*}
    where $\delta_i^j(0)>\delta_i^j(\infty)>0$, and $l_i^j>0$. Define $a_i := \min_{j\in\mathcal{O}_i^{\cup}} (\delta_i^j(0)-\delta_i^j(\infty))$, $b_i := \min_{j\in\mathcal{O}_i^{\cup}}\delta_i^j(\infty)$ and $\bar l_i := \max_{j\in\mathcal{O}_i^{\cup}} l_i^j$, and let $\bar\delta_i(t) := a_ie^{-\bar l_it}+b_i$. Then $\bar\delta_i(t)$ is a prescribed performance function such that $\bar\delta_i(t) \le \min_{j\in\Outhop{i}{k}(t)} \delta_i^j(t)$ for all $t\ge0$.
\end{proposition}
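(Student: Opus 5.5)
The argument has two parts. First we check that $\bar\delta_i$ satisfies Definition~\ref{Definition of Prescribed performance function}. Then we compare it termwise with each $\delta_i^j$ for $j\in\mathcal{O}_i^{\cup}$, and finally pass to the time-varying index set $\Outhop{i}{k}(t)$.

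\textbf{Prescribed performance property.} The set $\mathcal{O}_i^{\cup}\subseteq\mathcal{V}\setminus\{i\}$ is finite, and we may assume it is nonempty (we only consider agents with nonempty $k$-hop out-neighbor sets). Hence $a_i$, $b_i$ and $\bar l_i$ are attained minima and maxima. It follows that $a_i>0$, $b_i>0$ and $\bar l_i>0$, since each $\delta_i^j(0)-\delta_i^j(\infty)>0$, each $\delta_i^j(\infty)>0$ and each $l_i^j>0$. Then $\bar\delta_i(t)=a_ie^{-\bar l_i t}+b_i$ has the following properties:
\begin{itemize}
\item it is $\mathcal{C}^1$;
\item it satisfies $b_i\le\bar\delta_i(t)\le a_i+b_i$, which gives $\rho_L=b_i>0$ and $\rho_U=a_i+b_i<\infty$;
\item its derivative satisfies $|\dot{\bar\delta}_i(t)|=a_i\bar l_ie^{-\bar l_it}\le a_i\bar l_i$.
\end{itemize}
Together these verify (i)--(iii) of Definition~\ref{Definition of Prescribed performance function}.

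\textbf{Termwise comparison.} Fix $j\in\mathcal{O}_i^{\cup}$ and $t\ge0$. By definition, $a_i\le \delta_i^j(0)-\delta_i^j(\infty)$ and $b_i\le\delta_i^j(\infty)$. Since $\bar l_i\ge l_i^j$ and $t\ge0$, we have $e^{-\bar l_i t}\le e^{-l_i^j t}$. Because all quantities are nonnegative, we can multiply these inequalities and add them, which gives
\begin{equation*}
\bar\delta_i(t)=a_ie^{-\bar l_it}+b_i\le(\delta_i^j(0)-\delta_i^j(\infty))e^{-l_i^jt}+\delta_i^j(\infty)=\delta_i^j(t).
\end{equation*}
Hence $\bar\delta_i(t)\le\min_{j\in\mathcal{O}_i^{\cup}}\delta_i^j(t)$.

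\textbf{Restriction to the active neighborhood.} By construction, $\Outhop{i}{k}(t)\subseteq\mathcal{O}_i^{\cup}$ for every $t\ge0$. A minimum over a subset is at least the minimum over the whole set, so
\begin{equation*}
\min_{j\in\Outhop{i}{k}(t)}\delta_i^j(t)\ge\min_{j\in\mathcal{O}_i^{\cup}}\delta_i^j(t)\ge\bar\delta_i(t),
\end{equation*}
which is the claim.

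The only care point is the non-emptiness and finiteness of the index sets, which ensures that the minima and maxima are attained and positive. The rest of the argument is elementary monotonicity.
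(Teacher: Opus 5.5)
Your proof is correct and follows essentially the same route as the paper. Both compare $\bar\delta_i$ termwise with each $\delta_i^j$, $j\in\mathcal{O}_i^{\cup}$, using $a_i\le\delta_i^j(0)-\delta_i^j(\infty)$, $b_i\le\delta_i^j(\infty)$ and $e^{-\bar l_i t}\le e^{-l_i^j t}$, and then pass to $\Outhop{i}{k}(t)\subseteq\mathcal{O}_i^{\cup}$; you additionally spell out the explicit bounds verifying Definition~\ref{Definition of Prescribed performance function} and the finiteness and nonemptiness of the index set, which the paper only asserts through $a_i,b_i,\bar l_i>0$.
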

\begin{proof}
    Given in Appendix~\ref{Appendix: proof of proposition lower bound on the minimum prescribe performance}.
\end{proof}

Proposition~\ref{Proposition: lower bound on the minimum prescribe performance} provides an exponential lower bound $\bar\delta_i(t) := a_ie^{-\bar l_it}+b_i$ for $\min_{j\in\Outhop{i}{k}} \delta_i^j(t)$. Since $\vect{\rho}_i(t)$ has $\eta_i$ components and $\eta_i \leq N-2$, its Euclidean norm satisfies $\|\vect{\rho}_i(t)\| \leq \sqrt{\eta_i}\|\vect{\rho}_i(t)\|_\infty \leq \sqrt{N-2} \|\vect{\rho}_i(t)\|_\infty$. Therefore, condition \eqref{Eq: rho design bound infinite-norm, case random graph} can be enforced by selecting 
\begin{equation}
    \label{eq: design of rho_i based on the minimum delta}
    \rho_i^j(t) = \rho_i(t):=\frac{\gamma_i(N)}{\sqrt{N-2}}\bar\delta_i(t)
\end{equation}
for all $j\in\Outhop{i}{k}$ and $t\geq0$. With this choice, the prescribed performance functions $\rho_i^j(t)$  retain the exponential form of the lower bound $\bar\delta_i(t)$, i.e., $\rho_i^j(t) = (\rho^j_i(0)- \rho^j_i(\infty))e^{-\bar l_i t}+\rho^j_i(\infty)$.

To cope with the hybrid nature of the problem while preserving the prescribed performance constraints in \eqref{eq: definition of the bounds on the disagreement term}, we introduce a family of prescribed performance functions that are updated at each switching instant. Specifically, for each interval $[t_{\tau}, t_{\tau+1})$, we define $\underline{\rho}^{j}_i(t, t_{\tau}) := (\underline{\rho}^{j}_{i,\tau}(t_\tau) - \underline{\rho}^{j}_{i, \tau}(\infty)) e^{- \underline{l}^{j}_{i,\tau} (t- t_{\tau})} + \underline{\rho}^{j}_{i, \tau}(\infty)$, with $\underline{\rho}^{j}_{i,\tau}(t_\tau) = \rho^{j}_i(t_{\tau})$ from \eqref{eq: definition of the bounds on the disagreement term}, $\underline{l}^{j}_{i, \tau} \geq \overline{l}_{i}$ and $0 < \underline{\rho}^{j}_{i, \tau}(\infty) \leq {\rho}^{j}_{i} (\infty)$. The following proposition establishes a key property of $\underline{\rho}^{j}_i(t, t_{\tau})$.
\begin{proposition}
    \label{Proposition: underline rho lower than rho}
    Let $\rho^j_i(t)$ and $\underline{\rho}^{j}_i(t, t_{\tau})$ be defined as above. Then, for every $t_\tau$ and all $t \geq t_\tau$,
    $\underline{\rho}^{j}_i(t, t_{\tau}) \leq \rho^j_i(t)$.
\end{proposition}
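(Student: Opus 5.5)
We compare the two exponentials on $[t_\tau,\infty)$ directly, working with the explicit form of $\rho^j_i$ from \eqref{eq: design of rho_i based on the minimum delta}. That choice gives $\rho^j_i(t) = (\rho^j_i(0)-\rho^j_i(\infty))e^{-\bar l_i t}+\rho^j_i(\infty)$ with $\rho^j_i(0)\ge \rho^j_i(\infty)>0$. Shifting time, for $t\ge t_\tau$ we rewrite it as
\begin{equation*}
\rho^j_i(t) = \big(\rho^j_i(t_\tau)-\rho^j_i(\infty)\big)e^{-\bar l_i (t-t_\tau)} + \rho^j_i(\infty).
\end{equation*}
This holds because $\rho^j_i(t_\tau)-\rho^j_i(\infty) = (\rho^j_i(0)-\rho^j_i(\infty))e^{-\bar l_i t_\tau}$. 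Both functions then start from the same value $\rho^j_i(t_\tau)=\underline{\rho}^j_{i,\tau}(t_\tau)$ at $t=t_\tau$, so it suffices to compare their evolution for $s:=t-t_\tau\ge0$.

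Set $A:=\rho^j_i(t_\tau)$, $r:=\rho^j_i(\infty)$ and $\underline r:=\underline{\rho}^j_{i,\tau}(\infty)$, so that $0<\underline r\le r\le A$. The difference is
\begin{equation*}
\rho^j_i(t)-\underline{\rho}^j_i(t,t_\tau) = (A-r)e^{-\bar l_i s} + r - (A-\underline r)e^{-\underline l^j_{i,\tau} s} - \underline r .
\end{equation*}
We regroup it as
\begin{equation*}
(A-r)\big(e^{-\bar l_i s}-e^{-\underline l^j_{i,\tau} s}\big) + (r-\underline r)\big(1-e^{-\underline l^j_{i,\tau} s}\big).
\end{equation*}
The first term is nonnegative because $A\ge r$ and $\underline l^j_{i,\tau}\ge\bar l_i$ imply $e^{-\bar l_i s}\ge e^{-\underline l^j_{i,\tau}s}$ for $s\ge0$. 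The second term is nonnegative because $r\ge\underline r$ and $e^{-\underline l^j_{i,\tau}s}\le1$. Hence $\underline{\rho}^j_i(t,t_\tau)\le\rho^j_i(t)$ for all $t\ge t_\tau$.

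The only point that needs care is the sign of $A-r$, that is, $\rho^j_i(t_\tau)\ge\rho^j_i(\infty)$. This follows from $\rho^j_i(0)\ge\rho^j_i(\infty)$, which holds because $\bar\delta_i$ in Proposition~\ref{Proposition: lower bound on the minimum prescribe performance} has $a_i>0$ and $\rho^j_i$ is a positive multiple of $\bar\delta_i$. With the regrouping above, no case split on the relative sizes of the rates or asymptotes is needed.
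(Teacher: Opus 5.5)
Your proof is correct and follows essentially the same route as the paper: shift time to $s=t-t_\tau$, rewrite $\rho^j_i$ as an exponential starting from $\rho^j_i(t_\tau)$, and compare the two using $\underline l^j_{i,\tau}\ge\bar l_i$ and $0<\underline\rho^j_{i,\tau}(\infty)\le\rho^j_i(\infty)\le\rho^j_i(t_\tau)$. The only difference is cosmetic. The paper bounds $(A-\underline r)e^{-\underline l^j_{i,\tau}s}\le(A-\underline r)e^{-\bar l_i s}$ to reduce the difference to $(r-\underline r)(1-e^{-\bar l_i s})\ge 0$, whereas you split the difference exactly into two nonnegative terms.
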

\begin{proof}
    Given in Appendix~\ref{Appendix: proof of underline rho lower than rho}.
\end{proof}
\begin{figure}[t!]
    \centering
    \includegraphics[width=1\linewidth]{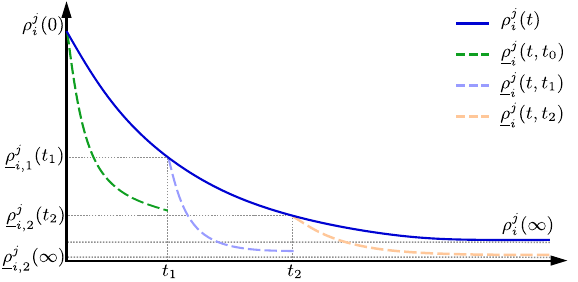}
    \caption{Prescribed performance funnels design.}
    \label{fig: Funnels design}
\end{figure}
Fig.~\ref{fig: Funnels design} illustrates an example of the prescribed performance funnels resulting from the proposed design.
\begin{remark}
    \label{remark: remark on equal rho for all k-hop neighbors}
    Since $\rho_i^j(t)=\rho_i(t)$ by design, the updated prescribed performance functions can be chosen identically for all $k$-hop out-neighbors, i.e., $\underline{\rho}^{j}_i(t, t_{\tau}) = \underline{\rho}_i(t, t_{\tau})$ for all $j \in \Outhop{i}{k}$.
\end{remark}

Proposition~\ref{Proposition: underline rho lower than rho} establishes that $\underline{\rho}^{j}_i(t, t_{\tau}) \leq \rho^{j}_i(t)$ for all $t_\tau$ and $t \geq t_\tau$. Thus, enforcing
\begin{equation}
    \label{eq: final constraint to be satisfied by the disagreement vector}
    |\xi^j_i| < \underline{\rho}^{j}_i(t, t_{\tau}) 
\end{equation}
for all $t\in [t_{\tau}, t_{\tau+1})$ and $j \in \Outhop{i}{k}$ guarantees that the prescribed performance constraints \eqref{eq: definition of the bounds on the disagreement term} are satisfied. The remaining challenge is to determine the condition under which this inequality is satisfied immediately after each topology switch. The following subsection derives sufficient conditions for this purpose.
Given Remark~\ref{remark: remark on equal rho for all k-hop neighbors}, we replace $\rho_i^j(t)$ and $\underline{\rho}^{j}_i(t, t_{\tau})$ with $\rho_i(t)$ and $\underline{\rho}_i(t, t_{\tau})$ in the following sections.

\subsection{Conditions for prescribed performance preservation}
To guarantee that \eqref{eq: final constraint to be satisfied by the disagreement vector}, and thus \eqref{eq: definition of the bounds on the disagreement term}, is satisfied after every topology switch, the following should hold for all $i \in \mathcal{V}$ and $t_\tau$:
\begin{equation}
    \label{eq: general condition to be satisfied at the stime of switch}
    |\vect{\xi}_i(t_{\tau}^+)| < \rho_i(t_{\tau}) 1_{\eta_i}.
\end{equation}

From \eqref{eq: State estimation error after the switch} and \eqref{eq: disagreement vector expression}, the disagreement vector $\vect{\xi}_i(t_{\tau}^+)$ satisfies $\vect{\xi}_i(t_{\tau}^+) = M_i(t_{\tau}^+) \Xi_i^\tau  \error{\vect{x}}{}{i}(t_{\tau}^-) + M_i(t_{\tau}^+) \phi_i(t_{\tau})$. Substituting $\error{\vect{x}}{}{i}(t_{\tau}^-) = M_i^{-1}(t_{\tau}^-)\vect{\xi}_i(t_{\tau}^-)$ gives
\begin{equation}
    \label{eq: original condition that should hold at the switching instant}
    \hspace{-0.25cm}
     |M_i(t_{\tau}^+) \Xi_i^\tau M_i^{-1}(t_{\tau}^-)\vect{\xi}_i(t_{\tau}^-) +  M_i(t_{\tau}^+) \phi_i(t_{\tau})| < \rho_i(t_\tau) 1_{\eta_i}.
\end{equation}

We next derive uniform bounds for the two terms on the left-hand side.
Similarly to \eqref{Eq: rho design bound infinite-norm, case random graph}, define
\begin{equation}
    \label{eq: alpha(N) definition}
    \alpha_i(N) := \begin{cases}
        \max_{\mathcal{G}_{l} \in \mathbb{G}} \{\sigma_{\max}(M_i)\}, &\text{ if $\mathbb{G}$ is known,} \\
        2N-5, &\text{ otherwise,}
    \end{cases}
\end{equation}
which provides a uniform upper bound on  $\norm{M_i}$ over all admissible communication graphs.

For the switching-induced term, assume that $\norm{\phi_i(t_{\tau})} \leq \phi_i^{\max}$, where $\phi_i(t_\tau)$ collects the estimation errors associated with the newly added $k$-hop out-neighbors. As discussed in Remark~\ref{Remark: phi can be made arbitrarily small}, $\phi_i^{\max}$ can be made arbitrarily small through appropriate initialization and design of the prescribed performance functions. Under these bounds, the following
lemma provides a sufficient \mbox{dwell-time} condition guaranteeing preservation
of the prescribed performance constraints across topology switches.

\begin{lemma}
    \label{Lemma: prescribed performance preservation at switching instants}
     Let $\gamma_i(N)$ and $\alpha_i(N)$ be defined as in \eqref{Eq: rho design bound infinite-norm, case random graph} and \eqref{eq: alpha(N) definition}, respectively, and define $\theta_i(N):=\sqrt{N-2}\gamma_i^{-1}(N)\alpha_i(N)$ together with 
    \begin{equation*}
        \omega_{i,\tau}(t,N):= \rho_i(t) -\alpha_i(N)\phi_i^{\max} -\theta_i(N)\underline{\rho}_{i,\tau-1}(\infty).
    \end{equation*}
    For every $i\in \mathcal{V}$, let the prescribed-performance function $\underline{\rho}_i(t, t_{\tau-1})$ be designed as in Section~\ref{Section: Prescribed-performance funnel design}. Assume $\norm{\phi_i(t_{\tau})} \leq \phi_i^{\max}$, and that
    \begin{equation}
        \label{eq: asymptotic feasibility condition}
        \omega_{i,\tau}(\infty,N):= \lim_{t\rightarrow \infty} \omega_{i,\tau}(t,N) > 0
    \end{equation}    
    is satisfied by appropriate design of $\rho_i(t)$ and $\underline{\rho}_{i,\tau-1}(t)$. For each agent $i$, define the function
    \small
    \begin{equation}
        \label{eq: def of F(T)}
        F_{i,\tau}(T) := \frac{1}{\underline l_{i,\tau-1}}\ln\left(\frac{\theta_i(N)\big(\rho_i(t_{\tau-1})- \underline\rho_{i,\tau-1}(\infty)\big)}{\omega_{i,\tau}(t_{\tau-1}+T,N)}\right)
    \end{equation}
    \normalsize
    \noindent and the critical dwell-time threshold $T_{i,\tau}^{\min} := \inf\big\{T\geq0 : T' > F_{i,\tau}(T') \;\;\forall\, T'\geq T \big\}$. Then, if the
    dwell time $T_\tau := t_\tau - t_{\tau-1}$ satisfies
    \begin{equation}
        \label{eq: dwell time condition}
        T_\tau > \max_{i\in \mathcal{V}}\{T_{i,\tau}^{\min}\},
    \end{equation}
    and \eqref{eq: final constraint to be satisfied by the disagreement vector} holds for every $i\in \mathcal{V}$ and $t \in [t_{\tau-1}, t_{\tau})$, condition \eqref{eq: general condition to be satisfied at the stime of switch} is satisfied for every $i$.
\end{lemma}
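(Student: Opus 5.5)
We will bound the left-hand side of \eqref{eq: original condition that should hold at the switching instant} componentwise by its Euclidean norm and show that this norm is strictly below $\rho_i(t_\tau)$. Since each entry of a vector is bounded in absolute value by the vector's Euclidean norm, it suffices to prove
\begin{equation*}
\norm{M_i(t_\tau^+)\Xi_i^\tau M_i^{-1}(t_\tau^-)\vect{\xi}_i(t_\tau^-)} + \norm{M_i(t_\tau^+)\phi_i(t_\tau)} < \rho_i(t_\tau).
\end{equation*}

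We bound the operator norms first. By \eqref{eq: alpha(N) definition}, $\norm{M_i(t_\tau^+)}\le\alpha_i(N)$. By the definition of $\gamma_i(N)$ in \eqref{Eq: rho design bound infinite-norm, case random graph}, together with Proposition~\ref{Proposition: proof of proposition on norm(M_i^-1)} or the minimum over a known $\mathbb{G}$, we have $\norm{M_i^{-1}(t_\tau^-)}=\sigma_{\min}^{-1}(M_i(t_\tau^-))\le\gamma_i^{-1}(N)$. The matrix $\Xi_i^\tau$ is a selection/reordering matrix, so each of its rows has at most one unit entry and its columns are distinct; hence $\norm{\Xi_i^\tau}\le1$. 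The second term is therefore at most $\alpha_i(N)\phi_i^{\max}$.

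For the first term, we use the hypothesis that \eqref{eq: final constraint to be satisfied by the disagreement vector} holds on $[t_{\tau-1},t_\tau)$. By continuity of $\xi$ within an interval, this gives $|\xi_i^j(t_\tau^-)|\le\underline\rho_i(t_\tau,t_{\tau-1})$ for every $j$. Since $\eta_i\le N-2$, it follows that $\norm{\vect{\xi}_i(t_\tau^-)}\le\sqrt{N-2}\,\underline\rho_i(t_\tau,t_{\tau-1})$, and so the first term is at most $\theta_i(N)\underline\rho_i(t_\tau,t_{\tau-1})$. Substituting $\underline\rho_{i,\tau-1}(t_{\tau-1})=\rho_i(t_{\tau-1})$ and writing $T=T_\tau$, a sufficient condition is
\begin{equation*}
\theta_i(N)\big(\rho_i(t_{\tau-1})-\underline\rho_{i,\tau-1}(\infty)\big)e^{-\underline l_{i,\tau-1}T} < \omega_{i,\tau}(t_{\tau-1}+T,N).
\end{equation*}
The only potential issue in the limit step is a non-strict inequality from $\xi(t_\tau^-)$. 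The strict inequality required in \eqref{eq: general condition to be satisfied at the stime of switch} is nevertheless recovered, because the final condition we verify is strict.

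The main obstacle is converting this inequality into the dwell-time statement. The issue is that $\omega_{i,\tau}$ depends on $T$ through $\rho_i$ and is not monotone a priori. The case $\rho_i(t_{\tau-1})=\underline\rho_{i,\tau-1}(\infty)$ is trivial, since then the left-hand side vanishes and $\omega>0$ by the argument below. Otherwise, when $\omega_{i,\tau}(t_{\tau-1}+T,N)>0$, taking logarithms and dividing by $\underline l_{i,\tau-1}>0$ shows the condition is equivalent to $T>F_{i,\tau}(T)$ with $F_{i,\tau}$ as in \eqref{eq: def of F(T)}.

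It remains to establish positivity of $\omega$ and that $T_\tau$ lies in the good set. Since $\rho_i$ is exponential and decreasing (here $\rho_i(0)\ge\rho_i(\infty)$), $\omega_{i,\tau}(t,N)$ is nonincreasing in $t$ and converges to $\omega_{i,\tau}(\infty,N)>0$ by \eqref{eq: asymptotic feasibility condition}. Hence $\omega_{i,\tau}(t,N)>0$ for all $t$, so $F_{i,\tau}$ is well defined and bounded above on $[0,\infty)$. Consequently $T'>F_{i,\tau}(T')$ holds for all large $T'$, and $T_{i,\tau}^{\min}$ is finite.

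Now suppose $T_\tau>\max_i T_{i,\tau}^{\min}$. By the definition of the infimum, there is some $T\le T_\tau$ with $T'>F_{i,\tau}(T')$ for all $T'\ge T$. In particular $T_\tau>F_{i,\tau}(T_\tau)$ for every $i$, which yields \eqref{eq: general condition to be satisfied at the stime of switch} for every agent.
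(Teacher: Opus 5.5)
Your proposal is correct and follows essentially the same route as the paper: the triangle and submultiplicative bound with $\norm{M_i(t_\tau^+)}\le\alpha_i(N)$, $\norm{M_i^{-1}(t_\tau^-)}\le\gamma_i^{-1}(N)$ and $\norm{\Xi_i^\tau}\le 1$, the estimate $\norm{\vect{\xi}_i(t_\tau^-)}\le\sqrt{N-2}\,\underline\rho_i(t_\tau,t_{\tau-1})$, and the reduction to $T>F_{i,\tau}(T)$. You then close it, as the paper does, via monotonicity and boundedness of $F_{i,\tau}$ and the infimum argument. Your separate handling of the left limit and of the degenerate case $\rho_i(t_{\tau-1})=\underline\rho_{i,\tau-1}(\infty)$ only makes explicit what the paper covers in its well-posedness remark.
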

\begin{proof}
    Given in Appendix~\ref{Appendix: proof of lemma prescribed performance preservation at switching instants}.
\end{proof}

Condition~\eqref{eq: asymptotic feasibility condition} requires that the steady-state prescribed performance margin remains strictly positive after accounting for the worst-case switching-induced perturbation. Consequently, it is a sufficient condition for the existence of a finite dwell time satisfying~\eqref{eq: dwell time condition}. Topology transitions violating this condition are not certified by Lemma~\ref{Lemma: prescribed performance preservation at switching instants} and are therefore not guaranteed to preserve the prescribed performance.

\begin{remark}
    \label{remark: well-poseness of F}
    The function $F_{i,\tau}(T)$ in \eqref{eq: def of F(T)} is well defined for all $T\geq0$. Indeed, since $\rho_i(t)$ decreases exponentially to $\rho_i(\infty)$, and $\omega_{i,\tau}(t,N)$ differs from $\rho_i(t)$ only by a constant $\alpha_i(N)\phi_i^{\max} +\theta_i(N)\underline{\rho}_{i,\tau-1}(\infty)$, \eqref{eq: asymptotic feasibility condition} implies $\omega_{i,\tau}(t,N)\geq\omega_{i,\tau}(\infty,N)>0$ for all $t \geq t_{\tau-1}$. Moreover, Proposition~\ref{Proposition: underline rho lower than rho} together with the design of $\rho_i(t)$ gives $\rho_i(t_{\tau-1})>\underline{\rho}_{i,\tau-1}(\infty)$. Hence, since $\theta_i(N) >0$ and $\underline{l}_{i,\tau-1}> 0$, the argument of the logarithm in \eqref{eq: def of F(T)} is positive, and $F_{i,\tau}(T)$ is well defined.
\end{remark}
\begin{remark}
    \label{Remark: phi can be made arbitrarily small}
    Since the condition $\norm{\phi_i(t_{\tau})} \leq \phi_i^{\max}$ constrains the norm of the estimation errors of the newly added $k$-hop out-neighbors, condition~\eqref{eq: asymptotic feasibility condition} may appear restrictive. However, since $\underline{\rho}_{i,\tau-1}(\infty)$ and $\rho_i(t)$ are design parameters, it can in general be satisfied through appropriate design. In particular, suppose that, for every newly added $k$-hop neighbor $j \in \Outhop{i}{k}(t_{\tau}) \setminus \Outhop{i}{k}(t_{\tau-1})$, there exists an agent $l$ such that either (i) $l \in \inN{j}(t_{\tau}) \cap \outN{i}(t_\tau)$, or (ii) $ l \in \inN{j}(t_{\tau}) \cap \Outhop{i}{k}(t_{\tau-1})$. In case (i), $l$ has access to $x_i$ by Assumption~\ref{Assumption on 2-hop communication}, so that initializing $\hat{x}^j_i = x_i$ yields $\tilde{x}^j_i(t_\tau) = 0$. In case (ii), initializing $\hat{x}^j_i = \hat{x}^l_i$ and choosing $\underline{\rho}_{i,\tau-1}(\infty)$ sufficiently small makes the initialization error arbitrarily small.

    Indeed, if $\underline{\rho}_{i,\tau-1}(\infty) \simeq 0$ and \eqref{eq: final constraint to be satisfied by the disagreement vector} holds in $[t_{\tau-1}, t_\tau)$, then \eqref{eq: disagreement vector expression} and Lemma~\ref{Lemma: -M_i is Hirwitz} imply $\error{x}{l}{i} \simeq 0$ for all $l\in\Outhop{i}{k}(t_{\tau-1})$. Since both $j$ and $l$ estimate $x_i$, the initialization above yields $\error{x}{j}{i}(t_\tau)=\error{x}{l}{i}(t_\tau)\simeq0$. Hence, $\norm{\phi_i(t_\tau)}$ can be made arbitrarily small, and the condition $\norm{\phi_i(t_\tau)}\leq\phi_i^{\max}$ can be satisfied for any desired $\phi_i^{\max}>0$.
\end{remark}

Lemma~\ref{Lemma: prescribed performance preservation at switching instants} guarantees the preservation of the prescribed-performance constraints across topology switches. Section~\ref{Section: observer design} presents the observer design that ensures satisfaction of these constraints between consecutive switching instants.

\subsection{Prescribed performance observer design}
\label{Section: observer design}
The objective of the observer is to ensure that the disagreement vector $\vect{\xi}_i$ satisfies \eqref{eq: final constraint to be satisfied by the disagreement vector}. For each time interval $[t_{\tau}, t_{\tau +1})$, define $e_i^{j}(t, t_\tau) \in (-1,1)$ as the normalization of $\xi_{i}^{j}(t)$ with respect to $\underline{\rho}_i(t, t_\tau)$, i.e., $e_i^{j}(t, t_\tau) :=\underline{\rho}_i(t, t_\tau)^{-1}\xi_{i}^{j}(t)$, and let $\Upsilon: (-1,1) \rightarrow \mathbb{R}$ be a strictly increasing transformation satisfying $\Upsilon(0) = 0$.
Following the PPC framework \cite{BechlioulisPPC}, define the transformed normalized disagreement for all $i \in \mathcal{V}$ and $j\in \Outhop{i}{k}$ as:
\begin{equation}
    \label{eq: transformed normalized disagreement}
    \epsilon^{j}_i(t,t_\tau) := \Upsilon(e^{j}_i(t, t_{\tau})) = \Upsilon(\underline{\rho}_i(t, t_\tau)^{-1}\xi_{i}^{j}).
\end{equation}
In this work, we select $\Upsilon(e) = \ln(\frac{1+e}{1-e})$, whose derivative is strictly positive and given by $ J_{\Upsilon}(e) = \frac{2}{1-e^2}$. 

\begin{remark}
    \label{remark: t_tau as aparameter wtithin the time window}
    For $t\in[t_\tau,t_{\tau+1})$, the switching instant $t_\tau$ is fixed and treated as a parameter in $\epsilon_i^j(t,t_\tau)$. Accordingly, all derivatives are taken with respect to $t$.
\end{remark}
Based on \eqref{eq: transformed normalized disagreement}, the proposed $k$-hop prescribed performance state observer ($k$-hop PPSO) is defined as:
\begin{equation}
    \label{Eq: State observer dynamics in components}
    \dot{\hat{x}}_{i}^{j} = f_i(\estimate{x}{j}{i}) - \underline{\rho}_i(t, t_{\tau})^{-1} J_{\Upsilon}(e_i^{j}) \epsilon_{i}^{j}(t, t_{\tau}).
\end{equation}
Stacking \eqref{Eq: State observer dynamics in components} for all $j \in \Outhop{i}{k}$ yields
\begin{equation}
    \label{Eq: state estimation dynamics in vector form}
    \dot{\hat{\vect{x}}}_i =\vect{f}_i(\estimate{\vect{x}}{}{i}) -  \underline{\vect{P}}_i^{-1} \vect{J}_i \vect{\epsilon}_i,
\end{equation}
where $\hat{\vect{x}}_i$ is as per \eqref{eq: estimation of state and input of agent i}, $\vect{f}_i(\estimate{\vect{x}}{}{i}) := [\state{f}{}{i}(\estimate{x}{j}{i})]_{j \in \Outhop{i}{k}}$, $\underline{\vect{P}}_i:= \underline{\rho}_i(t, t_\tau) I_{\eta_i}$, $\vect{J}_i := \text{diag}(J_{\Upsilon}(e^{j}_i))_{j \in \Outhop{i}{k}}$, and $\vect{\epsilon}_{i}  := [\epsilon^{j}_i]_{j \in \Outhop{i}{k}}$.

The following theorem establishes that the proposed k-hop PPSO guarantees satisfaction of \eqref{eq: final constraint to be satisfied by the disagreement vector}.

\begin{theorem}
    \label{Theorem: main theorem on state estimation convergence in time varying case}
    Consider the heterogeneous MAS \eqref{eq: agent's dynamic} communicating over the directed switching graph $\mathcal{G}_{\sigma(t)} \in \mathbb{G}$, equipped with the decentralized state observers \eqref{Eq: State observer dynamics in components} and prescribed performance functions $\delta^{j}_i(t)$. Assume that, for every $i \in \mathcal{V}$ and $t_{\tau}$, $\norm{\phi_i(t_{\tau})} \leq \phi_i^{\max}$, with $\phi_i^{\max} \in \mathbb{R}_{\geq 0}$. Under \mbox{Assumptions~\ref{Assumption on existence of a solution}-\ref{Assumption on 2-hop communication}}, if the prescribed performance functions are designed as in Section~\ref{Section: Prescribed-performance funnel design}, the dwell times satisfy \eqref{eq: dwell time condition}, and the disagreement vector is initialized such that $|\xi_i^j(0)|<\rho_i(0)$, then $|\tilde{x}_{i}^{j}(t)| < \delta^{j}_i(t)$ holds for all $t \geq 0$, $i \in \mathcal{V}$ and $j \in \Outhop{i}{k}(t)$.
\end{theorem}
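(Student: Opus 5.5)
We argue by induction over the switching intervals $[t_\tau,t_{\tau+1})$, combining a standard prescribed-performance (PPC) argument inside each interval with Lemma~\ref{Lemma: prescribed performance preservation at switching instants} at the switching instants.

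\textbf{Inductive hypothesis and base case.} The hypothesis at step $\tau$ is $|\xi_i^j(t_\tau^+)|<\rho_i(t_\tau)=\underline{\rho}_i(t_\tau,t_\tau)$ for all $i$ and $j\in\Outhop{i}{k}(t_\tau)$. For $\tau=0$ this is exactly the initialization assumption $|\xi_i^j(0)|<\rho_i(0)$.

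\textbf{Conclusion from the inductive claim.} Assume we have shown that, for every $\tau$, $|\xi_i^j(t)|<\underline{\rho}_i(t,t_\tau)$ on $[t_\tau,t_{\tau+1})$. Then Proposition~\ref{Proposition: underline rho lower than rho} gives $|\xi_i^j(t)|<\rho_i(t)$. Because $\rho_i$ is designed via \eqref{eq: design of rho_i based on the minimum delta} and Proposition~\ref{Proposition: lower bound on the minimum prescribe performance}, condition \eqref{Eq: rho design bound infinite-norm, case random graph} holds, and hence so does \eqref{eq: constraint on the prescribed prformance functions}. The argument of Section~\ref{Section: State-to-disagreement performance bounds} then gives $\norm{\error{\vect{x}}{}{i}}\le\sigma_{\min}^{-1}(M_i)\norm{\vect{\xi}_i}<\min_j\delta_i^j(t)$. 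This yields $|\tilde x_i^j(t)|<\delta_i^j(t)$.

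\textbf{Step within an interval.} On a fixed interval, with $t_\tau$ treated as a parameter (Remark~\ref{remark: t_tau as aparameter wtithin the time window}), the graph is constant and so is $M_i$. Differentiate \eqref{eq: disagreement vector expression} to get $\dot{\vect{\xi}}_i=M_i(\dot{\hat{\vect{x}}}_i-1_{\eta_i}\dot x_i)$. Substituting \eqref{Eq: state estimation dynamics in vector form} and \eqref{eq: agent's dynamic} gives
\[
\dot{\vect{\xi}}_i=M_i\big(\vect{f}_i(\hat{\vect{x}}_i)-1_{\eta_i}f_i(x_i)-1_{\eta_i}(g_i+w_i)\big)-\underline{\rho}_i^{-1}M_i\vect{J}_i\vect{\epsilon}_i .
\]
The open set $\Omega=\{|e_i^j|<1\}$ is nonempty at $t_\tau$ by the inductive hypothesis. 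On $\Omega$ the right-hand side of the closed loop, written in $(\vect{x},\hat{\vect{x}})$, is locally Lipschitz in the observer state and measurable in $t$, so a maximal solution exists on some $[t_\tau,t_{\max})$ and remains in $\Omega$ (Assumptions~\ref{Assumption on existence of a solution}--\ref{Assumption on g}, Remark~\ref{remark: agent trajectories exist}).

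Next, take the Lyapunov candidate $V=\tfrac12\vect{\epsilon}_i^\top A_i\vect{\epsilon}_i$, with $A_i$ the positive diagonal matrix from Corollary~\ref{Corollary: existence of a diagonal matrix P that render an M-matrix Hurwitz}. This corollary applies because $M_i$ is a nonsingular M-matrix (Lemma~\ref{Lemma: -M_i is Hirwitz}, Theorem~\ref{theorem: M-matrix}). We have $\dot{\vect{\epsilon}}_i=\vect{J}_i\underline{\rho}_i^{-1}(\dot{\vect{\xi}}_i-\dot{\underline{\rho}}_i\vect{e}_i)$. Using $B_i=A_iM_i+M_i^\top A_i\succ0$, the feedback term contributes $-\tfrac12\underline{\rho}_i^{-2}(\vect{J}_i\vect{\epsilon}_i)^\top B_i(\vect{J}_i\vect{\epsilon}_i)\le-c\norm{\vect{J}_i\vect{\epsilon}_i}^2$.

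The remaining terms must be shown bounded, which gives $\dot V\le-c\norm{\vect{J}_i\vect{\epsilon}_i}^2+\bar F\norm{\vect{J}_i\vect{\epsilon}_i}$. Three facts give this. First, $|e_i^j|<1$ bounds $\dot{\underline{\rho}}_i\vect{e}_i$. Second, $g_i$ and $w_i$ are bounded. Third, $\vect{f}_i(\hat{\vect{x}}_i)-1f_i(x_i)$ has entries $f_i(\hat x_i^j)-f_i(x_i)$, bounded by $\ell_i(|\tilde x_i^j|)$ via Assumption~\ref{Assumption on g}(i), and $|\tilde x_i^j|\le\norm{M_i^{-1}}\norm{\vect{\xi}_i}<\sigma_{\min}^{-1}\sqrt{\eta_i}\,\rho_i(t_\tau)$ on $\Omega$. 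The constant $\bar F$ can be taken independent of $t_{\max}$. This is the point where the continuous bound $\ell_i$ replaces a global Lipschitz assumption, and I expect it to be the main technical step: the bound has to be uniform and must not depend on $t_{\max}$. With this bound, $\dot V<0$ whenever $\norm{\vect{J}_i\vect{\epsilon}_i}>\bar F/c$. Since $J_\Upsilon\ge2$, this gives $\norm{\vect{\epsilon}_i}$ uniformly bounded on $[t_\tau,t_{\max})$. Hence $|e_i^j|\le\bar e<1$, so the solution stays in a compact subset of $\Omega$, and the standard extension argument gives $t_{\max}\ge t_{\tau+1}$. This proves $|\xi_i^j(t)|<\underline{\rho}_i(t,t_\tau)$ on the whole interval, for every $i$.

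\textbf{Switching step.} The interval result gives \eqref{eq: final constraint to be satisfied by the disagreement vector} on $[t_\tau,t_{\tau+1})$. Together with the dwell-time condition \eqref{eq: dwell time condition} and $\norm{\phi_i}\le\phi_i^{\max}$, Lemma~\ref{Lemma: prescribed performance preservation at switching instants} then yields \eqref{eq: general condition to be satisfied at the stime of switch} at $t_{\tau+1}$. This is the inductive hypothesis at step $\tau+1$. Finally, the dwell-time bound rules out Zeno switching, so the intervals cover $[0,\infty)$.
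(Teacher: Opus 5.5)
Your proposal is correct and follows essentially the same route as the paper. Both arguments use induction over the switching intervals, a PPC Lyapunov argument with the diagonal weight $A_i$ from the M-matrix corollary, and uniform bounds on the perturbation terms over the funnel domain, followed by solution extension, Lemma~\ref{Lemma: prescribed performance preservation at switching instants} at the switching instants, and the funnel comparison to recover $|\tilde x_i^j|<\delta_i^j$. The only difference is cosmetic: you get boundedness of $\vect{\epsilon}_i$ directly from $\dot V<0$ outside a ball, whereas the paper derives $\dot V\le-\kappa V+\beta(t)$ and then works with the auxiliary function $S=1-e^{-V}$.
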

\begin{proof}
    We proceed by induction over the switching intervals. For the induction step,
    consider an arbitrary interval $[t_{\tau},t_{\tau+1})$ and assume that the
    prescribed-performance condition holds on all preceding switching intervals.
    We then show that it remains satisfied for all
    $t\in[t_{\tau},t_{\tau+1})$. Since the argument is independent of the
    switching index $\tau$, the same reasoning applies to every switching
    interval. The induction is initialized at $\tau=0$ by the stated
    prescribed-performance condition
    $|\xi_i^j(0)|<\rho_i(0)$.
    
    Fix $i \in \mathcal{V}$. By Remark~\ref{remark: t_tau as aparameter wtithin the time window}, $t_\tau$ is fixed throughout the interval and is therefore treated as a parameter. Hence, all derivatives are taken with respect to $t$. Differentiating $\vect{\epsilon}_{i}$ yields $\dot{\vect{\epsilon}}_i = \vect{J}_i  \underline{\vect{P}}^{-1}_{i}(\dot{\vect{\xi}}_{i} - \dot{\underline{\vect{P}}}_{i} \vect{e}_i)$. Since $\sigma(t)$ is constant for all $t \in [t_{\tau}, t_{\tau+1})$, all graph-dependent quantities are constant as well. Therefore, from \eqref{eq: disagreement vector expression},
    \begin{equation}
        \label{eq: transformed error dynamics in time varying case}
        \dot{\vect{\epsilon}}_i = \vect{J}_i \underline{\vect{P}}_i^{-1}(M_i \dot{\tilde{\vect{x}}}_{i} - \dot{\underline{\vect{P}}_{i} }\vect{e}_i).
    \end{equation}
    From \eqref{eq: agent's dynamic}, \eqref{Definition: error on input and state estimation of agent i} and \eqref{Eq: State observer dynamics in components}, the state estimation error for $t \in [t_{\tau},t_{\tau+1})$ satisfies
    \begin{equation}
        \label{eq: state estimation erro dynamics in time varying case}
        \dot{\tilde{\vect{x}}}_{i} = \vect{f}_i(\estimate{\vect{x}}{}{i}) - \vect{f}_i(\vect{x}_{i}) - \vect{g}_i -  \underline{\vect{P}}_i^{-1} \vect{J}_i \vect{\epsilon}_i - \vect{w}_i,
    \end{equation}
    with $\vect{f}_i(\vect{x}_i) := 1_{\eta_i} \otimes f_i(x_i)$, $\vect{g}_i := 1_{\eta_i} \otimes g_i(u_i) $ and $\vect{w}_i := 1_{\eta_i} \otimes w_i(\vect{x},t)$.
    
    Before proceeding with the Lyapunov analysis, we first establish the existence of a local solution to the state estimation error dynamics. To this end, let $\tilde{\mathcal{X}}_i(t, t_\tau) := \{ \error{\vect{x}}{}{i} \in \mathbb{R}^{\eta_i}| -1_{\eta_i} <\vect{e}_i = \underline{\vect{P}}_i^{ -1} \vect{\xi}_i <1_{\eta_i}\}$ denote the set of state estimation errors for which the disagreement vector satisfies the prescribed performance bounds at time $t \in [t_{\tau}, t_{\tau+1})$.
    By Assumption~\ref{Assumption on existence of a solution}, $f_i$ is locally Lipschitz continuous, $g_i$ is measurable and essentially locally bounded, and $w_i$ is continuous and uniformly bounded. Furthermore, $\underline{\rho}_i(t, t_{\tau})$ is $\mathcal{C}^1$ by definition, and $J_{\Upsilon}$ and $\Upsilon$ are smooth on $(-1,1)$. Then, since by Assumptions~\ref{Assumption on existence of a solution}--\ref{Assumption on g} the agent trajectories $x_i(t)$ exist on $[t_\tau, t_{\tau+1})$, the right-hand side of \eqref{eq: state estimation erro dynamics in time varying case} is measurable in $t$, continuous in $\tilde{\vect{x}}_i$, and locally bounded on the bounded open set $\tilde{\mathcal{X}}_i(t, t_\tau)$. Thus, a local Carath\'eodory solution exists in $\tilde{\mathcal{X}}_i(t, t_\tau)$ for all $t \in [t_\tau, t_\tau + h)$, with $h > 0$ \cite[Proposition 1]{Cortes_DiscontinuousDS}. Since $\tilde{\vect{x}}_i$ is absolutely continuous, so is $V$ along solutions, and all derivative inequalities below hold for almost all $t$.

    As introduced in Section~\ref{Section: disagreement vector}, $M_i$ is a nonsingular \mbox{$M$-matrix}. Thus, according to Corollary~\ref{Corollary: existence of a diagonal matrix P that render an M-matrix Hurwitz}, there exists a diagonal positive definite matrix $A_i$ such that $A_i M_i + M_i^\top A_i \succ 0$. For the time interval $[t_{\tau}, t_{\tau+1})$, consider the candidate Lyapunov function $V = \vect{\epsilon}_i^\top A_i \vect{\epsilon}_i$. Substituting \eqref{eq: transformed error dynamics in time varying case} and \eqref{eq: state estimation erro dynamics in time varying case} into $\dot{V} = \vect{\epsilon}_i^\top A_i \dot{\vect{\epsilon}}_i + \dot{\vect{\epsilon}}_i^{\top} A_i \vect{\epsilon}_i$ yields:
     \begin{equation}
        \label{eq: Lyapunov function derivative, original}
        \begin{split}
            \dot{V} =& \vect{\epsilon}_i^\top A_i \vect{J}_i \underline{\vect{P}}_i^{-1}\Big\{M_i[ \vect{f}_i(\estimate{\vect{x}}{}{i}) - \vect{f}_i(\vect{x}_{i}) -  \underline{\vect{P}}_i^{-1} \vect{J}_i \vect{\epsilon}_i \\ & - \vect{g}_i - \vect{w}_i] - \dot{\underline{\vect{P}}_{i} } \vect{e}_i \Big\} + \Big\{M_i[ \vect{f}_i(\estimate{\vect{x}}{}{i}) - \vect{f}_i(\vect{x}_{i})  \\ & - \vect{g}_i -  \underline{\vect{P}}_i^{-1} \vect{J}_i \vect{\epsilon}_i - \vect{w}_i] - \dot{\underline{\vect{P}}_{i} } \vect{e}_i \Big\}^\top \underline{\vect{P}}_i^{-1} \vect{J}_i A_i \vect{\epsilon}_i.
        \end{split}
    \end{equation} 
    Since $A_i$, $\vect{J}_i$, and $\underline{\vect{P}}_i$ are diagonal, they are symmetric and mutually commuting. Thus, expanding \eqref{eq: Lyapunov function derivative, original} and rearranging the resulting terms yields:
    \begin{equation}
        \label{eq: Lyapunov function derivative, pre bound}
        \begin{split}
            \hspace{-0.2cm}\dot{V} = &2\vect{\epsilon}_i^\top \vect{J}_i \underline{\vect{P}}_i^{-1} A_i\Big\{ M_i[ \vect{f}_i(\estimate{\vect{x}}{}{i}) - \vect{f}_i(\vect{x}_{i}) - \vect{g}_i - \vect{w}_i] \\& -\dot{\underline{\vect{P}}_{i} } \vect{e}_i \Big\} - \vect{\epsilon}_i^\top \vect{J}_i \underline{\vect{P}}_i^{-1} (A_i M_i + M_i^\top A_i )\underline{\vect{P}}_i^{-1} \vect{J}_i \vect{\epsilon}_i.
        \end{split}
    \end{equation}

    Since $A_i M_i + M_i^\top A_i \succ 0$, the quadratic term in
    \eqref{eq: Lyapunov function derivative, pre bound} satisfies
    $- \vect{\epsilon}_i^\top \vect{J}_i \underline{\vect{P}}_i^{-1} (A_i M_i + M_i^\top A_i )\underline{\vect{P}}_i^{-1} \vect{J}_i \vect{\epsilon}_i \leq -\Bar{\lambda}_{\min} \norm{\underline{\vect{P}}_i^{-1} \vect{J}_i \vect{\epsilon}_{i}}^2$,
    with $\Bar{\lambda}_{\min} :=\lambda_{\min} (A_i M_i + M_i^\top A_i) > 0$. Hence,
    \eqref{eq: Lyapunov function derivative, pre bound} admits the upper bound
    \begin{equation}
        \label{eq: Lyapunov function derivative, quadratic bound}
        \dot{V} \leq -\Bar{\lambda}_{\min} \norm{\underline{\vect{P}}_i^{-1} \vect{J}_i \vect{\epsilon}_{i}}^2 + \vect{\epsilon}_i^\top \vect{J}_i \underline{\vect{P}}_i^{-1} \vect{b}(t),
    \end{equation}
    where $\vect{b}(t) := 2 A_i M_i[ \vect{f}_i(\estimate{\vect{x}}{}{i}) - \vect{f}_i(\vect{x}_{i}) - \vect{g}_i - \vect{w}_i] - 2 A_i\dot{\underline{\vect{P}}_{i} } \vect{e}_i$.
    
    Fix $0 < \zeta < \Bar{\lambda}_{\min}$ and split the first term of
    \eqref{eq: Lyapunov function derivative, quadratic bound} as
    $-\Bar{\lambda}_{\min} \norm{\underline{\vect{P}}_i^{-1} \vect{J}_i \vect{\epsilon}_{i}}^2 = -(\Bar{\lambda}_{\min} - \zeta) \norm{\underline{\vect{P}}_i^{-1} \vect{J}_i \vect{\epsilon}_{i}}^2 - \zeta \norm{\underline{\vect{P}}_i^{-1} \vect{J}_i \vect{\epsilon}_{i}}^2$.
    Since $\vect{J}_i$ and $\underline{\vect{P}}_i$ are diagonal,
    $\norm{\underline{\vect{P}}_i^{-1} \vect{J}_i \vect{\epsilon}_{i}}^2 \geq \alpha_{J}\alpha_{\rho} \norm{\vect{\epsilon}_i}^2$,
    with $\alpha_{J} :=\min_{e \in (-1,1)} J_{\Upsilon}(e)^2= 4$ and
    $\alpha_{\rho} := (\sup_{t \in [t_\tau, t_{\tau+1})} \underline{\rho}_i(t, t_{\tau})^2)^{-1}$.
    By Definition~\ref{Definition of Prescribed performance function}, each prescribed performance function is bounded, so $\alpha_{\rho} > 0$. Thus, $\dot{V} \leq -(\Bar{\lambda}_{\min} - \zeta)\alpha_{J} \alpha_{\rho} \norm{\vect{\epsilon}_i}^2 - \zeta \norm{\underline{\vect{P}}_i^{-1} \vect{J}_i \vect{\epsilon}_{i}}^2 + \vect{\epsilon}_i^\top \vect{J}_i \underline{\vect{P}}_i^{-1}\vect{b}(t).$

    Since $\underline{\vect{P}}_i^{-1} \vect{J}_i$ is symmetric,
    $\norm{\sqrt{\zeta}\, \underline{\vect{P}}_i^{-1} \vect{J}_i \vect{\epsilon}_{i} - \tfrac{1}{2\sqrt{\zeta}}\vect{b}(t)}^2\geq 0$ gives
    $\vect{\epsilon}_i^\top \vect{J}_i \underline{\vect{P}}_i^{-1}\vect{b}(t) - \zeta \norm{\underline{\vect{P}}_i^{-1} \vect{J}_i \vect{\epsilon}_{i}}^2 \leq \tfrac{1}{4 \zeta}\vect{b}^\top(t)\vect{b}(t)$,
    and
    \vspace{-0.2cm}
    \begin{equation}
        \vspace{-0.1em}
        \label{eq: Lyapunov function upper bound before introduction of k}
        \dot{V} \leq -(\Bar{\lambda}_{\min} - \zeta) \alpha_{J}\alpha_{\rho} \vect{\epsilon}_i^\top \vect{\epsilon}_i + \frac{1}{4 \zeta}\vect{b}^\top(t)\vect{b}(t).
    \end{equation}
    
    Since $A_i$ is diagonal and positive definite, $V = \vect{\epsilon}_i^\top A_i \vect{\epsilon}_i \leq \lambda_{\max}(A_i) \vect{\epsilon}_i^\top \vect{\epsilon}_i$ holds. Thus, $\vect{\epsilon}_i^\top \vect{\epsilon}_i \geq \lambda_{\max}(A_i)^{-1} V$, and \eqref{eq: Lyapunov function upper bound before introduction of k} can be rewritten as:
    \vspace{-0.3cm}
    \begin{equation}
         \vspace{-0.1em}
        \label{eq: final Lyapunov function upper bound}
        \dot{V} \leq - \kappa V + \beta(t),
    \end{equation}
    with $\kappa:= (\Bar{\lambda}_{\min} - \zeta) \alpha_{J}\alpha_{\rho} \lambda_{\max}(A_i)^{-1} $ and $\beta(t) := \frac{\lambda_{\max}(A_i)^2}{\zeta} \{\sigma_{\max}(M_i)[\norm{\vect{f}_i(\estimate{\vect{x}}{}{i})- \vect{f}_i(\vect{x}_{i})} + \norm{\vect{w}_i} + \norm{\vect{g}_i}] + \norm{\dot{\underline{\vect{P}}_{i} } \vect{e}_i}\}^2$.

    We next show that $\beta(t)$ admits a finite upper bound $\Bar{\beta}$. The inequality $|\error{x}{j}{i}(t)| <\delta^{j}_i(t)$, holding in $\tilde{\mathcal{X}}_i(t, t_\tau)$, implies that $\tilde{\mathcal{X}}_i(t, t_\tau) $ is a bounded open set. Therefore, Assumption~\ref{Assumption on g} guarantees that $\norm{\vect{f}_i(\estimate{\vect{x}}{}{i})- \vect{f}_i(\vect{x}_{i})}$ is bounded on $\tilde{\mathcal{X}}_i(t, t_\tau)$. Moreover, $\dot{\underline{\vect{P}}_{i} } \vect{e}_{i}$ is a column vector with entries $ \dot{\underline{\rho}}_i e^{j}_i$. Since $|e^{j}_i|< 1$ in $\tilde{\mathcal{X}}_i(t, t_\tau)$, and $|\dot{\underline{\rho}}_i|$ is finite by Definition \ref{Definition of Prescribed performance function}, $\norm{\dot{\underline{\vect{P}}_{i} } \vect{e}_i}$ is bounded. Together with the boundedness of $\norm{\vect{g}_i}$ and $\norm{{\vect{w}}_i}$ by Assumptions~\ref{Assumption on g}(ii) and~\ref{Assumption on existence of a solution}(iii), these bounds imply the existence of a finite upper bound $\Bar{\beta}$ satisfying $\beta(t) \leq \Bar{\beta} < \infty$ for all $ \error{\vect{x}}{}{i} \in \tilde{\mathcal{X}}_i(t, t_\tau)$.

    Inspired by \cite[Thm. 22]{SLiu_Contracts}, the invariance of $\tilde{\mathcal{X}}_i(t, t_\tau) $ in the time interval $[t_{\tau}, t_{\tau+1})$ is proved by introducing $S(\vect{e}_{i}) = 1 - e^{- V(\vect{e}_{i})}$. Note that $S$ satisfies: (i) $S(\vect{e}_{i}) \in [0,1)$ for all $\vect{e}_{i} \in \mathcal{D}$ with $\mathcal{D} = \bigtimes_{j=1}^{\eta_i}(-1,1)$, and (ii) $S(\vect{e}_{i}) \rightarrow 1$ as $\vect{e}_{i} \rightarrow \partial \mathcal{D}$. Therefore, invariance of the prescribed-performance domain can be established by showing that $S(\vect{e}_{i})$ remains bounded away from one.

    By replacing \eqref{eq: final Lyapunov function upper bound} and $V(\vect{e}_{i}) = - \ln (1-S(\vect{e}_{i}))$ in $\dot{S}(\vect{e}_{i}) = \dot{V}(\vect{e}_{i})(1-S(\vect{e}_{i}))$, $ \dot{S}(\vect{e}_{i}) \leq - \kappa (1-S(\vect{e}_{i})) \Bigl(-\frac{1}{\kappa}\beta(t) - \ln (1-S(\vect{e}_{i}))\Bigr)$ is obtained.
    Since $\kappa> 0$ and $1-S(\vect{e}_{i})>0$, verifying $\dot{S}(\vect{e}_{i}) \leq 0 $ reduces to checking when $-\frac{1}{\kappa}\beta(t) - \ln (1-S(\vect{e}_{i}))\geq 0$.
    Note that $-\frac{1}{\kappa}\beta(t) - \ln (1-S(\vect{e}_{i})) \geq -\frac{1}{\kappa}\bar \beta - \ln (1-S(\vect{e}_{i}))$, and $-\frac{1}{\kappa}\bar \beta - \ln (1-S(\vect{e}_{i})) \geq 0 $ for all $\vect{e}_{i} \in \Omega^c_{\vect{e}}$, where $\Omega^c_{\vect{e}}:= \mathcal{D}\setminus \Omega_{\vect{e}}$, with $\Omega_{\vect{e}}:=\bigl\{\vect{e}_{i} \in \mathcal{D}| S(\vect{e}_{i}) < 1- e^{-\frac{{\bar\beta}}{\kappa}} \bigr\}$. Moreover, $-\frac{1}{\kappa}\bar\beta - \ln (1-S(\vect{e}_{i})) = 0 $ holds for $\vect{e}_{i} \in  \partial\Omega_{\vect{e}}$. 
    Thus, $\dot{S}(\vect{e}_{i}) \leq 0$  for $\vect{e}_{i} \in \Omega^c_{\vect{e}}$.
    
    Since $|\xi^{j}_{i}(t_\tau)| < \underline{\rho}_i(t_\tau, t_\tau)$ is guaranteed by design if $T_\tau$ satisfies \eqref{eq: dwell time condition} and if $|\xi^{j}_{i}(t)| < \underline{\rho}_i(t, t_{\tau-1})$ for $t \in [t_{\tau-1}, t_\tau)$, $e^{j}_{i}(t_\tau, t_\tau) \in (-1,1)$ holds for all $j \in \Outhop{i}{k}$ by induction, and $S(\vect{e}_{i}(t_\tau, t_\tau)) < 1$ holds under the stated assumption.
    
    Then, since $\dot S(\vect{e}_i) \leq 0$ for all $\vect{e}_i \in \Omega^c_{\vect{e}}$, it
    follows that, regardless of whether $\vect{e}_i(t_\tau,t_\tau) \in \Omega_{\vect{e}}$ or
    $\vect{e}_i(t_\tau,t_\tau) \in \Omega^c_{\vect{e}}$,
    \begin{equation*}
        S(\vect{e}_i(t)) \leq \bar S := \max\bigl\{S(\vect{e}_i(t_\tau,t_\tau)),\,
        1-e^{-\bar\beta/\kappa}\bigr\} < 1
    \end{equation*}
    on any interval $[t_\tau, t^*) \subseteq [t_\tau, t_{\tau+1})$ on which the solution exists. As a result, $V(\vect{e}_i)$, and thus $\vect{\epsilon}_i$, remain bounded, so that $\vect{e}_i$ stays in a compact subset of $\mathcal{D}$. Therefore, every solution can be continued, and the local existence of a Carath\'eodory solution of   \eqref{eq: state estimation erro dynamics in time varying case} extends to $[t_\tau,t_{\tau+1})$, with $\tilde{\mathcal{X}}_i(t,t_\tau)$ forward invariant on this interval. Hence, the inverse transformation guarantees that $|\xi_i^j(t)|<\underline{\rho}_i(t,t_\tau)$ for all $t \in [t_\tau,t_{\tau+1})$ and $j\in\Outhop{i}{k}$. Under the funnel design conditions in Proposition~\ref{Proposition: underline rho lower than rho} and \eqref{eq: design of rho_i based on the minimum delta}, this further implies $|\error{x}{j}{i}(t)|<\delta_i^j(t)$ for all $j\in\Outhop{i}{k}$.
    
    Since the prescribed performance condition is preserved on the arbitrary interval $[t_\tau,t_{\tau+1})$, and the minimum dwell time ensures $t_\tau \to \infty$ as $\tau \to \infty$, the induction argument establishes its preservation over all switching intervals. Therefore, the prescribed performance requirements hold for all $t\geq0$.
\end{proof}

\section{\texorpdfstring{$k$}{k}-hop PPSO-Based Controller}
\label{Sec: Closed-loop}
In the previous sections, a \mbox{$k$-hop} PPSO has been developed to reconstruct the states of agents that are not directly available through the communication topology. This section establishes sufficient conditions under which the convergence guarantees of a controller designed with full state information are preserved when implemented using the estimated states. In particular, we characterize the effect of estimation errors on the satisfaction of the desired objectives and derive conditions ensuring convergence to a neighborhood of the nominal objective.

\subsection{Task-communication graph compatibility}
To characterize the dependency of the control objective on the agent states, we introduce the time-invariant task graph $\mathcal{G}_{\mathcal{T}}:= (\mathcal{V}, \mathcal{E}_{\mathcal{T}})$. An edge
$(j,i) \in \mathcal{E}_{\mathcal{T}}$ indicates that the task associated with agent $i$ depends on the state of agent $j$. The desired system behavior is represented by the target set $\mathcal{A}$, while $\mathcal{G}_{\mathcal{T}}$ specifies the state information required to achieve this objective. Accordingly, define $\TNeigh{i} := \{ j \in \mathcal{V}: (j,i) \in \mathcal{E}_{\mathcal{T}} \lor j = i\}$ as the set of closed in-neighbors of agent $i$ in the task graph. 

Since the communication topology is time-varying, the following assumption is imposed to ensure that every state required by the task graph is either directly available or reconstructible using the \mbox{$k$-hop} PPSO.
\begin{assumption}
    \label{Assumption: each communication and task graph compatibility}
    Consider the task graph $\mathcal{G}_{\mathcal{T}}$ and the set of communication graphs $\mathbb{G}$ defined in Section~\ref{Section: communication graph}. For every $\mathcal{G}_{l}\in\mathbb{G}$ and every $(j,i)\in\mathcal{E}_{\mathcal{T}}\setminus\mathcal{E}_{l}$, there exists a directed path from $j$ to $i$ in $\mathcal{G}_{l}$ with length at most $k$.
\end{assumption}

Assumption~\ref{Assumption: each communication and task graph compatibility} requires that every task dependency that is not directly available through the communication topology remains reachable within $k$ communication hops. If every task dependency is reachable in every communication topology, the assumption can always be satisfied by selecting $k$ sufficiently large. In particular, one may select
\begin{equation}
    k \geq \max_{\mathcal{G}_{l}\in\mathbb{G}}
\max_{(j,i)\in\mathcal{E}_{\mathcal{T}}\setminus\mathcal{E}_{l}}
D^{ji}_{l}
\end{equation}
where $D^{ji}_{l}$ denotes the length of the shortest directed path from $j$ to $i$ in $\mathcal{G}_{l}$. Thus, the assumption fundamentally requires that, in every communication graph, agents involved in a collaborative task that are not directly connected by a communication edge are reachable from one another through a directed path.
\subsection{Multi-agent system dynamics and controller}
\label{section: multi-agent system dynamics and controller}
Consider the vectorized form of the dynamics in \eqref{eq: agent's dynamic}, i.e., $\dot{\vect{x}}(t) = \vect{f}(\vect{x}(t)) + \vect{g}(\vect{u}(t)) + \vect{w}(\vect{x},t)$, where $\vect{x}(t) := [x_i]_{i\in \mathcal{V}}$, $\vect{f}(\vect{x}(t)) := [f_i(x_i)]_{i\in \mathcal{V}}$, $\vect{g}(\vect{u}(t)) := [g_i(u_i)]_{i\in \mathcal{V}}$, $\vect{w}(\vect{x},t) := [w_i(\vect{x},t)]_{i\in \mathcal{V}}$, and $\vect{u}(t) := [u_i(t)]_{i\in \mathcal{V}}$ denotes a nonlinear control input of the form
\begin{equation}
    \label{eq: ideal closed loop controller}
    \vect{u}^{\text{nom}} = \vect{\psi}(\vect{x}) := \left[\psi_i(\vect{x}_\mathcal{T}^i)\right]_{i\in \mathcal{V}},
\end{equation}
where $\vect{x}_\mathcal{T}^i := [x_j(t)]_{j \in \TNeigh{i}}$.
Since $\mathcal{G}_{\mathcal{T}}$ is time-invariant, the functional form of the controller is independent of the active communication graph $\mathcal{G}_{\sigma(t)}$. The active communication topology determines how the components of $\vect{x}_{\mathcal{T}}^i$ are obtained: each required component is either directly available through a communication edge in $\mathcal{G}_{\sigma(t)}$, or reconstructed by the observer from the available $k$-hop information.

Accordingly, the controller in \eqref{eq: ideal closed loop controller} is implemented using the available states and the corresponding observer estimates as
\begin{equation}
    \label{eq: real closed loop controller}
    \vect{u} = \left[\psi_i(\underline{\vect{x}}_\mathcal{T}^i )\right]_{i\in \mathcal{V}},
\end{equation}
where, for each $i \in \mathcal{V}$, $\underline{\vect{x}}_\mathcal{T}^i := [\underline{x}^i_j(t)]_{j \in \TNeigh{i}}$ with
\begin{equation*}
    \underline{x}^i_j(t) := \begin{cases}
        x_j(t),& \quad \text{ if } j \in \TNeigh{i} \cap \inN{i}(t) \text{ or } j = i, \\
        \hat{x}^i_j(t),& \quad \text{ otherwise.}
    \end{cases}
\end{equation*}
By writing $\underline{\vect{x}}_\mathcal{T}^i = \vect{x}_\mathcal{T}^i + \Tilde{\underline{\vect{x}}}_\mathcal{T}^i$, with $\Tilde{\underline{\vect{x}}}_\mathcal{T}^i := [\Tilde{\underline{x}}^i_j(t)]_{j \in \TNeigh{i}}$ and
\begin{equation*}
    \Tilde{\underline{x}}^i_j(t) := \begin{cases}
        0,& \quad \text{ if } j \in \TNeigh{i} \cap \inN{i}(t) \text{ or } j = i, \\
        \Tilde{x}^i_j(t),& \quad \text{ otherwise,}
    \end{cases}
\end{equation*}
the controller in \eqref{eq: real closed loop controller} can be expressed as $\vect{u} = \left[\psi_i(\vect{x}_\mathcal{T}^i + \Tilde{\underline{\vect{x}}}_\mathcal{T}^i)\right]_{i\in \mathcal{V}}$. 
Hence, communication topology switches and observer dynamics affect the closed-loop system only through the perturbations $\Tilde{\underline{\vect{x}}}_\mathcal{T}^i$, which can be regarded as external inputs induced by the state estimation errors. This motivates the use of the following set-Input to State Stability property to characterize the robustness of the closed-loop system with respect to such perturbations.

\begin{definition}[\!\!\cite{Sontag2008,Edwards-setISS}]
    \label{Definition set-ISS definition fomal}
    Let $\mathcal{A} \subset \mathbb{R}^n$ be nonempty and closed. A system $\dot{x}=f(x,u,t)$, with $f:\mathbb{R}^n \times \mathbb{R}^m \times \mathbb{R}_{\geq 0} \rightarrow \mathbb{R}^n$ continuous, is uniformly set-Input-to-State Stable (set-ISS) with respect to $\mathcal{A}$ if there exist $\beta \in \mathcal{KL}$ and $\gamma \in \mathcal{K}$ such
    that, for all $t_0 \geq 0$, all $x(t_0) \in \mathbb{R}^n$, and all measurable, essentially bounded inputs $u$, every solution $x(t)$ exists for all $t \geq t_0$ and satisfies
    \begin{equation}
        \label{eq: set-iss inequlity}
        \norm{x(t)}_{\mathcal{A}} \leq \beta\bigl(\norm{x(t_0)}_{\mathcal{A}},\, t-t_0\bigr)
        + \gamma\Bigl(\operatorname*{ess\,sup}_{t_0 \leq s \leq t}\norm{u(s)}\Bigr),
    \end{equation}
    for all $ t \geq t_0$, where $\norm{x}_{\mathcal{A}} := \inf_{a \in \mathcal{A}}\norm{x-a}$.
\end{definition}
\begin{remark}
    When $\mathcal{A}$ is compact, the requirement that every solution exists for all $t \geq t_0$ is redundant. Indeed, since the input is essentially bounded, the set-ISS estimate bounds $\norm{x(t)}_{\mathcal{A}}$ on the interval of existence, which, for compact $\mathcal{A}$, confines $x(t)$ to a compact set and rules out finite escape \cite{Edwards-setISS}.
\end{remark}

\begin{assumption}
    \label{Assumption on regular controller}
    The ideal controller $\vect{u}=\vect{\psi}(\vect{x})$ in \eqref{eq: ideal closed loop controller} renders the nominal closed-loop system uniformly globally asymptotically stable with respect to $\mathcal{A}$ for all disturbances $\vect{w}(\vect{x},t)$ satisfying Assumption~\ref{Assumption on existence of a solution}.
\end{assumption}

If $\vect{w}(x,t)$ does not vanish on the nominal target set, $\mathcal{A}$ is understood as a neighborhood of the nominal target set that is rendered globally asymptotically stable despite $w$, whose size depends on $\sup_{x,t}\norm{\vect{w}(\vect{x},t)}$.
Thus, $\vect{w}$ can be considered part of the nominal dynamics, and the system under \eqref{eq: real closed loop controller} can be studied by analyzing $\dot{\vect{x}}
= \Phi(\vect{x},\tilde{\underline{\vect{x}}}_{\mathcal T},t):= \vect{f}(\vect{x}) + \vect{g} \left(\left[\psi_i \left(\vect{x}_{\mathcal T}^i + \tilde{\underline{\vect{x}}}_{\mathcal T}^i \right) \right]_{i\in\mathcal V}\right) + \vect{w}(\vect{x},t)$, where $\tilde{\underline{\vect{x}}}_{\mathcal T}:=[\tilde{\underline{\vect{x}}}^i_{\mathcal T}]_{i \in \mathcal{V}}$ is treated as an input disturbance affecting the nominal closed-loop dynamics $\dot{\vect{x}} = \Phi(\vect{x},0,t)$.

For each $i\in \mathcal{V}$, define the stacked vector containing the upper bounds on the state estimation errors associated with the agents $j \in \TNeigh{i}$ at time $t$ as $\delta_{\tilde{\underline{\vect{x}}}^i_{\mathcal T}}(t):= \left[\underline{\delta}^i_{j}\right]_{j\in \TNeigh{i}}$, where
\begin{equation*}
    \underline{\delta}^i_{j} := \begin{cases}
        0,& \quad \text{ if } j \in \TNeigh{i} \cap \inN{i}(t) \text{ or } j = i, \\
        \delta^i_j(t),& \quad \text{ otherwise.}
    \end{cases}
\end{equation*}
Furthermore, let $\delta_{\tilde{\underline{\vect{x}}}_{\mathcal T}}(t) := [\delta_{\tilde{\underline{\vect{x}}}^i_{\mathcal T}}(t)]_{i \in \mathcal{V}}$.

\begin{assumption}
    \label{Assumption on the desired steady state estimation error}
    For a prescribed constant $\delta^{U}_{\tilde{\underline{\vect{x}}}_\mathcal{T}} > 0$, there exists a finite time $t_x \geq 0$ such that $\norm{\delta_{\tilde{\underline{\vect{x}}}_{\mathcal T}}(t)} \leq \delta^{U}_{\tilde{\underline{\vect{x}}}_\mathcal{T}}$ holds for all $t \geq t_x$.
\end{assumption}

Assumption~\ref{Assumption on the desired steady state estimation error} imposes a requirement on the evolution of the state estimation errors in $\Tilde{\underline{\vect{x}}}_\mathcal{T}(t)$. However, since the $k$-hop PPSO guarantees satisfaction of \eqref{eq: state estimation error inequality}, and $\delta_i^j(t)$ are design parameters, this requirement can be accommodated through an appropriate choice of the \mbox{steady-state} values of $\delta_i^j(t)$, provided that the corresponding choice of $\rho_i^j(t)$ satisfies the observer feasibility condition in \eqref{eq: asymptotic feasibility condition}. Consequently, the observer feasibility condition may impose a lower bound on the achievable steady-state values of $\delta_i^j(t)$, and hence on the admissible values of $\delta^{U}_{\tilde{\underline{\vect{x}}}_\mathcal{T}}$.

\begin{theorem}
    \label{Theorem on closed loop stability}
     Consider the heterogeneous MAS \eqref{eq: agent's dynamic} communicating over the directed switching graph $\mathcal{G}_{\sigma(t)} \in \mathbb{G}$, equipped with the decentralized state observers \eqref{Eq: State observer dynamics in components} and with a control objective described by a time-invariant graph $\mathcal{G}_{\mathcal T}$. Suppose each agent implements the control law in \eqref{eq: real closed loop controller}. Then, under Assumptions~\ref{Assumption on existence of a solution}-\ref{Assumption on the desired steady state estimation error}, if the dwell times satisfy \eqref{eq: dwell time condition}, the prescribed performances are designed as in Section~\ref{Sec: Decentralized PPSO}, and $\Phi(\vect{x},\tilde{\underline{\vect{x}}}_{\mathcal T},t)$ is uniformly \mbox{set-ISS} with respect to $\mathcal{A}$, the MAS converges to $\mathcal{A}_e := \{\vect{x}: \norm{\vect{x}}_{\mathcal{A}} \leq \gamma (\delta^{U}_{\tilde{\underline{\vect{x}}}_\mathcal{T}})\}$.
\end{theorem}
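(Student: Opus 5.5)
The plan is to combine the observer guarantee of Theorem~\ref{Theorem: main theorem on state estimation convergence in time varying case} with the uniform set-ISS estimate \eqref{eq: set-iss inequlity}. We apply the latter from a shifted initial time $t_0 := t_x$ of Assumption~\ref{Assumption on the desired steady state estimation error}, so that only the steady-state bound on the estimation error enters the gain term.

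\textbf{Step 1: the input is well defined and admissible.} Under the stated dwell-time condition \eqref{eq: dwell time condition} and the funnel design of Section~\ref{Sec: Decentralized PPSO}, Theorem~\ref{Theorem: main theorem on state estimation convergence in time varying case} yields $|\tilde{x}^i_j(t)|<\delta^j_i(t)$ for every $j\in\Outhop{i}{k}(t)$ and all $t\geq 0$.

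By Assumption~\ref{Assumption: each communication and task graph compatibility}, every $j\in\TNeigh{i}$ with $j\neq i$ and $j\notin\inN{i}(t)$ lies within $k$ hops of $i$. Hence agent $i$ estimates $x_j$, and the entry $\tilde{\underline{x}}^i_j(t)$ is one of the errors controlled by the observer. Componentwise this gives $|\tilde{\underline{x}}^i_j(t)|\leq\underline{\delta}^i_j(t)$ for all $i$ and $j\in\TNeigh{i}$. Therefore
\begin{equation*}
\norm{\tilde{\underline{\vect{x}}}_{\mathcal T}(t)}\leq\norm{\delta_{\tilde{\underline{\vect{x}}}_{\mathcal T}}(t)}\quad \forall t\geq0.
\end{equation*}
Since each $\delta^j_i$ is a prescribed performance function, hence bounded, the input $\tilde{\underline{\vect{x}}}_{\mathcal T}$ is essentially bounded.

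It is also measurable. It is piecewise absolutely continuous between switching instants, and its jumps occur only at the $t_\tau$. These instants are isolated because the dwell time is bounded below.

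The existence of closed-loop solutions for all $t\geq0$ follows either from the set-ISS hypothesis itself (Definition~\ref{Definition set-ISS definition fomal} requires global existence) or from Remark~\ref{remark: agent trajectories exist}.

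\textbf{Step 2: apply set-ISS from $t_x$.} Because the estimate in Definition~\ref{Definition set-ISS definition fomal} is uniform in $t_0$, we may take $t_0=t_x$ with initial state $\vect{x}(t_x)$. This state is finite by global existence, so $\norm{\vect{x}(t_x)}_{\mathcal A}<\infty$. For $t\geq t_x$ we obtain
\begin{equation*}
\norm{\vect{x}(t)}_{\mathcal A}\leq\beta\big(\norm{\vect{x}(t_x)}_{\mathcal A},t-t_x\big)+\gamma\Big(\operatorname*{ess\,sup}_{t_x\leq s\leq t}\norm{\tilde{\underline{\vect{x}}}_{\mathcal T}(s)}\Big).
\end{equation*}
By Step~1 and Assumption~\ref{Assumption on the desired steady state estimation error}, the essential supremum is at most $\delta^U_{\tilde{\underline{\vect{x}}}_{\mathcal T}}$. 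Since $\gamma\in\mathcal K$ is monotone, the second term is bounded by $\gamma(\delta^U_{\tilde{\underline{\vect{x}}}_{\mathcal T}})$.

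\textbf{Step 3: conclude.} Since $\beta\in\mathcal{KL}$, the first term tends to zero as $t\to\infty$. Hence
\begin{equation*}
\limsup_{t\to\infty}\norm{\vect{x}(t)}_{\mathcal A}\leq\gamma(\delta^U_{\tilde{\underline{\vect{x}}}_{\mathcal T}}),
\end{equation*}
which is exactly convergence to $\mathcal A_e$, understood as convergence of the distance to $\mathcal A_e$.

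\textbf{Main obstacle.} The delicate point is Step~1, namely justifying that the hybrid, dimension-changing estimation error produces a legitimate measurable, essentially bounded input to $\Phi$. A further subtlety is that the selection of which components are estimated changes with $\sigma(t)$, so one must check that the bound holds uniformly across switches.

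We plan to handle this by working interval by interval on $[t_\tau,t_{\tau+1})$. On each interval the selection pattern, and hence the map from the observer errors to $\tilde{\underline{\vect{x}}}_{\mathcal T}$, is fixed. The bound $\underline{\delta}^i_j$ holds pointwise on each interval, so it holds for all $t$. Measurability of the concatenated signal then follows from its piecewise continuity with isolated switching instants.
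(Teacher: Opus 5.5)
Your proposal is correct and follows the paper's proof essentially step for step: Theorem~\ref{Theorem: main theorem on state estimation convergence in time varying case} together with Assumption~\ref{Assumption: each communication and task graph compatibility} gives $\norm{\tilde{\underline{\vect{x}}}_{\mathcal T}(t)}\leq\norm{\delta_{\tilde{\underline{\vect{x}}}_{\mathcal T}}(t)}$, and the uniform set-ISS estimate restarted at $t_0=t_x$ yields $\limsup_{t\to\infty}\norm{\vect{x}(t)}_{\mathcal A}\leq\gamma(\delta^U_{\tilde{\underline{\vect{x}}}_{\mathcal T}})$. Two small fixes are needed, neither affecting the argument:
\begin{itemize}
\item Theorem~\ref{Theorem: main theorem on state estimation convergence in time varying case} must be invoked with the roles swapped, i.e., for agents $j$ with $i\in\Outhop{j}{k}(t)$, equivalently $j\in\Inhop{i}{k}(t)$, which gives $|\tilde{x}^i_j|<\delta^i_j$. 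As you wrote it, for $j\in\Outhop{i}{k}(t)$ the theorem bounds $\tilde{x}^j_i$, the errors of the other agents' estimates of $x_i$, not $\tilde{x}^i_j$.
\item For global existence, the paper makes your appeal to forward completeness rigorous in one extra step. It extends the endogenously generated input $\tilde{\underline{\vect{x}}}_{\mathcal T}$ by zero beyond the existence interval, so that Definition~\ref{Definition set-ISS definition fomal} is applied to an exogenous, globally defined, bounded input.
\end{itemize}
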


\begin{figure*}[t!]
    \centering
    \includegraphics[width =\linewidth]{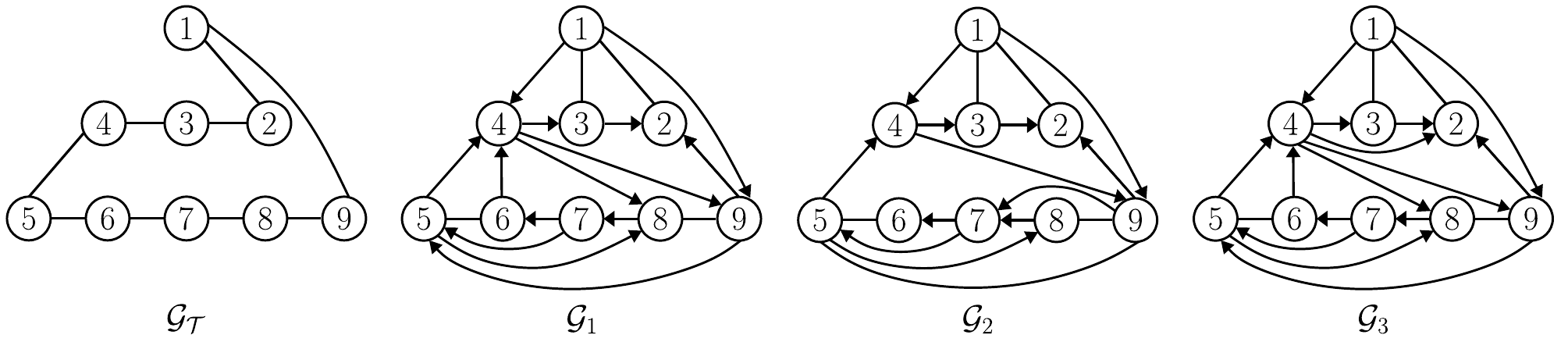}
    \caption{Fixed task graph $\mathcal{G}_{\mathcal{T}}$ and admissible communication graph set $\mathbb{G} = \{\mathcal{G}_1, \mathcal{G}_2, \mathcal{G}_3\}$. Directed edges are represented by arrows from the information-sending agent to the receiving agent, whereas undirected edges are represented by lines.} 
    \label{fig: Graph for the simulation and case study}
\end{figure*}

\begin{proof}
    Since $\Phi$ is continuous by Definition~\ref{Definition set-ISS definition fomal}, and~\eqref{Eq: State observer dynamics in components} is continuous on the prescribed-performance domain, the closed-loop system admits, on each switching interval, a local Carath\'eodory solution~\cite{Cortes_DiscontinuousDS}. Moreover, every such solution exists for all $t\geq0$. Indeed, on any interval on which it exists, Theorem~\ref{Theorem: main theorem on state estimation convergence in time varying case} guarantees that $\tilde{\underline{\vect{x}}}_{\mathcal{T}}$ remains bounded by the prescribed performance functions. Extending $\tilde{\underline{\vect{x}}}_{\mathcal{T}}$ by zero beyond this interval yields a bounded input defined for all $t\geq0$, so that the forward completeness included in Definition~\ref{Definition set-ISS definition fomal} rules out finite escape times of $\vect{x}$ (see also Remark~\ref{remark: agent trajectories exist}). Hence, by a standard extension argument, no solution can terminate in finite time.
    Under Assumptions~\ref{Assumption on existence of a solution}-\ref{Assumption on 2-hop communication}, dwell times satisfying \eqref{eq: dwell time condition}, and a proper design of the prescribed performance functions, Theorem~\ref{Theorem: main theorem on state estimation convergence in time varying case}, applied to each agent $j$ with $i \in \Outhop{j}{k}(t)$, guarantees $|\tilde{x}_{j}^{i}(t)| < \delta_{j}^{i}(t)$ for all $i \in \mathcal{V}$, $j \in \Inhop{i}{k}(t)$, and $t \in \mathbb{R}_{\geq 0}$. Moreover, Assumption~\ref{Assumption: each communication and task graph compatibility} guarantees $\TNeigh{i} \setminus (\inN{i}(t) \cup \{i\}) \subseteq \Inhop{i}{k}(t)$ for every active communication topology. Hence, the above prescribed performance bound applies to every task-relevant state that is not directly available to agent $i$. Thus, under Assumption~\ref{Assumption on the desired steady state estimation error}, $\delta_{j}^{i}(t)$ can be designed such that there exists $t_x\geq 0$ satisfying $\norm{\delta_{\tilde{\underline{\vect{x}}}_{\mathcal T}}(t)} \leq \delta^{U}_{\tilde{\underline{\vect{x}}}_\mathcal{T}}$ for all $t \geq t_x$. 
     
     For $t \geq t_x$, the closed-loop state evolves from $\vect{x}(t_x)$ according to $\dot{\vect{x}} = \Phi(\vect{x},\tilde{\underline{\vect{x}}}_{\mathcal T},t)$. Moreover, since $|\tilde{x}_{j}^{i}(t)| < \delta_{j}^{i}(t)$ for all $i \in \mathcal{V}$ and $j \in \Inhop{i}{k}(t)$, it holds that $\norm{\Tilde{\underline{\vect{x}}}_\mathcal{T}(t)} \leq \norm{\delta_{\tilde{\underline{\vect{x}}}_{\mathcal T}}(t)} \leq \delta^{U}_{\tilde{\underline{\vect{x}}}_\mathcal{T}}$ for $t\geq t_x$. Thus, by the uniform set-ISS property of $\Phi$, $\norm{\vect{x}(t)}_{\mathcal{A}} \leq \beta(\norm{\vect{x}(t_x)}_{\mathcal{A}},t-t_{x}) + \gamma(\delta^{U}_{\tilde{\underline{\vect{x}}}_\mathcal{T}})$. Since $\beta$ is a $\mathcal{KL}$ function, $\beta(\norm{\vect{x}(t_{x})}_{\mathcal{A}},t-t_{x}) \to 0$ as $t \rightarrow \infty$. Hence, $\limsup_{t\to\infty}\norm{\vect{x}(t)}_{\mathcal{A}} \leq \gamma(\delta^{U}_{\tilde{\underline{\vect{x}}}_\mathcal{T}})$, i.e., the MAS converges to $\mathcal{A}_e = \{\vect{x}: \norm{\vect{x}}_{\mathcal{A}} \leq \gamma(\delta^{U}_{\tilde{\underline{\vect{x}}}_\mathcal{T}})\}$.
      
\end{proof}

\begin{remark}
    \label{remark: restricted set-ISS}
    Theorem~\ref{Theorem on closed loop stability} still holds if \eqref{eq: set-iss inequlity} is only satisfied on a set $\mathcal{X}$ and for state estimation errors bounded by $\bar\delta \geq \delta^U_{\tilde{\underline{\vect{x}}}_\mathcal{T}}$, provided that $\vect{x}(t_x) \in \mathcal{X}$ and $\mathcal{X}$ is forward invariant for
    such errors. Indeed, a local Carath\'eodory solution exists by continuity of $\Phi$ and of the
    observer dynamics, as in the proof of Theorem~\ref{Theorem on closed loop stability}, and Remark~\ref{remark: agent trajectories exist} rules out finite
    escape, so solutions exist for all $t \geq 0$; moreover, the estimate is
    only used for $t \geq t_x$.
\end{remark}

\section{Numerical Simulation}
\label{Sec: Case Study}
Consider a MAS of $N= 9$ agents communicating over the directed switching graph $\mathcal{G}_{\sigma(t)} \in \mathbb{G}$, where $\sigma : [0, \infty) \rightarrow \{1,2,3\}$ is a piecewise switching signal and $\mathbb{G}= \{\mathcal{G}_1,\mathcal{G}_2, \mathcal{G}_3\}$, with $\mathcal{G}_1 $, $\mathcal{G}_2$ and $\mathcal{G}_3$ depicted in Fig.~\ref{fig: Graph for the simulation and case study}.

Each agent is modeled by the three-dimensional translational point-mass dynamics $\dot{x}_i =  f(x_i) + g_i u_i$, where $x_i = [p_i^\top,v_i^\top]^\top \in \mathbb{R}^6$ is the stacked position-velocity vector, $p_i,v_i \in \mathbb{R}^3$ denote position and velocity in the world frame, and  $u_i \in \mathbb{R}^3$ is the virtual acceleration input. The drift and input maps are given by
\begin{equation*}
    f(x_i) = \begin{bmatrix}
                v_i \\ - g_0\hat{e}_3 -\kappa_d \sat(v_i, v_{\sat})
            \end{bmatrix} \text{ and } g_i \equiv g = \begin{bmatrix}
                0_3 \\ I_3
            \end{bmatrix},
\end{equation*}
where $g_0 = 9.81  \mathrm{m/s^2}$, $\hat{e}_3 := [0, 0, 1]^\top$, $\kappa_d = 0.5$, $v_\sat = 15\mathrm{m/s}$, and $\sat(\cdot, \cdot): \mathbb{R}^{3} \times \mathbb{R}_{> 0} \rightarrow  \mathbb{R}^3$ is a saturation function modeling a bounded drag acceleration, defined as
\begin{equation*}
    \sat(y, M):= \begin{cases}
        M \tanh\left(\frac{\norm{y}}{M}\right) \frac{y}{\norm{y}}, \quad &y\neq 0, \\
        0, \quad &y =  0.
    \end{cases}
\end{equation*}

The MAS is required to achieve a desired formation specified by the fixed undirected task graph $\mathcal{G}_{\mathcal{T}}:= (\mathcal{V}, \mathcal{E}_{\mathcal{T}})$, depicted in Fig.~\ref{fig: Graph for the simulation and case study}, where each edge $(j,i) \in \mathcal{E}_{\mathcal{T}}$ represents a relative position constraint associated with a desired constant offset $d_{ij} = - d_{ji}\in \mathbb{R}^3$. Specifically, the agents are required to form a regular polygon inscribed in a circle of radius $150 \mathrm{m}$ at a desired altitude $h^* = 100 \mathrm{m}$. Although $\mathcal{G}_{\mathcal{T}}$ is undirected, the state information required to evaluate the corresponding relative-position constraints is not necessarily available to both agents through the active communication topology. Such information must be therefore reconstructed using the proposed \mbox{$k$-hop} observer.

Let $z_{ij} := p_i - p_j - d_{ij}$, $\hat{z}_{ij} := p_i - \hat{p}^i_j - d_{ij}$, $w_{ij} := v_i - v_j$, $\hat{w}_{ij} := v_i - \hat v^i_j$, $\tilde{h} := [p_1]_3 - h^*$ and $\bar w := [v_1]_3$. To achieve the desired formation despite the mismatch between the communication and task graphs, each agent implements
\begin{equation}
    \begin{split}
        \label{eq: real input in the simulation}
        u_i(t) = g_0 \hat{e}_3 - \kappa_p (u_i^p + \hat{u}_i^p) - \kappa_v (u_i^v+ \hat{u}_i^v) - u^h_i \hat{e}_3,
    \end{split}
\end{equation}
where
\begin{equation*}
    \begin{split}
        \hspace{-0.1cm}u_i^p &= \hspace{-0.2cm} \sum_{j \in \TNeigh{i} \cap \inN{i}(t)} \hspace{-0.2cm} \sat(z_{ij}, M_p),  \       \hat{u}_i^p =  \hspace{-0.2cm} \sum_{j \in \TNeigh{i} \cap \Inhop{i}{k}(t)} \hspace{-0.2cm} \sat(\hat z_{ij}, M_p), \\
        \hspace{-0.1cm}u_i^v &= \hspace{-0.2cm} \sum_{j \in \TNeigh{i} \cap \inN{i}(t)} \hspace{-0.2cm} \sat(w_{ij}, M_v), \        \hat{u}_i^v = \hspace{-0.2cm} \sum_{j \in \TNeigh{i} \cap \Inhop{i}{k}(t)} \hspace{-0.2cm} \sat(\hat w_{ij}, M_v)
    \end{split}
\end{equation*}
and $u^h _i $ is an altitude-pinning term given by
\begin{equation}
    u^h_i = \begin{cases}
        \kappa_p \sigma_{M_p}(\tilde{h}) + \kappa_v \sigma_{M_v}(\bar w), & \quad \text{if } i = 1, \\
        0, & \quad \text{otherwise,}        
    \end{cases}
\end{equation}
where $\sigma_M(s):= M \tanh\left(\frac{s}{M}\right)$, $s\in \mathbb{R}$.
The terms $u_i^p$ and $u_i^v$ use states directly available through the active communication graph, whereas $\hat u_i^p$ and $\hat u_i^v$ use the corresponding \mbox{$k$-hop} state estimates generated by the proposed observer. Thus, \mbox{task-relevant} states that are not directly available to agent $i$ can still be incorporated into its formation controller. The \mbox{altitude-pinning} term $u^h_i$ is applied only to agent $1$ with the ultimate goal of regulating the formation altitude to $h^*$, thereby fixing the translational degree of freedom along the $z$ axis. The gains $\kappa_p , \kappa_v, M_p$ and $ M_v \in \mathbb{R}_{> 0}$ are design parameters. 

Note that in order to validate the observer-based controller for the nonlinear case, we do not cancel the drift term $f(x_i)$ through the control input in \eqref{eq: real input in the simulation}.

Since Assumptions~\ref{Assumption on existence of a solution}-\ref{Assumption on 2-hop communication} hold, Theorem~\ref{Theorem: main theorem on state estimation convergence in time varying case} applies and guarantees convergence of the state estimation error for dwell times satisfying
\eqref{eq: dwell time condition}, provided that condition~\eqref{eq: asymptotic feasibility condition} is satisfied. 
With observer convergence established, the remaining conditions of Theorem~\ref{Theorem on closed loop stability}, verified in the sense of Remark~\ref{remark: restricted set-ISS}, are the global asymptotic stability of the MAS to $\mathcal A:=\{x\in\mathbb{R}^{54}\ :\ z_{ij}=0 \   \forall(j,i)\in\mathcal{E}_{\mathcal T}, \tilde{h}=0, v=0\}$ under the nominal input $u_i^{\text{nom}} = g_0 \hat{e}_3 - \kappa_p \sum_{j \in \TNeigh{i}} \sat(z_{ij}, M_p)  - \kappa_v \sum_{j \in \TNeigh{i}} \sat(w_{ij}, M_v) - u^h_i$, and the set-ISS property of the resulting closed-loop MAS with respect to the estimation error when the true input is applied. In particular, the latter is established by considering the \mbox{observer-based} input $u_i^{\text{est}} = g_0 \hat{e}_3 - \kappa_p \sum_{j \in \TNeigh{i}} \sat(z_{ij} - \Tilde{\underline{p}}^i_j(t), M_p)  - \kappa_v \sum_{j \in \TNeigh{i}} \sat(w_{ij} -\Tilde{\underline{v}}^i_j(t), M_v) - u^h_i$, which is obtained by replacing each unavailable state $x_j$ with its estimate $x_j+ \Tilde{\underline{x}}^i_j(t)$, where $\Tilde{\underline{x}}^i_j(t)$ is defined as in \ref{section: multi-agent system dynamics and controller}.

To establish convergence of the nominal MAS, define $q_i := p_i-d_i$ for all $i\in\mathcal{V}$, where $d_i\in\mathbb{R}^3$ satisfy $d_{ij}=d_i-d_j$ for each $(j,i)\in\mathcal{E}_{\mathcal{T}}$ and $[d_i]_3=h^*$, so that $z_{ij}=q_i-q_j$ and $\tilde{h}=[q_1]_3$. The convergence of the MAS to $\mathcal{A}$ follows from standard Lyapunov arguments using $V(\vect{q}, \vect{v})= \frac{1}{2}\kappa_p \sum_{(j,i)\in\mathcal{E}_{\mathcal{T}}} \Psi(q_i - q_j, M_p) + \kappa_p \Psi_1(\tilde{h}, M_p) + \frac{1}{2}\sum_{i=1}^{N} \norm{v_i}^2$, where $\Psi(x,M) := M^2 \ln\cosh\!\big(\norm{x}/M\big)$ and $\Psi_1(s, M) = M^2 \ln \cosh(s/M)$. Similar analysis for a single integrator is found in \cite{zhang2011consensus}.
For robustness, let $q_{\perp,i} := q_i - \frac{1}{N}\sum_{j}\Pi_h q_j$, with $\Pi_h = I_3 - \hat{e}_3\hat{e}_3^\top$, and $V_1 := V + \epsilon\sum_{i} q_{\perp,i}^\top v_i$. Since $\mathrm{sat}(\cdot,M)$ is Lipschitz, the estimation errors act as an additive perturbation bounded by $\norm{\tilde{\underline{\vect{x}}}_{\mathcal T}}$. For any $c > 0$ and sufficiently small $\epsilon > 0$, $V_1$ is equivalent to $V$ on $\{V \leq c\}$ and yields \eqref{eq: set-iss inequlity} on a sublevel set $\mathcal{X}$ of $V_1$ contained in $\{V \leq c\}$, for $\norm{\tilde{\underline{\vect{x}}}_{\mathcal T}} \leq \bar\delta$, where, for a given $\bar{\delta}$, $\mathcal{X}$ can be enlarged by increasing $M_p$ and $M_v$. Choosing $c$ such that $\vect{x}(t_x) \in \mathcal{X}$, which is possible since trajectories are bounded on $[0,t_x]$ by Remark~\ref{remark: agent trajectories exist}, and $M_p$, $M_v$ such that $\delta^U_{\tilde{\underline{\vect{x}}}_\mathcal{T}} \leq \bar\delta$, Remark~\ref{remark: restricted set-ISS} applies.

\begin{figure}
    \centering
    \includegraphics[width =\linewidth]{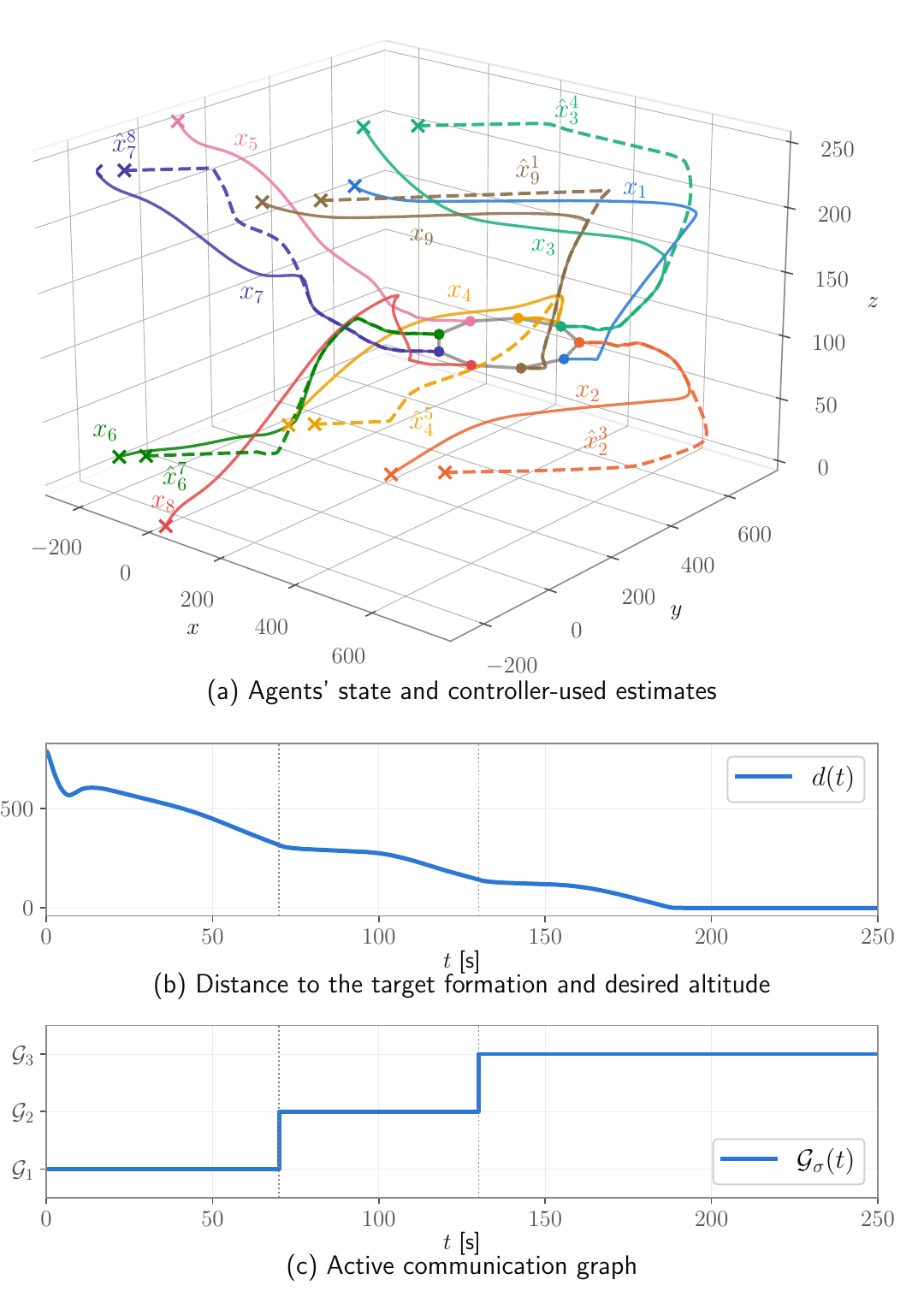}
    \caption{Formation control simulation with nine agents. (a) Agent trajectories and the state estimates used in the controllers. (b) Aggregate formation and altitude error evolution over time. (c) Switching communication topologies.}
    \label{fig: MAS trajectory}
\end{figure}

Fig.~\ref{fig: MAS trajectory} and Fig.~\ref{fig: State estimation error and disagreemenr} present simulation results when the MAS is controlled with  \eqref{eq: real input in the simulation}. To enable every agent $i \in \mathcal{V}$ to estimate the states of agents $j \in \TNeigh{i} \setminus \inNeightime{i}$, the \mbox{$k$-hop} PPSO in \eqref{Eq: state estimation dynamics in vector form} is implemented with $k = 3$. For simplicity, the same prescribed bound is adopted for all agents and state components, i.e., $[\delta_i^j]_l =\delta_i(t) = \delta(t)$ for all $i \in \mathcal{V}$, $j \in \mathcal{O}_{i}^{\cup}$ and $l = 1,\dots, 6$. Accordingly, $\rho_i(t)$ and $\underline{\rho}_i(t, t_{\tau})$ are designed following the procedure in Section~\ref{Sec: Decentralized PPSO}. Fig.~\ref{fig: MAS trajectory}(a) depicts the trajectories of the closed-loop system and of the state estimate used in the control under the switching signal in Fig.~\ref{fig: MAS trajectory}(c). As shown in Fig.~\ref{fig: MAS trajectory}(b), the aggregate formation and altitude error $d(t):=  \sqrt{\sum_{(j,i)\in\mathcal{E}_{\mathcal T}}(\|z_{ij}(t)\|^2+\|w_{ij}(t)\|^2) + \|\tilde{h}(t)\|^2 + \|\bar{w}\|^2}$, i.e., the norm of the stacked error vector, converges to zero. Since every summand is nonnegative, this means that all edge errors and the altitude error vanish asymptotically, so the agents achieve the desired formation and altitude. This result demonstrates that the MAS objective is achieved even when the controller in \eqref{eq: real input in the simulation} is used in place of the ideal controller $u_i^{\text{nom}}$.
\begin{figure}
    \centering
    \includegraphics[width =\linewidth]{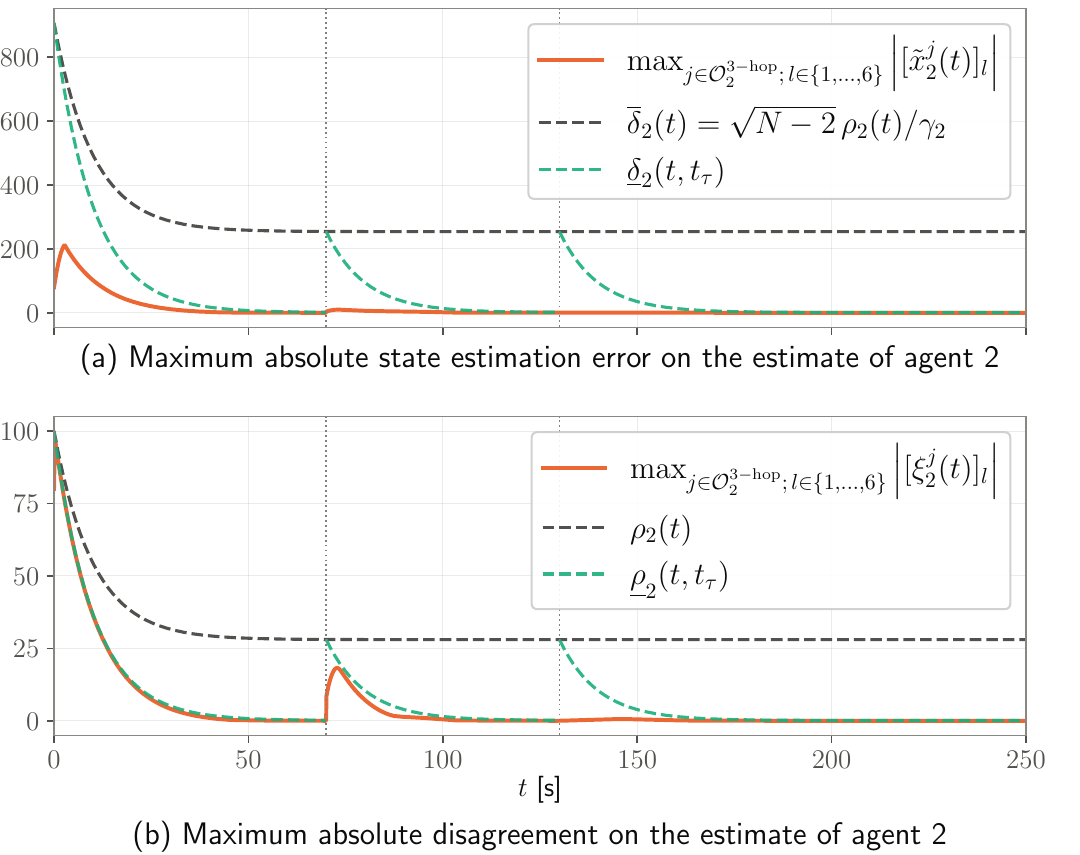}
    \caption{Prescribed performance bounds for estimates of agent 2. (a) Maximum absolute state estimation error for agent~$2$ maintained by the other agents. (b) Maximum absolute disagreement component in the estimates of agent~$2$. The dashed lines represent the prescribed performance functions of the \mbox{$k$-hop} PPSO. Specifically, the original prescribed performance functions $\overline{\delta}_2$ and $\rho_2$ are shown in black, while the auxiliary envelopes $\underline{\rho}_2(t,t_\tau)$ and $\underline{\delta}_2(t,t_\tau)$ are shown in green.}
    \label{fig: State estimation error and disagreemenr}
\end{figure}
For space limitations, Fig.~\ref{fig: State estimation error and disagreemenr} shows only the evolution of the maximum absolute state estimation error and the maximum disagreement among the estimates of the state of agent~$2$. 
As introduced earlier, for simplicity, $[\delta_2^j]_l  = \overline{\delta}_2(t) =  653.16e^{-0.1t} + 254.01$ for all $j \in \mathcal{O}_2^{\cup}$ and $l = 1,\dots, 6$. Then, since the finite set of admissible communication graphs $\mathbb{G}$ is known a priori, $\gamma_2(N) = \min_{\mathcal{G}_l \in \mathbb{G}} \{\sigma_{\min}(M_2)\} \approx 0.29$ and $\alpha_2(N) = \max_{\mathcal{G}_{l} \in \mathbb{G}} \{\sigma_{\max}(M_2)\} \approx 4.78$, and the resulting funnel for the disagreement components is $\rho_2(t) =  72e^{-0.1 t} + 28$. As discussed in Remark~\ref{Remark: phi can be made arbitrarily small}, an appropriate initialization procedure and design of $\underline{\rho}_i(t, t_{\tau})$ allow $\phi_i^{\max}$ to be selected arbitrarily small. Nevertheless, to clearly illustrate the effect of the prescribed-performance bounds, a relatively large value of $\phi_i^{\max}=4$ is selected for the simulations.
Given the $\phi_i^{\max}$ and the switching instants, the tighter prescribed performance functions are designed as $\underline{\rho}_2(t, t_{0}) = 99.898 e^{-0.1 t} + 0.1024$, $\underline{\rho}_2(t, t_{1}) = 27.963 e^{-0.1(t-70)} + 0.1024$ and $\underline{\rho}_2(t, t_{2}) = 27.898 e^{-0.1(t-130)} + 0.1024$. Then, the resulting $\underline{\delta}_2(t, t_{0}) = 906.233 e^{-0.1t} + 0.929$, $\underline{\delta}_2(t, t_{1}) = 253.672 e^{-0.1(t-70)} + 0.929$ and $\underline{\delta}_2(t, t_{2}) = 253.078 e^{-0.1(t-130)} + 0.929$ are obtained by reverting the relationship between $\overline{\delta}_i(t)$ and $\rho_i(t)$.
Fig.~\ref{fig: State estimation error and disagreemenr} clearly shows that as long as the disagreement component satisfies $ |\xi^j_2| < \underline{\rho}^{j}_2(t, t_{\tau}) $, with properly designed $\underline{\rho}^{j}_2(t, t_{\tau})$, dwell times satisfying \eqref{eq: dwell time condition} guarantee that the state estimation error satisfies $|[\error{x}{j}{2}]_l| < [\delta_2^j]_l(t)$ for all $j \in \Outhop{2}{3}$, $l = 1, \dots, 6$, and every switching topology in $\mathbb{G}$.

\section{Conclusions}
\label{Sec: Conclusions}
This paper develops a decentralized $k$-hop PPSO for heterogeneous multi-agent systems communicating over directed switching topologies. The proposed observer enables the reconstruction of nonlocal agent states using only \mbox{$1$-hop} communication while guaranteeing predefined transient bounds on the state estimation errors. To account for topology switching, a novel prescribed performance funnel construction is introduced, and feasibility conditions and a lower bound on the \mbox{dwell-time} are derived to preserve the original prescribed performance specifications across topology switches. Furthermore, a \mbox{set-ISS} analysis establishes that, for control objectives achieved with exact nonlocal state information, replacing these states with their estimates preserves convergence to a neighborhood of the desired set, whose size can be made arbitrarily small through the observer performance bounds. 

Future work will focus on characterizing the topological properties of the communication graph that enable an edge-selection mechanism to dynamically reconfigure the communication topology so as to satisfy the prescribed-performance constraints. Furthermore, we will investigate conditions guaranteeing observer and control convergence under intermittent loss of communication paths required to estimate nonlocal states.

\appendix

\subsection{Proof of Lemma~\ref{lemma: neighbors}} 
\label{Appendix: proof lemma neihbors}
\begin{proof}
    Let's start by proving (i). Let $j\in\Outhop{i}{k}$. Since $j$ belongs to the $k$-hop neighborhood of $i$, $2\le D^{ij}_{\sigma(t)}\le k$ holds. If $D^{ij}_{\sigma(t)}=2$, then there exists a node $v_0$ such that $i \rightarrow v_0 \rightarrow j$. Hence, $v_0\in\outN{i}{}\cap\inN{j}{}$, which implies $\inN{j}{}\cap\outN{i}{}\neq\emptyset$. If $D^{ij}_{\sigma(t)}>2$, let $i=v_0\rightarrow v_1\rightarrow\cdots\rightarrow v_{D^{ij}_{\sigma(t)}}=j$ be a shortest directed path from $i$ to $j$. Then, the predecessor $v_{D^{ij}_{\sigma(t)}-1}$ of $j$ satisfies $v_{D^{ij}_{\sigma(t)}-1}\in\Outhop{i}{k}$. Since $v_{D^{ij}_{\sigma(t)}-1}\in\inN{j}{}$, it follows that $\inN{j}{}\cap\Outhop{i}{k}\neq\emptyset$. Therefore, $
    \inN{j}{}\cap\Outhop{i}{k}\neq\emptyset$ or $\inN{j}{}\cap\outN{i}{}\neq\emptyset$ holds, which proves (i).
    
    To prove (ii), suppose by contradiction that there exist $i\in\mathcal{V}$ and $\mathcal{G}_{\sigma(t)}^{i,(p)}$ such that $\inN{j}{}\cap\outN{i}{}=\emptyset$ for all $j\in\mathcal{V}_{\sigma(t)}^{i,(p)}$. For any $j \in \mathcal{V}_{\sigma(t)}^{i,(p)}$, since $\mathcal{V}_{\sigma(t)}^{i,(p)} \subseteq \Outhop{i}{k}$, there exists a directed path from $i$ to $j$ of length at most $k$. Let $i=v_0\rightarrow v_1\rightarrow\cdots\rightarrow v_r=j$, $2\le r\le k$, be a shortest such path. By minimality of the path, each vertex $v_l$, $2\leq l \leq r$, is at the shortest-path distance $D^{i,v_l}_ {\sigma(t)}\in [2,k]$ from $i$, and therefore $v_l \in \Outhop{i}{k}$. Consequently, the subpath $v_2\rightarrow v_3\rightarrow\cdots\rightarrow v_r$ is a path in $\mathcal{G}_{\sigma(t)}^{i}$, so $v_2$ and $j$ belong to the same weakly connected component. Hence, $v_2\in\mathcal{V}_{\sigma(t)}^{i,(p)}$. Moreover, $v_1\in\inN{v_2}{}\cap\outN{i}{}$, so $\inN{v_2}{}\cap\outN{i}{} \neq \emptyset$, which contradicts the assumption. Therefore, every weakly connected component of $\mathcal{G}_{\sigma(t)}^{i}$ contains at least one agent $j$ satisfying $\inN{j}{}\cap\outN{i}{}\neq\emptyset$.
\end{proof}

\subsection{Proof of Lemma~\ref{Lemma: -M_i is Hirwitz}}
\label{Appendix: proof lemma -M_i is Hirwitz}
\begin{proof}
    The proof is divided into three steps. In (i), we show that every eigenvalue of $M_i^{(p)}$ has a nonnegative real part and that the only possible eigenvalue on the imaginary axis is the origin. In (ii), we prove that zero cannot be an eigenvalue of $M_i^{(p)}$. Finally, in (iii), we use \eqref{eq: Spectrum of M_i equal to the one of the block diagonals} to extend the result from each block matrix $M_i^{(p)}$ to $M_i$. 

    Before proceeding with the proof, enumerate the agents in $\mathcal{V}_{\sigma(t)}^{i,(p)}$ and let $q_j$ denote the global index of the agent at position $j$.
    Furthermore, let $d_{q_j}^{i,\text{in}}$ be the cardinality of the set $\mathcal{N}^{i, \text{in}}_{q_j}(t):= \{q_r\in \mathcal{V}_{\sigma(t)}^{i,(p)}: (q_r, q_j) \in \mathcal{E}_{\sigma(t)}^{i,(p)}\}$. Note that, since no edge connects distinct weakly connected components, $\mathcal{N}^{i, \text{in}}_{q_j} = \{q_r \in \mathcal{V}_{\sigma(t)}^{i}: (q_r, q_j) \in \mathcal{E}_{\sigma(t)}^{i}\} = \inN{q_j} \cap \Outhop{i}{k}$.

    \textbf{Part (i).} Consider a block matrix $M_i^{(p)} = L_i^{(p)} + H_i^{(p)}$. The Gershgorin disc associated with row $j$ is centered at $[M_i^{(p)}]_{j,j} = d_{q_j}^{i,\text{in}} + h^i_{q_j}$ and has radius $R_j = \sum_{r \neq j} |[M_i^{(p)}]_{j,r}| = d_{q_j}^{i,\text{in}}$, where $h^i_{q_j}$ is the corresponding diagonal entry of the pinning matrix $H_i^{(p)}$. Hence, the leftmost point of the $j$-th Gershgorin disc has real part $(d_{q_j}^{i,\text{in}}+h^i_{q_j})-d_{q_j}^{i,\text{in}} = h^i_{q_j}\geq0$. Therefore, all Gershgorin discs are contained in the closed right half-plane, which implies $\Re(\lambda) \geq 0$ for all $\lambda \in \spec(M_i^{(p)})$. 
    Moreover, a Gershgorin disc can intersect the imaginary axis only when $h^i_{q_j}=0$, in which case the intersection is limited to the origin. Hence, the only eigenvalue that may have zero real part is $\lambda=0$.

    \textbf{Part (ii).} We now show that zero cannot belong to the spectrum of $M_i^{(p)}$. Equivalently, it suffices to show that $M_i^{(p)} v = 0$ admits only the trivial solution $v=0$. Suppose, by contradiction, that there exists a nonzero real vector $v$ satisfying $M_i^{(p)} v = 0$. 
    Since $v \neq 0$, $\max_{j}[v]_j>0$ or $\min_{j}[v]_j<0$ must hold. Since $M_i^{(p)}(-v)=0$ whenever $M_i^{(p)}v=0$, we may replace $v$ with $-v$, if necessary, and assume without loss of generality that $\mu := \max_{j} [v]_j > 0$. Define
    \begin{equation}
        S := \left\{ q_j : [v]_j = \mu \right\},
    \end{equation}
    which is nonempty by construction. 
    The $j$-th row of $M_i^{(p)}v=0$, for which $q_j\in S$, satisfies
    \begin{equation}
        \label{eq: p rows of the Mv matrix vector multiplication}
        (d_{q_j}^{i,\text{in}} + h^i_{q_j})\mu = \sum_{r \in \{r : q_r \in 
        \mathcal{N}^{i, \text{in}}_{q_j}\}} [v]_r.
    \end{equation}
    Since $[v]_r \leq \mu$ for every $r$, it follows that $(d_{q_j}^{i,\text{in}} + h^i_{q_j})\mu = \sum_{r \in \{r : q_r \in \mathcal{N}^{i, \text{in}}_{q_j}\}} [v]_r\leq d_{q_j}^{i,\text{in}} \mu$ and therefore that $h^i_{q_j} \mu \leq 0$ holds. Because $\mu > 0$ and $h^i_{q_j} \geq 0$, $h^i_{q_j} =  0$ for all $q_j \in S$.
    Moreover, substituting $h^i_{q_j}=0$ into \eqref{eq: p rows of the Mv matrix vector multiplication} yields $\sum_{r \in \{r : q_r \in \mathcal{N}^{i, \text{in}}_{q_j}\}} (\mu - [v]_r) = 0$. Since every term in the summation is nonnegative, each term must be zero. Hence, $[v]_r = \mu$ for all $r$ such that $q_r \in \mathcal{N}^{i, \text{in}}_{q_j}$, which implies $ \mathcal{N}^{i, \text{in}}_{q_j} \subseteq S$. Therefore, $S$ is closed under in-neighbors, and every node in $S$ is unpinned, i.e., $h^i_{q_j} = 0$ for all $q_j \in \mathcal{S}$.

    Now let $q_j\in S$ be arbitrary. Since $q_j\in\mathcal V_{\sigma(t)}^{i,(p)}\subseteq\Outhop{i}{k}$, there exists a shortest directed path $i=v_0 \rightarrow v_1\rightarrow\cdots\rightarrow v_l= q_j$ with $2\le l\le k$. By minimality of the path, and by definition of \mbox{$k$-hop} out-neighbors, $v_2\rightarrow\cdots\rightarrow v_l$ lies entirely within the induced subgraph $\mathcal G_{\sigma(t)}^{i,(p)}$. Since $q_j=v_l\in S$ and $S$ is closed under in-neighbors, backward propagation along this path gives $v_l \in S \implies v_{l-1} \in \mathcal{N}^{i, \text{in}}_{v_l} \subseteq S \implies \cdots \implies v_2  \in S$. Every node in $S$ is unpinned. However, since $v_1 \in \inN{v_2} \cap \outN{i}$, it follows that $\inN{v_2} \cap \outN{i} \neq \emptyset$. Hence, $h^i_{v_2}>0$ by definition of pinning matrix, which contradicts the original assumption. Therefore, $M_i^{(p)}v=0$ admits only the trivial solution, and $0\notin\spec(M_i^{(p)})$.

    \textbf{Part (iii).} By part (i), every eigenvalue of $M_i^{(p)}$ has nonnegative real part, and by Part (ii), $0 \notin \spec(M_i^{(p)})$. Hence, $\Re(\lambda) > 0$ for all $\lambda \in \spec(M_i^{(p)})$, and since $\spec(M_i) = \bigcup_{p = 1}^{l^i} \spec(M_i^{(p)})$ by \eqref{eq: Spectrum of M_i equal to the one of the block diagonals}, it follows that $\Re(\lambda) > 0$ for all $\lambda \in \spec(M_i)$.
\end{proof}

\subsection{Proof of Proposition~\ref{Proposition: proof of proposition on norm(M_i^-1)}}
\label{Appendix: proof of proposition on norm(M_i^-1)}
\begin{proof}
    If $N = 3$, then $\eta_i \leq N-2 = 1$, since $\Outhop{i}{k}$ excludes both $i$ and its out-neighbors. Hence, $M_i = H_i$ is a scalar satisfying $1 \leq M_i \leq N-1-\eta_i = 1$ by Lemma~\ref{lemma: neighbors}(ii), so that $\sigma_{\max}(M_i) = \sigma_{\min}(M_i) = 1$. In the following, let $N \geq 4$.
    
    (i) By construction, $M_i=L_i+H_i$, where $L_i$ is the in-degree Laplacian matrix of the subgraph induced by the $k$-hop neighborhood of agent $i$, and $H_i$ is the associated pinning matrix. Since the induced subgraph contains $\eta_i$ vertices, its maximum in- and out-degrees satisfy  $d_{\max}^{\mathrm{in}} \leq \eta_i-1$ and $d_{\max}^{\mathrm{out}} \leq \eta_i-1$, while the largest diagonal entry of $H_i$ satisfies $h_{\max} \leq N-1-\eta_i$. 
    
    Let $\|M_i\|_1 := \max_q \sum_p |[M_i]_{p,q}|$ and $\|M_i\|_\infty := \max_p \sum_q|[M_i]_{p,q}|$ denote the induced $1$- and $\infty$-norms, respectively. Each column of $M_i$ has a diagonal entry bounded by $d_{\max}^{\mathrm{in}} + h_{\max}$, while the sum of the absolute values of its off-diagonal entries is bounded by $d_{\max}^{\mathrm{out}}$. Hence, $\|M_i\|_1 \leq d_{\max}^{\mathrm{in}} + h_{\max} + d_{\max}^{\mathrm{out}} \leq N - 3 + \eta_i$. Similarly, $\|M_i\|_\infty \leq 2d_{\max}^{\mathrm{in}} + h_{\max} \leq N - 3 + \eta_i$. Since $\eta_i \leq N-2$, $\|M_i\|_1 \leq 2N - 5$ and $\|M_i\|_\infty \leq 2N - 5$ hold. Thus, using the standard inequality $\sigma_{\max}(A) \leq \sqrt{\|A\|_1 \, \|A\|_\infty}$, $A \in \mathbb{R}^{n\times n}$, $\sigma_{\max}(M_i) \leq 2N - 5$ is obtained.

    (ii) Note first that, for $N \geq 4$, the right-hand side of \eqref{eq: upper and lower bound on the singular values of M_i}(ii) is smaller than one, since its base $\frac{N-3}{(N-2)(N^2-3N+1)}$ lies in $(0,1)$ and its exponent $\frac{N-3}{2}$ is positive. Hence, if $\eta_i = 1$, the bound holds trivially, because $M_i = H_i$ is a scalar satisfying $M_i \geq 1$ by Lemma~\ref{lemma: neighbors}(ii), and thus $\sigma_{\min}(M_i) \geq 1$. It thus remains to consider $2 \leq \eta_i \leq N-2$.
    According to \cite[Thm.~1]{hong1992lower}, for any $A \in \mathbb{C}^{n \times n}$,
    \begin{equation}
        \label{eq: minimum singular lower bound value for a matrix A}
         \sigma_{\min}(A) \geq \left(\frac{n-1}{n}\right)^{\frac{n-1}{2}} |\det(A)| \max\left\{c(A), r(A) \right\},
    \end{equation}
    where $c(A):= \frac{c_{\min}(A)}{\prod_{l=1}^{n}c_l(A)}$, and $r(A):=\frac{r_{\min}(A)}{\prod_{l=1}^{n}r_l(A)}$, with $c_l(A)$ and $r_l(A)$ denoting the Euclidean norms of the $l$-th column and row of $A$, respectively, and $c_{\min}(A) := \min_l\{c_l(A)\}$ and $r_{\min}(A) := \min_l\{r_l(A)\}$.

    Since $M_i$ has integer entries, $\det(M_i)$ is an integer. Moreover, by Lemma~\ref{Lemma: -M_i is Hirwitz}, all eigenvalues of $M_i$ lie in the open right half-plane. 
    Hence, the real eigenvalues are positive, while each pair of complex conjugate eigenvalues $\lambda_j(M_i),\bar{\lambda}_j(M_i)$ satisfies $\lambda_j(M_i)\bar{\lambda}_j(M_i)=|\lambda_j(M_i)|^2>0$. As $\det(M_i)$ is the product of the eigenvalues, it follows that $\det(M_i)>0$.
    Consequently, $\det(M_i) \geq 1$. 
    Since $\left(\frac{\eta_i-1}{\eta_i}\right)^{(\eta_i-1)/2}$ is decreasing for $\eta_i > 1$, and $\eta_i \leq N-2$ by the definition of $k$-hop out-neighbors, its minimum value is attained for $\eta_i = N-2$. Therefore,
    \begin{equation}
        \label{eq: lower bound on the minimum singular value first bound}
        \sigma_{\min}(M_i) \geq \left(\frac{N-3}{N-2}\right)^{(N-3)/2} \max\left\{c(M_i), r(M_i) \right\}.
    \end{equation}
    To lower-bound $\max\left\{c(M_i), r(M_i) \right\}$ irrespective of the induced topology, we proceed in two steps: we first fix the neighborhood size $\eta_i$ and identify the worst-case topology, and then optimize over $\eta_i$ itself.

    Let $\Outhop{i}{k} = \{q_1,\dots,q_{\eta_i}\} \subseteq
    \mathcal{V}\setminus\{i\}$ denote the global index of the agents in the \mbox{$k$-hop} out-neighborhood of agent $i$. The rows and columns of $M_i$ are ordered according to this labeling, such that the \mbox{$j$-th} row and column correspond to agent $q_j$. 
    
    Similarly to the proof of Lemma~\ref{Lemma: -M_i is Hirwitz}, define $d_{q_j}^{i,\text{in}} := |\inN{q_j} \cap \Outhop{i}{k}|$, and the $j$-th diagonal entry of $H_i$ as $h^i_{q_j}$. Furthermore, let $d_{q_j}^{i,\mathrm{out}} := |\outN{q_j} \cap \Outhop{i}{k}|$. For the \mbox{$j$-th} column and row of $M_i$, we have $c_j(M_i):=\varphi(d_{q_j}^{i,\text{in}}, d_{q_j}^{i,\mathrm{out}},h^i_{q_j}) = \sqrt{(d_{q_j}^{i,\text{in}} + h^i_{q_j})^2 + d_{q_j}^{i,\mathrm{out}}}$ and $r_j(M_i):=\psi(d_{q_j}^{i,\text{in}},h^i_{q_j})= \sqrt{(d_{q_j}^{i,\text{in}} + h^i_{q_j})^2 + d_{q_j}^{i,\text{in}}}$, where $\varphi(\cdot)$ and $\psi(\cdot)$  are \mbox{non-decreasing} functions in each argument. Notice that, by construction, $r_j(M_i)$ is not dependent on $d_{q_j}^{i,\mathrm{out}}$. 

    Since $M_i$ is nonsingular, all its row and column norms are strictly positive.
    Define $\zeta: \mathbb{R}^{\eta_i}_{{>0}}\to\mathbb{R}_{>0}$ as $\zeta(x) :=\frac{\min_l\{x_l\}}{\prod_{j=1}^{\eta_i}x_j}$, so that $c(M_i)=\zeta\big([c_1(M_i),\ldots,c_{\eta_i}(M_i)]^\top\big)$ and $r(M_i)=\zeta\big([r_1(M_i),\ldots,r_{\eta_i}(M_i)]^\top\big)$. Note that
    $\zeta(x) =\min_l\left\{\frac{x_l}{\prod_{j=1}^{\eta_i}x_j} \right\}$. Then, for any $x \in \mathbb{R}^{\eta_i}_{{>0}}$,
    \begin{equation*}
        \zeta(x) =\min_l\left\{\frac{1}{\prod_{j\neq l}x_j}\right\}
        =\frac{1}{\max_l\left\{\prod_{j\neq l}x_j\right\}}.
    \end{equation*}
    Since each product $\prod_{j\neq l}x_j$ is non-decreasing in every component, their maximum is componentwise non-decreasing. Hence, $\zeta$ is componentwise non-increasing, i.e., if $y_l\geq x_l$ for all $l$, then
    \begin{equation}
        \label{eq: monotonicity of zeta}
        \zeta(y)\leq\zeta(x).
    \end{equation}
    We now apply this property to perturbations of the induced topology. Since $\varphi(\cdot)$ is non-decreasing in each argument, increasing a pinning gain $h^i_{q_j}$ does not decrease $c_j(M_i)$. Moreover, adding an edge $q_u\to q_v$, with $q_u,q_v\in\Outhop{i}{k}$, increases $d_{q_u}^{i,\mathrm{out}}$ and $d_{q_v}^{i,\mathrm{in}}$, and therefore does not decrease either $c_u(M_i)$ or $c_v(M_i)$. Consequently, under any sequence of edge additions and pinning-gain increases, the new column norms $c_l^{\mathrm{new}}(M_i)$ satisfy
    \begin{equation}
        c_l^{\mathrm{new}}(M_i)\geq c_l(M_i),
        \qquad l=1,\ldots,\eta_i.
    \end{equation}
    Then, the componentwise monotonicity of $\zeta$ from \eqref{eq: monotonicity of zeta} yields $c^{\mathrm{new}}(M_i)\leq c(M_i)$.
    Likewise, since $\psi(\cdot)$ is non-decreasing in each argument, increasing a pinning gain does not decrease the corresponding $r_j(M_i)$. Adding an edge $q_u\to q_v$ affects only $d_{q_v}^{i,\mathrm{in}}$ among the arguments of $\psi$, and therefore only $r_v(M_i)$ can change, while $r_j(M_i)$ remains unchanged for $j\neq v$. Hence,
    \begin{equation}
        r_l^{\mathrm{new}}(M_i)\geq r_l(M_i),
        \qquad l=1,\ldots,\eta_i,
    \end{equation}
    which implies $r^{\mathrm{new}}(M_i)\leq r(M_i)$. As a result, $\max\{c(M_i),r(M_i)\}$ is non-increasing under any sequence of edge additions and pinning-gain increases. Consequently, for fixed $\eta_i$, the minimum value of $\max\{c(M_i),r(M_i)\}$ is attained when the induced subgraph is complete, for which the corresponding degree and pinning-gain upper bounds are simultaneously attained. In this case, $d_{\max}^{i,\mathrm{in}} = d_{\max}^{i,\mathrm{out}} = \eta_i-1$, and $h^i_{\max}=N-1-\eta_i$.
    
    Since $d_{\max}^{i,\mathrm{in}}+h^i_{\max}=N-2$, all rows and columns of $M_i$ have identical Euclidean norms $c_j(M_i)=r_j(M_i) = \sqrt{(N-2)^2+\eta_i-1} =:c_{\max}(\eta_i)$. Consequently, $c_{\min}(M_i)=r_{\min}(M_i)=c_{\max}(\eta_i)$, and $\prod_{j=1}^{\eta_i}c_j(M_i) = \prod_{j=1}^{\eta_i}r_j(M_i) = c_{\max}(\eta_i)^{\eta_i}.$
    Therefore,
    \begin{equation}
        \label{eq: bound on max c r for fixed eta_i}
        \max\{c(M_i),r(M_i)\} \geq c_{\max}(\eta_i)^{-(\eta_i-1)} .
    \end{equation}
    
    Since $c_{\max}(\eta_i)$ and $\eta_i-1$ are strictly increasing in $\eta_i$, $c_{\max}(\eta_i)^{-(\eta_i-1)}$ is strictly decreasing for $\eta_i > 1$. Hence, its minimum over
    $\eta_i\leq N-2$ is attained at $\eta_i=N-2$, where
    \begin{equation}
        \label{eq: lower bound on ration minimum column, product of column}
        \max\{c(M_i),r(M_i)\} \geq \left(N^2-3N+1\right)^{-(N-3)/2}
    \end{equation}
    Substituting \eqref{eq: lower bound on ration minimum column, product of column} into \eqref{eq: lower bound on the minimum singular value first bound} yields
    \begin{equation*}
        \sigma_{\min}(M_i) \geq \left(\frac{N-3}{(N-2)(N^2-3N+1)}\right)^{(N-3)/2}.
    \end{equation*}
\end{proof}

\subsection{Proof of Proposition~\ref{Proposition: lower bound on the minimum prescribe performance}} \label{Appendix: proof of proposition lower bound on the minimum prescribe performance}
\begin{proof}
    For every $j \in \mathcal{O}_i^{\cup}$, $a_i \le \delta_i^j(0)-\delta_i^j(\infty)$, $b_i \le \delta_i^j(\infty)$, and $\bar l_i \ge l_i^j$.
    Since $a_i \leq \delta_i^j(0)-\delta_i^j(\infty)$ and $e^{-\bar l_i t}\le e^{-l_i^jt}$ for all $t\ge0$, it follows that $a_ie^{-\bar l_it} \leq (\delta_i^j(0)-\delta_i^j(\infty))e^{-l_i^jt}$. Combining this inequality with $b_i\le\delta_i^j(\infty)$ yields $\bar\delta_i(t) \leq \delta_i^j(t)$ for all $t \geq 0$.
    Since the above inequality holds for every $j \in \mathcal{O}_i^{\cup}$, and
    $\Outhop{i}{k}(t) \subseteq \mathcal{O}_i^{\cup}$ for all $t \geq 0$, it follows that $\bar\delta_i(t) \le \min_{j\in\Outhop{i}{k}(t)} \delta_i^j(t)$ for all $t \ge 0$. Finally, because $a_i > 0$, $b_i > 0$, and $\bar l_i > 0$, the function $\bar\delta_i(t)$ is a prescribed performance function.
\end{proof}

\subsection{Proof of Proposition~\ref{Proposition: underline rho lower than rho}}
\label{Appendix: proof of underline rho lower than rho}
\begin{proof}
    Let $s:=t-t_\tau\ge0$. Using the identity $e^{-\bar l_i(t_\tau+s)}=e^{-\bar l_i t_\tau}e^{-\bar l_i s}$, together with $\rho_i^j(t_\tau) = (\rho_i^j(0) -\rho_i^j(\infty))e^{-\bar l_i t_\tau} + \rho_i^j(\infty)$, we obtain $\rho_i^j(t_\tau+s)=(\rho_i^j(t_\tau) -\rho_i^j(\infty))e^{-\bar l_is}+\rho_i^j(\infty)$.
    Furthermore, $ \underline\rho_i^j(t_\tau+s,t_\tau)=(\rho_i^j(t_\tau)-\underline\rho^j_{i,\tau}(\infty))e^{-\underline l^j_{i,\tau}s}+\underline\rho^j_{i,\tau}(\infty)$
    holds by definition of $\underline\rho_i^j(t, t_\tau)$.
    
    For notational convenience, define $c:= \rho_i^j(t_\tau)$, $b_1 := \rho_i^j(\infty)$, $b_2:= \underline\rho^j_{i,\tau}(\infty)$, $l_1 := \bar l_i$ and $l_2 := \underline l^j_{i,\tau}$. Then, 
    \begin{equation*}
        \rho_i^j(t_\tau+s)=(c -b_1)e^{-l_1s}+b_1,
    \end{equation*}
    \begin{equation*}
        \underline\rho_i^j(t_\tau+s,t_\tau)=(c-b_2)e^{-l_2s} + b_2.
    \end{equation*}
    
    Since $l_2 \ge l_1 >0$ and $c \geq b_1 \ge b_2 > 0$, $e^{-l_2 s}\le e^{- l_1 s}$ and $(c-b_2)e^{-l_2 s}\leq (c-b_2)e^{-l_1 s}$ hold for all $s\ge0$. 
    As a result, $ h(s) := \rho_i^j(t_\tau+s)-\underline\rho_i^j(t_\tau+s,t_\tau)$ satisfies
    \begin{equation*}
        h(s) \geq (c - b_1)e^{- l_1 s} - (c-b_2)e^{-l_1 s} + (b_1 - b_2).
    \end{equation*}
    Rearranging the terms yields $h(s) \ge (b_1 - b_2) (1- e^{- l_1 s})$.
    Since $b_1\ge b_2$ and $(1- e^{- l_1 s}) \geq 0$ for all $s\ge0$, it follows that $h(s) \geq 0$. Therefore, $\underline\rho_i^j(t,t_\tau) \leq \rho_i^j(t)$ for all $t \geq t_\tau$.
\end{proof}

\subsection{Proof of Lemma~\ref{Lemma: prescribed performance preservation at switching instants}} \label{Appendix: proof of lemma prescribed performance preservation at switching instants}
\begin{proof}
    Consider an arbitrary agent $i \in \mathcal{V}$. By the triangle and
    submultiplicative inequalities for the induced Euclidean norm, a sufficient
    condition for \eqref{eq: original condition that should hold at the switching instant} is $\norm{M_i(t_{\tau}^+)} \norm{\Xi_i^\tau} \norm{M_i^{-1}(t_{\tau}^-)} \norm{\vect{\xi}_i(t_{\tau}^-)} +  \norm{M_i(t_{\tau}^+)} \norm{\phi_i(t_{\tau})} < \rho_i(t_\tau)$. By the definitions of \(\gamma_i(N)\) and \(\alpha_i(N)\) in
    \eqref{Eq: rho design bound infinite-norm, case random graph} and
    \eqref{eq: alpha(N) definition}, respectively, $\norm{M_i(t_{\tau}^+)} \leq \alpha_i(N)$ and $\norm{M_i^{-1}(t_{\tau}^-)} \leq \gamma_i^{-1}(N)$ hold. Moreover, $\norm{\Xi_i^\tau} \leq 1$ since $\Xi_i^\tau$ is a selection matrix, and $\norm{\phi_i(t_{\tau})} \leq \phi_i^{\max}$ by assumption. Therefore, the sufficient condition becomes
    \begin{equation}
        \label{eq: condition at the swthc after the infinite norm}
        \alpha_i(N) \gamma_i^{-1}(N) \norm{\vect{\xi}_i(t_{\tau}^-)} < \rho_i(t_\tau) - \alpha_i(N) \phi_i^{\max}.
    \end{equation}
    Since, by hypothesis, $|\xi^j_i(t)| < \underline{\rho}_i(t, t_{\tau-1})$ for all $j \in \Outhop{i}{k}$ and $t \in [t_{\tau-1}, t_\tau)$, and the number of components of $\vect{\xi}_i$ satisfies $\eta_i \leq N-2$, it follows that $\norm{\vect{\xi}_i(t_{\tau}^-)} \leq \sqrt{\eta_i} \underline{\rho}_i(t_\tau, t_{\tau-1}) \leq \sqrt{N-2}[({\rho}_{i}(t_{\tau-1}) - \underline{\rho}_{i, \tau-1}(\infty)) e^{- \underline{l}_{i,\tau-1} T_{\tau}} + \underline{\rho}_{i, \tau-1}(\infty)]$. Substituting this bound into \eqref{eq: condition at the swthc after the infinite norm}, with $t_\tau=t_{\tau-1}+T$, and using $\theta_i(N)=\sqrt{N-2}\gamma_i^{-1}(N)\alpha_i(N)$, gives the sufficient condition
    \begin{equation}
        \label{eq: exponential condition explicit T}
        e^{-\underline l_{i,\tau-1}T} <
        \frac{\omega_{i,\tau}(t_{\tau-1}+T,N)}
        {\theta_i(N)\big(\rho_i(t_{\tau-1})-\underline\rho_{i,\tau-1}(\infty)\big)}.
    \end{equation}
    By assumption, $\omega_{i,\tau}(\infty,N)>0$, while $\rho_i(t_{\tau-1})>\underline\rho_{i,\tau-1}(\infty)$ is ensured by design. Thus, since $\omega_{i,\tau}(t,N)\geq \omega_{i,\tau}(\infty,N)$ for all $t$, both sides of \eqref{eq: exponential condition explicit T} are strictly positive, as established in Remark~\ref{remark: well-poseness of F}. Taking the logarithms yields the equivalent condition $T > F_{i,\tau}(T)$, where $F_{i,\tau}$ is defined as per \eqref{eq: def of F(T)}.    

    It remains to show that the dwell-time condition guarantees the above
    inequality.
    Since $\omega_{i,\tau}(t,N)$ is non-increasing in $t$, $F_{i,\tau}(T)$ is non-decreasing in $T$ and satisfies
    \begin{equation}
        \label{eq: F bounded by closed form}
        F_{i,\tau}(T) \leq F_{i,\tau}^\infty
    \end{equation}
    with $F_{i,\tau}^\infty  := \frac{1}{\underline l_{i,\tau-1}}\ln\left(\frac{\theta_i(N)\big(\rho_i(t_{\tau-1})-\underline\rho_{i,\tau-1}(\infty)\big)}{\omega_{i,\tau}(\infty,N)}\right)$.
    
    Define the set $\mathcal S_i := \left\{T\geq0:T'>F_{i,\tau}(T') \;\forall T'\geq T \right\}$, so that $T_{i,\tau}^{\min}=\inf\mathcal S_i$.
    For every $T_0>F_{i,\tau}^{\infty}$ and every $T'\geq T_0$, it holds that $T' \geq T_0 > F_{i,\tau}^{\infty} \geq F_{i,\tau}(T')$, and hence $T_0\in\mathcal S_i$. Consequently, $\mathcal S_i$ is non-empty and $T_{i,\tau}^{\min} \leq F_{i,\tau}^{\infty} < \infty$.
    
    Assume now that $T_\tau > \max_{i \in \mathcal{V}} T_{i,\tau}^{\min}$. Since $T_{i,\tau}^{\min} =\inf \mathcal S_i$ and $T_\tau>T_{i,\tau}^{\min}$, the definition of the infimum guarantees the existence of some $T^\ast\in\mathcal S_i$ such that $T^\ast<T_\tau$.
    Because $T^\ast\in\mathcal S_i$, the inequality $T'>F_{i,\tau}(T')$ holds for every $T'\geq T^\ast$. In particular, choosing $T'=T_\tau$ gives $T_\tau>F_{i,\tau}(T_\tau)$.
    
    Because $T_\tau>F_{i,\tau}(T_\tau)$, $\underline{l}_{i, \tau-1}> 0$, and the argument of the logarithm in \eqref{eq: def of F(T)} is positive, the definition of $F_{i,\tau}$ implies
    \begin{equation*}
        e^{-\underline l_{i,\tau-1}T_\tau}
        <
        \frac{\omega_{i,\tau}(t_\tau,N)}
        {\theta_i(N)\big(\rho_i(t_{\tau-1})-\underline\rho_{i,\tau-1}(\infty)\big)},
    \end{equation*}
    which is \eqref{eq: exponential condition explicit T} evaluated at $T=T_\tau$, with $t_\tau=t_{\tau-1}+T_\tau$. Multiplying both sides by the positive denominator, substituting the definition of $\omega_{i,\tau}(t_\tau, N)$, and rearranging gives
    \begin{equation*}
        \theta_i(N)\underline\rho_i(t_\tau,t_{\tau-1}) < \rho_i(t_\tau)-\alpha_i(N)\phi_i^{\max},
    \end{equation*}
    where $\underline\rho_i(t_\tau,t_{\tau-1}) = \big[\big(\rho_i(t_{\tau-1})-\underline\rho_{i,\tau-1}(\infty)\big)e^{-\underline l_{i,\tau-1}T_\tau}+\underline\rho_{i,\tau-1}(\infty)\big]$. Since \eqref{eq: final constraint to be satisfied by the disagreement vector} holds by hypothesis for all $t \in [t_{\tau-1}, t_\tau)$ with $\underline{\rho}^{j}_i(t, t_{\tau})$ replaced by $ \underline\rho_i(t,t_{\tau-1})$, it follows that $\norm{\vect{\xi}_i(t^-_\tau)} \leq \sqrt{N-2} \underline\rho_i(t_\tau,t_{\tau-1})$. Consequently, since $\theta_i(N)=\sqrt{N-2}\,\gamma_i^{-1}(N)\alpha_i(N)$, \eqref{eq: condition at the swthc after the infinite norm} is satisfied at the switching instant $t_\tau$.

    Since $i$ was arbitrary and the \mbox{dwell-time} condition is imposed for all
    agents simultaneously, \eqref{eq: general condition to be satisfied at the stime of switch} holds for every $i \in \mathcal{V}$.
    
\end{proof}

\bibliographystyle{IEEEtran}
\bibliography{reference}

\end{document}